\newtheorem{question}{Question}
\newtheorem{invariant}{Invariant}
\newcommand{\evalf}{\texttt{handle-free}}
\newcommand{\ins}{\texttt{handle-insertion}}
\newcommand{\del}{\texttt{handle-deletion}}
\newcommand{\setlvl}{\texttt{set-level}}
\newcommand{\temp}{\texttt{free-schedule}}
\newcommand{\unmatch}{\texttt{unmatch-schedule}}
\newcommand{\rise}{\texttt{rise-schedule}}
\newcommand{\shuffle}{\texttt{shuffle-schedule}}
\newcommand{\mal}{\texttt{Malicious}}
\newcommand{\shuf}{\texttt{Shuffler}}
\newcommand{\adder}{\texttt{Adder}}
\newcommand {\ignore} [1] {}
\def\denseformat{
\setlength{\textheight}{9in}
\setlength{\textwidth}{6.9in}
\setlength{\evensidemargin}{-0.2in}
\setlength{\oddsidemargin}{-0.2in}
\setlength{\headsep}{10pt}
\setlength{\topmargin}{-0.3in}
\setlength{\columnsep}{0.375in}
\setlength{\itemsep}{0pt}
}
\newtheorem{theorem}{Theorem}[section]
\newtheorem{claim}[theorem]{Claim}
\newtheorem{lemma}[theorem]{Lemma}
\newtheorem{corollary}[theorem]{Corollary}
\newtheorem{observation}[theorem]{Observation}
\def\boldhead#1:{\par\vskip 7pt\noindent{\bf #1:}\hskip 10pt}
\def\ithead#1:{\par\vskip 7pt\noindent{\it #1:}\hskip 10pt}
\def\inline#1:{\par\vskip 7pt\noindent{\bf #1:}\hskip 10pt}
\def\midinline#1:{\par\noindent{\bf #1:}\hskip 10pt}
\def\dnsinline#1:{\par\vskip -7pt\noindent{\bf #1:}\hskip 10pt}
\def\ddnsinline#1:{\newline{\bf #1:}\hskip 10pt}
\def\largeinline#1:{\par\vskip 7pt\noindent{\large\bf #1:}\hskip 10pt}
\long\def\comment #1\commentend{}
\long\def\commhide #1\commhideend{}
\long\def\commfull #1\commend{#1}
\long\def\commabs #1\commenda{}
\long\def\commtim #1\commendt{#1}
\long\def\commb #1\commbend{}
\long\def\commedit #1\commeditend{} 
\long\def\commB #1\commBend{}       
\long\def\commex #1\commexend{}     
\long\def\commsiena #1\commsienaend{}  
\long\def\commBI #1\commBIend{}  
\long\def\CProof #1\CQED{}
\def\blackslug{\hbox{\hskip 1pt \vrule width 4pt height 8pt
    depth 1.5pt \hskip 1pt}}
\def\QED{\quad\blackslug\lower 8.5pt\null\par}
\def\inQED{\quad\quad\blackslug}
\long\def\PPP#1{\noindent{\bf Proof:}{ #1}{\quad\blackslug\lower 8.5pt\null}}
\long\def\denspar #1\densend
\newif\ifnotesw\noteswtrue
\ifnotesw\marginpar[\hfill\(\top\)]{\(\top\)}\fi}%
\ifnotesw\marginpar[\hfill\(\bot\)]{\(\bot\)}\fi}
\newcommand{\mnote}[1]%
    {\ifnotesw\marginpar%
        [{\scriptsize\it\begin{minipage}[t]{\marginparwidth}
        \raggedleft#1%
                        \end{minipage}}]%
        {\scriptsize\it\begin{minipage}[t]{\marginparwidth}
        \raggedright#1%
                        \end{minipage}}%
    \fi}
\def\cI{{\cal I}}
\def\cM{{\cal M}}
\def\cN{{\cal N}}
\def\cO{{\cal O}}
\def\cQ{{\cal Q}}
\def\MathF{\hbox{\rm I\kern-2pt F}}
\def\MathP{\hbox{\rm I\kern-2pt P}}
\def\MathR{\hbox{\rm I\kern-2pt R}}
\def\MathZ{\hbox{\sf Z\kern-4pt Z}}
\def\MathN{\hbox{\rm I\kern-2pt I\kern-3.1pt N}}
\def\MathC{\hbox{\rm \kern0.7pt\raise0.8pt\hbox{\footnotesize I}
\kern-4.2pt C}}
\def\MathQ{\hbox{\rm I\kern-6pt Q}}
\newsavebox{\ttop}\newsavebox{\bbot}
\def\eps{\epsilon}
\def\polylog{\mbox{polylog}}
\def\nin{{~\not \in~}}
\newcommand{\Prob}{\MathP}
\long\def\commabs #1\commabsend{}
\newcommand{\orient}[2]{#1 \rightarrow #2}
\begin{document}

\title{Fully Dynamic Almost-Maximal Matching: \\Breaking the Polynomial Barrier for Worst-Case Time Bounds}
\author{Moses Charikar\thanks{Stanford University. E-mail: {\tt moses@cs.stanford.edu}.}
\and Shay Solomon \thanks{IBM Research. E-mail: {\tt solo.shay@gmail.com}. Part of this work was done while the author was affiliated with Tel Aviv University and Stanford University.}}

\date{\empty}

\begin{titlepage}
\def\thepage{}
\maketitle

\begin{abstract}
Despite significant research effort, the state-of-the-art algorithm for maintaining an approximate matching in fully dynamic graphs has a polynomial \emph{worst-case} update time, even for very poor approximation guarantees. In a recent breakthrough, Bhattacharya, Henzinger and Nanongkai showed how to maintain a constant approximation to the minimum vertex cover, and thus also a constant-factor estimate of the maximum matching size, with polylogarithmic worst-case update time.  Later (in SODA'17 Proc.) they improved the approximation factor all the way to $2+\eps$. Nevertheless, the longstanding fundamental problem of \emph{maintaining} an approximate matching with sub-polynomial worst-case time bounds remained open.

We present a randomized algorithm for maintaining an \emph{almost-maximal} matching in fully dynamic graphs with polylogarithmic worst-case update time.
Such a matching provides $(2+\eps)$-approximations for both   maximum matching and   minimum vertex cover, for any   $\eps > 0$.
Our result was done independently of the $(2+\eps)$-approximation result of Bhattacharya et al., so:
\begin{itemize}
\item Together with Bhattacharya et al.'s result, it provides the first algorithm for maintaining $(2+\eps)$-minimum vertex cover with polylogarithmic worst-case update time.
\item More importantly, it provides the first algorithm for maintaining a $(2+\eps)$-maximum (integral) matching
with polylogarithmic worst-case update time.
\end{itemize}

The worst-case update time of our algorithm, $O(\poly(\log n,\eps^{-1}))$, holds deterministically, while the almost-maximality guarantee holds with high probability.
This result settles the aforementioned problem on dynamic matchings, and also provides essentially the best possible approximation guarantee for dynamic vertex cover (assuming the unique games conjecture).

To prove this result, we exploit a connection between the standard oblivious adversarial model, which can be viewed as inherently ``online'', and an ``offline'' model where some (limited) information on the future can be revealed efficiently upon demand. Our randomized algorithm is derived from a deterministic algorithm in this offline model.
We show that the deterministic guarantees obtained in the offline model translate 
into similar probabilistic guarantees in the standard model. This approach gives an elegant way to analyze randomized dynamic algorithms, and is of independent interest.  
\end{abstract}

\clearpage

\renewcommand{\baselinestretch}{0.95}\normalsize
\tableofcontents
\renewcommand{\baselinestretch}{1.0}\normalsize

\end{titlepage}


\section{Introduction}
Consider a \emph{fully dynamic} setting where we start from an initially empty graph on $n$ fixed vertices $G_0$,
and at each time step $i$ a single edge $(u,v)$ is either inserted in the graph $G_{i-1}$ or deleted from it, resulting in graph $G_i$.
The problem of maintaining a large matching or a small vertex cover in such graphs has attracted a lot of research attention in recent years.
In general, one would like to devise an algorithm for maintaining a ``good'' matching and/or vertex cover with $\polylog(n)$ \emph{update time}
(via a data structure that answers queries of whether an edge is matched or not in constant time),
where being ``good" means to provide a good approximation to the maximum matching and/or the minimum vertex cover,
and the update time is the time required by the algorithm to update the matching/vertex cover at each step.

One may try to optimize the \emph{amortized} (i.e., average) update time of the algorithm or its \emph{worst-case} (i.e., maximum) update time, over a worst-case sequence of graphs. There is a strong separation between the state-of-the-art amortized bounds and the worst-case bounds.
A similar separation exists for various other dynamic graph problems, such as spanning tree, minimum spanning tree and two-edge connectivity.
Next, we provide a brief literature survey on dynamic matchings. (See \cite{OR10,BGS11,PS16, Sol16} for a detailed survey.)

In FOCS'11, Baswana et al.\ \cite{BGS11} devised an algorithm for maintaining a \emph{maximal matching} with an expected amortized update time of $O(\log n)$
under the oblivious adversarial model.\footnote{The \emph{oblivious adversarial model} is a standard model, which has been   used for analyzing   randomized data-structures such as universal hashing \cite{CW77b}
and dynamic connectivity \cite{KKM13}.
The model allows the adversary to know all the edges in the graph and their arrival order, as well as
the algorithm to be used. However, the adversary is not aware of the random bits used by the algorithm,
and so cannot choose updates adaptively in response to the randomly guided choices of the algorithm.}
Building on the framework of \cite{BGS11}, Solomon \cite{Sol16} devised a different randomized algorithm whose expected amortized update time is $O(1)$.
Note that a maximal matching provides a 2-approximation for both the maximum matching and the minimum vertex cover.
Moreover, under the unique games conjecture (UGC), the minimum vertex cover cannot be efficiently approximated within any factor better than 2 \cite{KR08}.
In SODA'15, Bhattacharya et al.\ \cite{BHI15} devised a deterministic algorithm for maintaining $(2+\eps)$-approximate vertex cover with
amortized update time $O(\log n/\eps^2)$.
In STOC'16, Bhattacharya et al.\ \cite{BHN16} devised a different deterministic algorithm for maintaining $(2+\eps)$-approximate matching
with amortized update time  $O(\poly(\log n,\eps^{-1}))$.

All the known algorithms for maintaining a better-than-2 approximate matching or vertex cover require polynomial update time.\footnote{This statement is true for general graphs.
For low arboricity graphs, significantly better results are known; see \cite{KKPS14,HTZ14,PS16}.}
In FOCS'13, Gupta and Peng \cite{GP13} devised a deterministic algorithm for maintaining $(1+\eps)$-approximate matching with
a worst-case update time $O(\sqrt{m}/\eps^2)$.
Bernstein and Stein \cite{BS16} maintained $(3/2+\eps)$-approximate matching with an amortized update time $O(m^{1/4}/\eps^{2.5})$,
generalizing their earlier work  \cite{BS15} for bipartite graphs (in which they provide a worst-case update time guarantee).

There are two main open questions in this area.
The first is whether one can maintain a \emph{better-than-2} approximate matching in \emph{amortized} polylogarithmic update time.
The second is the following:
\begin{question}
Can one maintain a ``good'' (close to 2) approximate matching and/or vertex cover with \emph{worst-case} polylogarithmic update time?
\end{question}
In a   recent breakthrough, Bhattacharya, Henzinger and Nanongkai
devised a deterministic algorithm that maintains a \emph{constant} approximation to the minimum vertex cover,
and thus also a constant-factor estimate of the maximum matching size, with polylogarithmic worst-case update time.
While this result makes significant progress towards Question 1, this fundamental question remained open.\footnote{Later (in SODA'17 Proc.\  \cite{BHN17}) Bhattacharya et al.\ significantly improved the approximation factor all the way to $2+\eps$.
However, our result  was done independently to \cite{BHN17}.
Moreover, even if one considers the improved result of \cite{BHN17},  it solves Question 1 in the affirmative only for vertex cover, leaving the question on matching open;
see App.\ A for a more detailed discussion, which also covers a recent paper by Arar et al.\ \cite{ACCSW17}.}
In particular, no algorithm for maintaining a matching with sub-polynomial worst-case update time was known,
even if a polylogarithmic approximation guarantee on the matching size is allowed! 
%

In this paper we devise a randomized algorithm that maintains an \emph{almost-maximal matching (AMM)} with a polylogarithmic update time.
We say that a matching for $G$ is \emph{almost-maximal} w.r.t.\ some slack parameter $\eps$, or \emph{$(1-\eps)$-maximal} in short,
if it is maximal w.r.t.\ any graph obtained from $G$ after removing $\eps \cdot |\cM^*|$ arbitrary vertices, where $\cM^*$ is a maximum matching for $G$.
Just as a maximal matching provides a 2-approximation for the maximum matching and minimum vertex cover, an AMM provides a $(2+\eps)$-approximation.
We show that for any $\eps > 0$,
one can maintain an AMM with worst-case update time $O(\poly(\log n,\eps^{-1}))$, where the $(1-\eps)$-maximality guarantee holds with high probability.
Specifically, our update time is  $O(\max\{\log^7 n /\eps,\log^5 n / \eps^4\})$; although reducing this upper bound towards constant is an important goal,
this goal lies outside the scope of the current paper (see Section \ref{discuss} for some details).

\ignore{
We also strengthen the algorithm's worst-case guarantee to bound the number of changes (replacements) to the matching.
To this end we strengthen a scheme by  \cite{GP13} for maintaining approximate matchings, so as to bound the number of changes to the matching
in the worst-case, in addition to the update time. The problem of minimizing the number of matching replacements has received growing attention recently \cite{GKKV95,CDKL09,BLSZ15,BLZS17,BHR17}, as in some applications such as job scheduling, web hosting, streaming content delivery, data storage and hashing, a replacement of a matched edge by another one may be   costly. Moreover, bounding the number of matching replacements is also important when the maintained matching algorithm is used as a blackbox subroutine inside a larger data structure (cf.\ \cite{BS16,ADKKP16}).
We then also strengthen several results in this area \cite{GP13,BS15,PS16} in a similar way as immediate corollaries.
See the second remark following Theorem \ref{maint}, as well as Sections \ref{b11}, \ref{b21} and \ref{changes}, for further details.}
The algorithm's worst-case guarantee can be strengthened, using \cite{Sol18}, to bound the number of changes (replacements) to the matching.
Optimizing this measure is important in applications such as job scheduling, web hosting and hashing, where a replacement of a matched edge by another one is  costly. This measure is important also in cases where the matching algorithm is a blackbox subroutine inside a larger data structure (cf.\ \cite{BS16,ADKKP16}).

Our result resolves Question 1 in the affirmative, up to the $\eps$ dependency.
In particular, under the unique games conjecture, it is essentially the best result possible for the dynamic vertex cover problem.\footnote{Since our result (which started to circulate in  Nov.\ 2016) was done independently of the $(2+\eps)$-approximate vertex cover result of \cite{BHN17}, it provides the first $(2+\eps)$-approximation
 for both vertex cover (together with \cite{BHN17}) and (integral) matching.}

On the way to this result, we devise a \emph{deterministic} algorithm that maintains an almost-maximal matching with a polylogarithmic update time
in a natural \emph{offline model} that is described next.
This deterministic algorithm may be of independent interest, as discussed in Section \ref{technical}. 
Our randomized algorithm for the   oblivious adversarial model is derived from this deterministic algorithm in the offline model,
and this approach is likely to be useful in other dynamic graph problems.


\subsection{A Technical Overview} \label{technical}
Our algorithm and its analysis are elaborate and quite intricate.
While this may be viewed as a drawback, the strength of our paper is not just in the results we achieve, but also in the techniques we develop in establishing them.
We believe that our techniques are of independent interest and will find applications to other dynamic graph problems, also outside the context of matchings
\vspace{5pt}
\\
\noindent
{\bf The offline model.~} We start with describing an offline model that is a useful starting point for designing algorithms for matching in fully dynamic graphs. Suppose that the entire update sequence is known in advance, and is stored in some data structure.
Suppose further that for any $i$, accessing the $i$th edge update via the data structure is very efficient, taking $\polylog(n)$ or even $O(1)$ time.
A natural question to ask is whether one can exploit this knowledge of the future to obtain better algorithms for maintaining a good matching and/or vertex cover.
Consider in particular the \emph{maximal matching} problem.
Handling edge insertions can be done trivially in constant time. Handling edge deletions is the problematic part.
Consider a deletion of a matched edge $(u,v)$ from the graph.
If $u$ has a free neighbor, we need to match them, and similarly for $v$.
The algorithm may naively scan the neighbors of $u$ and $v$, which may require $O(n)$ time.
Surprisingly, this naive $O(n)$ bound is the state-of-the-art for general (dense) graphs, unless one allows both randomization and amortization as in \cite{BGS11,Sol16}.
Can one do better in the offline setting?

A natural strategy is to match a vertex $v$ along its incident edge $(v,w)$ that will be deleted last.
Indeed, by the time edge $(v,w)$ gets deleted from the graph, all other edges incident on $v$ must be ``new'',
i.e., having been inserted to the graph since the last time $v$ was matched.
So when $v$ becomes free, we should be able to afford (in the amortized sense) scanning all neighbors of $v$,
to find a free neighbor.  

This approach, however, only works when all neighbors of $v$ are free, which holds only initially. If some of them are matched, and we stick to the strategy of picking
the incident edge that will be deleted last, the update algorithm itself may be forced to delete matched edges from the matching.
Alas, due to deletions of matched edges by the algorithm, whenever a vertex becomes free, its neighbors are not necessarily new.
Instead, one may want to pick the incident edge that will be deleted last among those leading to free neighbors --
but determining the free neighbors of a vertex is indeed the crux of this problem!

Despite this hurdle, we argue that a dynamic maximal matching can be maintained in the offline setting \emph{deterministically}
with constant \emph{amortized} update time.
To this end, we make the following surprisingly simple observation:
The machinery of \cite{BGS11,Sol16} extends seamlessly to the offline setting above.
More specifically, instead of choosing the matched edge of $v$ uniformly at random among a certain subset of adjacent edges $E_v$
(which is computed carefully by the algorithms \cite{BGS11,Sol16}, details will be provided next),
in the offline setting we choose the matched edge to be the one that will be deleted last among $E_v$.
It is not difficult to verify that the analysis of \cite{BGS11,Sol16} carries over to the offline setting directly.

Notice that the resulting deterministic algorithm for the offline setting is inherently \emph{amortized}, whereas our focus in this work is on \emph{worst-case} bounds.
To obtain good worst-case bounds, we build on the machinery of \cite{BGS11,Sol16}.  
The price of translating the amortized bounds of \cite{BGS11,Sol16} into a similar worst-case bound is that the maintained matching is no longer maximal, but rather almost-maximal.\footnote{The amortized update time analysis of the algorithm from \cite{BGS11} (both the FOCS'11 and subsequent journal SICOMP'15 versions) was erroneous,
but was corrected in a subsequent erratum by the same authors. (The amortized update time analysis of the algorithm from \cite{Sol16} is different than the one used in \cite{BGS11}, and does not have that mistake.)  Although our algorithm builds on the machinery of \cite{BGS11,Sol16},
the mistake in \cite{BGS11} does not affect the current paper, as we provide an independent analysis for a different algorithm, which bounds the worst-case update time of our algorithm rather than the amortized update time.}
This translation is highly non-trivial, and is carried out in two stages.
First, we consider the offline setting, and devise a deterministic algorithm there.
Coping with the offline setting is easier than with the standard setting, as it allows us to ignore intricate probabilistic considerations, and to handle them separately.
The second stage is to convert the results for the offline setting to the standard setting.
The algorithm itself remains essentially the same. (Instead of choosing the edge that will be deleted last, choose a random edge.)
On the other hand, showing that the maintained matching remains almost-maximal requires more work.
This two-stage approach
thus provides an elegant way to analyze randomized dynamic algorithms, and we believe it would be useful in other dynamic graph problems as well.

Furthermore, the offline setting seems important in its own right. Indeed, in some real-life situations (e.g., road networks),
we might get some estimated schedule regarding future deletions of edges.
Moreover, there are some applications where the users of the network themselves may determine (to some extent) the lifespan of an edge (see, e.g., \cite{PP15}).
Note also that our algorithm does not need a complete knowledge of the future,
just an oracle access to an edge that will be deleted after a constant fraction of the other edges (from $E_v$) have been deleted.
In fact, the oracle does not have to be correct all the time, just on average.
Therefore, it seems that the offline setting may capture various practical scenarios.
\vspace{7pt}
\\
\noindent
{\bf The framework of \cite{BGS11,Sol16}.~}
We next provide a   rough description of the \emph{amortized} framework of \cite{BGS11,Sol16}.\footnote{Note that \cite{Sol16} builds on the framework of \cite{BGS11} and extends it; for clarity, we will not distinguish between \cite{BGS11} and \cite{Sol16}.}

Matched edges will be chosen randomly. If an edge $e = (u,v)$ is chosen to the matching uniformly at random among $k$ adjacent edges of either $u$ or $v$, w.l.o.g.\ $u$,
we say that its \emph{potential} is $k$.
Under the oblivious adversarial model, the expected number of edges incident on $u$ that are deleted from the graph before deleting edge $(u,v)$
is $k/2$.
Thus, following a deletion of a matched edge $(u,v)$ with potential $k$ from the graph, we have time $\tilde O(k)$ to handle $u$ and $v$ in the amortized sense.

Each vertex $v$ in the graph maintains a dynamically changing \emph{level} $\ell_v$;
roughly speaking, $v$'s level will be logarithmic in the potential value of the only matched edge adjacent to $v$.
Free vertices will be at level $-1$, and matched vertices will be at levels between 0 and $O(\log n)$.
Based on the levels of vertices, a dynamic edge orientation is maintained, where each edge is oriented towards the lower level endpoint.

When a vertex $u$ becomes free, the algorithm (usually) chooses a mate for it randomly.
If this mate $w$ is already matched, say to $w'$, the algorithm will have to delete edge $(w,w')$ from the matching in order to match $u$ with $w$.
However, note that we can compensate for the loss in potential value (caused by deleting edge $(w,w')$ from the matching)
if this potential loss is significantly smaller than the potential of the newly created matched edge on $u$.
Recalling that vertices' levels are logarithmic in their potential,
all neighbors of $u$ with lower level should have potential value at most half the potential value of the edge that was just deleted on $u$.
In other words, for each of these neighbors, we can afford to break their old matched edge.
Consequently, the mate $w$ of $u$ will be chosen uniformly at random among $u$'s  neighbors with lower level.

A central obstacle is to distinguish between neighbors of $u$ with level $\ell_u$ and those with lower level.
Indeed, it is possible that most of $u$'s neighbors have level $\ell_u$, and none of them can be chosen as mates for $u$.
Roughly speaking, the execution of the algorithm splits into two cases.
If the current out-degree of $u$ is not (much) larger than its out-degree at the time its old matched edge got created,
then we should be able to afford to scan all of them, due to sufficiently many adversarial edge deletions that are expected to occur.
Notice that in this case the charging argument is based on \emph{past} edge deletions.

The second case is when the out-degree of $u$ is (much) larger than what it was when the old matched edge got created.
The time needed for distinguishing $u$'s neighbors at level $\ell_u$ from those at lower levels could be significantly larger than
the ``money'' that we got from past edge deletions. In this case the algorithm \emph{raises} $u$ to a possibly much higher level $\ell^*$,
where there are not too many neighbors for $u$ at that level as compared to the number of its neighbors at lower levels.
Having raised $u$ to that level, we can perform the random sampling of its mate among all its neighbors of level lower than $\ell^*$.
Notice that in this case the charging argument is not based on past edge deletions, but rather on \emph{future} edge deletions.
(Future edge deletions may not occur, but we may charge the edges remaining in the graph to their last insertion.)
\vspace{7pt}
\\
\noindent
{\bf Our approach.~}
Notice that the framework of \cite{BGS11,Sol16} is inherently \emph{amortized}:
Every once in a while there are very ``expensive'' operations, which are charged to ``cheap'' operations that occurred in the past or will occur in the future.
To obtain a low worst-case update time, we should be \emph{cheap in any time interval}, meaning that we can rely neither on the past nor the future.
Consider a matched edge $(u,v)$ that is deleted by the adversary. We expect the adversary to make many edge deletions on at least one of these endpoints
before deleting this edge. However, it is possible that all these edge deletions occurred a long time ago.
The worst-case algorithm will not be able to exploit these edge deletions at this stage.

Consider the offline setting, and let $e_1,e_2,\ldots,e_\eta$ be $\eta$ arbitrary matched edges with the same potential value $k$.
For each such edge $e_i$, let $S(e_i)$ be its \emph{sample}, i.e., the set of all edges from which $e_i$ was chosen to the matching.
In the offline setting, we are guaranteed that $e_i$ will be deleted only after all $k-1$ other edges from its sample have been deleted.
However, it is possible that the adversary first deletes the first $k-1$ edges from the samples of each and every one of the  matched edges,
and only then turn to deleting the matched edges.
(Recall that the worst-case update time should hold with respect to a worst-case update sequence.)
Assuming $k$ is large, it takes a long time for the adversary to delete the first $\eta(k-1)$ edges from all   $\eta$ samples.
During all this time, the amortized algorithms of \cite{BGS11,Sol16} remain idle.
On the other hand, an algorithm with a low worst-case update time must be active throughout this time interval,
because immediately afterwards the adversary can remove the $\eta$ matched edges from the graph rather quickly,
much faster than the algorithm can add edges to the matching in their place, leading to a poor approximation guarantee.
Consequently, at any point in time, the algorithm needs to be proactive and protect itself from such a situation happening in the future.

Generally, while in an amortized algorithm invariants may be violated from time to time, and then restored via expensive operations,
an algorithm with a low worst-case update time should persistently ``clean'' the graph, making sure that it is never close to violating any invariant.
Naturally, we will need to maintain additional invariants to those maintained by the amortized algorithms of \cite{BGS11,Sol16}.
To this end we employ four different data structures that we call \emph{schedulers}, each for a different purpose.
Each of these schedulers consists of a logarithmic number of sub-schedulers, a single sub-scheduler per level.
Next we fix some level $\ell \approx \log k$, where $k$ is the potential of the matched edges on that level, and focus on it.

The scheduler $\unmatch$ will periodically remove   edges from the matching, one after another,
by always picking a matched edge whose \emph{remaining sample} (i.e., the set of edges from the sample that have not been deleted yet from the graph) is smallest;
indeed, such an edge is closest to getting deleted by the adversary.
As strange as it might seem, this strategy enables us to guarantee that only few matched edges will ever be deleted by the adversary.
Note that removing a matched edge from the matching is not a cheap operation, because we need to find new mates for the two endpoints of the edge.
Therefore, the execution of the scheduler must be \emph{simulated} over sufficiently many adversarial update operations.
During this time interval, the adversary may perform additional edge deletions.
Nevertheless, we control the rate at which the scheduler is working, and we can make sure that it works sufficiently faster than the adversary.
Thus, in this game between the scheduler and the adversary, the scheduler will always win.


The role of $\unmatch$ is to make sure that all the samples are pretty full.
Intuitively, this provides the counter-measure of relying on past adversarial edge deletions, as done in the amortized argument.
The next scheduler $\rise$ provides the counter-measure of relying on future adversarial edge deletions.
Recall that future edge deletions are   used in the amortized argument only in the case that a vertex had to rise to a higher level.
A vertex is rising only if its out-degree became too large with respect to its current level.
For this reason, we need to make sure that the out-degrees of vertices are always commensurate with their level, and this is where $\rise$ comes into play.
This scheduler will periodically raises vertices to the level $\ell$ of which it is in charge, one after another,
by always choosing to raise a vertex with the largest number of neighbors at level lower than $\ell$.
Intuitively, such a vertex is closest to getting chosen to rise to level $\ell$ or higher.
Although  the two schedulers are based on the same principle, the game that we play here is not between the scheduler and the adversary,
because here the algorithm itself may change the level of vertices and their out-degree, so $\rise$ has to compete against both the adversary
and the algorithm. In contrast to the other scheduler, speeding up the rate at which $\rise$ works will not help winning the game.
Instead, we manage to bound the speed of the scheduler with respect to that of the (adversary + algorithm),
which enables us to show that the out-degree of vertices is always in check. 

For the offline model, these two schedulers suffice. However, in the standard oblivious adversarial model,
the adversary will manage to destroy some matched edges from time to time.
The scheduler $\temp$ periodically handles all the vertices that become free due to the adversary, one after another.
Using the property that all samples are always pretty full, we manage to prove that only an $\eps$-fraction of the matched edges get destroyed by the adversary at any time interval. Note that this bound is probabilistic --
to make sure that it indeed occurs with high probability, we also use another scheduler $\shuffle$, which periodically
removes  a random edge from the matching. For technical reasons, it is vital that $\shuffle$ would work sufficiently faster than some of the other schedulers.

Finally, we point out another difficulty in getting a worst-case update time out of the algorithms of \cite{BGS11,Sol16}.
Following a single update operation,  these amortized algorithms may remove \emph{themselves} many matched edges from the matching, one after another.
Before this process finishes, the algorithms make sure to add new matched edges instead of the ones that got removed.
However, these algorithms may require a lot of time before starting to ``repair'' the matching.
In particular, if we simulate their execution, performing just a few computational steps per adversarial update,
we might get a very poor matching at some points in time, even without the ``help'' of the adversary!
Our new algorithm employs the aforementioned schedulers to guarantee that it never removes many matched edges before adding others in their place.
\vspace{7pt}
\\
\noindent
{\bf Technical Highlights.~}
Describing the algorithm and analysis will require some lengthy preliminaries.
Before diving into details,
we wish to highlight some novel technical aspects of the paper.

The schedulers $\unmatch$ and $\rise$ are implemented and analyzed by analogy to a balls and bins game between two players from \cite{LO88} (Section \ref{schedulersAnal}).
While this game has been studied before even in the context of dynamic graph algorithms that guarantee worst-case update times \cite{Thorup05,ACK17,Wulff16},
the worst-case update time achieved in these papers (for the dynamic all-pairs shortest paths problem in \cite{Thorup05,ACK17} and for the dynamic MSF in \cite{Wulff16})
is polynomial. In particular, the approach that we take 
is inherently different from that in previous works \cite{Thorup05,ACK17,Wulff16},
and we  believe that it will inspire more usage of this game in dynamic graph algorithms with polylogarithmic worst-case update time in the future.

In the balls and bins formulation, $\unmatch$ competes with an adversary that is removing balls from bins and trying to get the number of balls in some bin below a certain threshold (Section \ref{sec32}).
On the other hand, $\unmatch$ can remove bins to prevent them from becoming under full.
Since we make sure that $\unmatch$ works sufficiently faster than the adversary, we can show that $\unmatch$ wins the game,
which ensures that all the samples are close to full.
In Section \ref{sec33} we analyze a similar in spirit yet far more intricate game, concerning $\rise$. There balls are being added to bins, and we don't want the bins become over full.
The main twist is that the algorithm itself is competing against $\rise$,
so changing the speed of $\rise$ relative to the adversary does not help.
The key question is: can $\rise$ maintain all bins below a certain overload threshold?
This ensures that outdegrees of vertices are commensurate with their level. The   difficulty is not in analyzing the abstract balls and bins game,
but rather in translating the algorithm's operation to this game, which is where the various schedulers come into play.
One of the challenges in this translation is to cope with the interdependencies between games played at multiple levels.

In Section \ref{sec:almost} (Lemma \ref{whp}) we show that (with high probability) there cannot be too many temporarily free vertices due to the adversary at any point in time,
thus ensuring that the matching maintained by the algorithm is almost-maximal.
To prove this lemma, we build on the invariant that all samples are close to full (due to $\unmatch$).
Alas, this invariant by itself is not enough for guaranteeing a high probability bound, and thus we resort to the random shuffling provided by $\shuffle$.
We believe that this shuffling ``fix'' can be applied in various dynamic graph problems, also outside the scope of matchings or even of worst-case bounds, and is one example of the generality of our techniques. In general, we believe that our techniques are of broader applicability and interest than to the area of dynamic matchings.

\vspace{-6pt}
\subsection{Organization}
\paragraph{Main text.~}
The invariants, data structures and basic principles that govern the operation of the update algorithm are presented in Section \ref{basics}, whereas
the procedures that the update algorithm employs, as well as the analyses of those procedures, are given in Section \ref{procedures}.
The various schedulers used and orchestrated by our algorithm are the focus of Section \ref{sec22}, but are also described in other parts of Section \ref{basics} and in Section \ref{procedures}. Section \ref{schedulersAnal} is devoted to the analysis of those schedulers.
Our mechanism for resolving potential conflicts between the   schedulers is described and analyzed in Section \ref{exceptions}.

We prove
in Section \ref{sec:almost}  that the matching maintained by our   algorithm is almost-maximal. Lemma \ref{whp} is central to the argument of the almost-maximality guarantee,
 and its proof is spread over Sections \ref{highlight}--\ref{cont}. In Section \ref{conclude} we derive the main results of this paper as corollaries of Lemma \ref{whp}.

Two simplifying assumptions used by our algorithm are formally justified in Section \ref{just}.

Finally, a brief discussion with some open problems is given in Section \ref{discuss}.

\vspace{-6pt}
\paragraph{Appendix.~}
The analysis of the schedulers from Section \ref{schedulersAnal} relies on some  balls and bins game by \cite{LO88};
a proof of a sufficient condition for winning that game is given for completeness in App.\ \ref{ballsbins}.
In App.\ \ref{3+eps} we sketch an argument showing that a $(3+\eps)$-approximate matching can be obtained as a corollary of \cite{BHN17}. 
\section{The Update Algorithm, Part I: Basics and Infrastructure} \label{basics}

\subsection{\emph{Levels}: data structures and invariants} \label{sec21}
Our   algorithm builds on the amortized algorithms by \cite{BGS11,Sol16},
which maintain for each vertex $v$ a \emph{level} $\ell_v$, with $-1 \le \ell_v \le \log_\gamma (n-1)$, where $\gamma = \Theta(\log n)$.\footnote{Note that we use logarithms in base $\gamma = \Theta(\log n)$,
whereas \cite{BGS11} and \cite{Sol16} use logarithms in base 2 and 5, respectively. For the offline setting, we can use base 2, and this change leads to shaving a logarithmic factor from the update time.}
Based on the levels of vertices, a dynamic edge orientation is maintained; the out-degree of a vertex under this orientation, which is the number of its outgoing edges, serves as an important parameter.
The amortized algorithms of \cite{BGS11, Sol16} maintain the following   invariants (Invariants \ref{first}(a)-\ref{first}(d)) at all times.
This means that these invariants hold at the \emph{end of the execution} of the corresponding update algorithms,
i.e., before the next update operation takes place.
These invariants may become violated \emph{throughout the execution} of the update algorithms.
Furthermore, the runtime of the update algorithms of \cite{BGS11,Sol16} may be $\Omega(n)$ in the worst case,
thus it may take them a lot of time to restore the validity of these invariants, once violated.
We added a comment to the right of each of these invariants, where the comment is   /* maintained */ or /* partially maintained */,
to  indicate whether the respective invariant is maintained fully or only partially   by our new algorithm.
\begin{invariant} \label{first}
(a) Any matched vertex has level at least 0. /* maintained */
\\(b) The endpoints of any matched edge are of the same level, and this level remains unchanged until the edge is deleted from the matching.
(We   henceforth define the level of a matched edge, which is at least 0 by item (a), as the level of its endpoints.)  /* maintained */
\\(c) Any free vertex has level -1 and out-degree 0. (The matching is maximal.) /* partially maintained */
\\(d) An edge $(u,v)$ with $\ell_u > \ell_v$ is \emph{oriented} by the algorithm as $\orient u v$. (If $\ell_u = \ell_v$, the orientation of   $(u,v)$ will be determined
suitably by the algorithm.) /* partially maintained */
\end{invariant}

Our algorithm will maintain Invariants \ref{first}(a) and \ref{first}(b) at all times, as in the amortized algorithms \cite{BGS11,Sol16}.
On the other hand, we maintain Invariants \ref{first}(c) and \ref{first}(d) only partially.
Next, we make this statement   precise.

Once a matched vertex becomes free, its level will exceed -1 until the update algorithm handles it.
We say that such a vertex is \emph{temporarily free}, meaning that it is not matched to any vertex yet, but its level and out-degree remain temporarily as before.
From now on, we distinguish between \emph{free} vertices and \emph{temporarily free} vertices:
Free vertices are unmatched and their level is -1, while temporarily free vertices are unmatched and their level exceeds -1.
By making this distinction, Invariant \ref{first}(c) holds true as stated.
Moreover, combining it with Invariant \ref{first}(a), we obtain:
\begin{invariant} [Invariant \ref{first}'(c)] \label{1c}
Any vertex of level -1 is unmatched and has out-degree 0.
\end{invariant}
Note that Invariants \ref{first}(c) and \ref{1c} do not apply to temporarily free vertices.
In particular, there may be edges between temporarily free vertices, which are unmatched by definition, meaning that the maintained matching is not necessarily maximal.
The challenge is to guarantee that the number of temporarily free vertices is small with respect to the number of matched vertices, yielding an almost-maximal matching.

Temporarily free vertices are handled via data structures that we call \emph{schedulers}.
We distinguish between vertices that become temporarily free  due to the adversary and those due to the update algorithm itself.
For each level $\ell$, we   maintain a queue $Q_\ell$ of level-$\ell$ vertices that become temporarily free   due to the adversary,
and the vertices in $Q_\ell$ will be handled, one after another, via appropriate schedulers.
As vertices in the queues are being handled and are thus removed from the queues, other vertices may become temporarily free due to the adversary and thus join the queues.
We will need to make sure that the total number of vertices over the queues of all levels is in check at any point in time.
In contrast, there is no reason to add vertices that become temporarily free due to the update algorithm itself to the queues,
as the update algorithm controls the rate in which such vertices become temporarily free, hence it will create temporarily free vertices periodically
at a rate that matches the time needed to handle them, again via appropriate schedulers.
The various schedulers need to work together, without conflicting each other;
the exact way in which they work constitutes the heart of our algorithm, and is described first in Section \ref{sec22}, and then in more detail in subsequent sections.
The analysis of the schedulers is provided in Section \ref{schedulersAnal}, and our mechanism for resolving conflicts between them is presented in Section \ref{exceptions}.

A temporarily free vertex that is being handled by some scheduler is called \emph{active}, and the process of handling it,
which involves updating various data structures,  may be simulated over multiple update operations.
Therefore, there might be \emph{inconsistencies} in the data structures throughout this process concerning the active vertices.
In particular, an active vertex $v$ that rises or falls from level $\ell$ to level $\ell'$ is stored as a level-$\ell$ vertex in the data structures
of some of its neighbors  and as a level-$\ell'$ vertex in the data structures of its remaining neighbors.
Moreover, among the edges whose orientation needs to be ``flipped'' as a result of this rise or fall of $v$,
so as to satisfy Invariant \ref{first}(d), some have   performed the flip and the rest have not done so yet.
To account for these inconsistencies, we   hold a list of active vertices, denoted $Active$, and we will make sure that this list is of size $O(\log_\gamma n) = O(\log n)$ at any point in time.\footnote{Note that $\log_\gamma n = O(\log n / \log\log n)$, but we make no attempt here to optimize factors that are polynomial in $\log\log n$.}
(Although the number of temporarily free vertices should be small with respect to the number of matched vertices, it may be significantly larger than the number $O(\log n)$ of active vertices.)
By bounding the number of active vertices, we can \emph{authenticate} the updated information concerning active vertices efficiently;
this \emph{authentication process} is described in Section \ref{sec23}.
Our algorithm will maintain Invariant \ref{first}(d) for any edge $(u,v)$ with both $u$ and $v$ not in the $Active$ list, or in other words:
\begin{invariant} [Invariant \ref{first}'(d)] \label{1d}
Any edge $(u,v)$,  with $\ell_u > \ell_v$ and $u,v \nin Active$, is \emph{oriented}  as $\orient u v$.
\end{invariant}

Following \cite{Sol16}, for each vertex $v$, we maintain linked lists $\cN_v$ and $\cO_v$ of its neighbors and outgoing neighbors, respectively.  
The information about $v$'s incoming neighbors will be maintained via a more detailed data structure $\cI_v$:
A   hash table,  where each element corresponds to a distinct level $\ell \in \{-1,0,\ldots,\log_\gamma (n-1)\}$.
Specifically, an element $\cI_v[\ell]$ of $\cI_v$ corresponding to level $\ell$ holds a \emph{pointer} to the head of a non-empty linked list that
contains all incoming neighbors of $v$ with level $\ell$.
If that list is empty, then the corresponding pointer is not stored in the hash table.
Hence the total space over all hash tables is linear in the dynamic number of edges in the graph.

By Invariant \ref{1d},   any edge $(u,v)$, with $\ell_u > \ell_v$ and $u,v \nin Active$ is oriented as $\orient u v$.
The consequence is that for any such edge, $v \in \cO_u$ and $u \in \cI_v[\ell_u]$. In particular,
the data structure $\cI_v$ provides information on the levels of $v$'s incoming neighbors that do not belong to the $Active$ list.
It will not be in $v$'s \emph{responsibility} to maintain the
data structure $\cI_v$, but rather within the responsibility of $v$'s incoming neighbors.
On the other hand, no information whatsoever on the levels of $v$'s outgoing neighbors is provided by the data structure $\cO_v$.
In particular, to determine if $v$ has an outgoing neighbor at a certain level (most importantly at level -1, i.e., a free neighbor),
we need to scan the entire list $\cO_v$. On the other hand, $v$ has an incoming neighbor at a certain level $\ell$ iff
the corresponding list $\cI_v[\ell]$ is non-empty.
We keep mutual pointers between the elements in the various data structures: For any vertex $u$ and any outgoing neighbor $v$ of $u$,
we have mutual pointers between all elements $v \in \cO_u, u \in \cI_v[\ell_u],u \in N_v,v \in N_u$.
(We do not provide a description of the trivial  maintenance of these pointers for the sake of brevity.)
\subsection{\emph{Schedulers}: overview and invariants} \label{sec22}
Our algorithm will employ four different schedulers, each for a different purpose.
Each of these schedulers   consists of $O(\log_\gamma n) = O(\log n)$ sub-schedulers, a single sub-scheduler per level $\ell = 0,1,\ldots,\log_\gamma (n-1)$.
It is instructive to think of each sub-scheduler as running threads of execution,
and of its scheduler as synchronizing $O(\log n)$ threads, a single thread per level.
Each thread  executed by a level-$\ell$ sub-scheduler, hereafter \emph{level-$\ell$ thread}, will run in exactly the same amount of time
$T_\ell = \gamma^\ell \cdot \Theta(\log^4 n)$, by ``sleeping'' if finishing the execution prematurely.
We assume that the constant hiding in this $\Theta$-notation is sufficiently large, thus rendering $T_\ell$ sufficiently larger than the overall runtime of
any procedure described in the sequel that has a total runtime of $\gamma^\ell \cdot O(\log^4 n)$; as will be shown in the sequel, this   enables us to guarantee
that each level-$\ell$ thread finishes its run within the ``time slot'' of $T_\ell$ computation steps allocated to it.
\vspace{7pt}
\\
\noindent
{\bf Synchronization.~}
As mentioned, any level-$\ell$ thread runs in exactly the same amount of time $T_\ell$, for each level $\ell$.
However, to achieve a low worst-case update time, the execution of this thread is not carried out at once, but is rather carried out (hereafter, \emph{simulated}) over multiple update operations, carrying out (or simulating) a fixed number of computation steps per update operation.
We refer to that number of computation steps as a \emph{simulation parameter},
and for technical reasons we use
two simulation parameters, $\Delta := \Theta(\log^5 n /\eps)$ and $\Delta' = \Delta \cdot \gamma = \Delta \cdot \Theta(\log n)$.
Each of the two simulation parameters does not change with the level, and is not associated with a level-$\ell$ thread or with the sub-scheduler running it,
but rather with the corresponding scheduler.  
Each of the schedulers will use exactly one of these two simulation parameters and will stick to it throughout;
since $\Delta' / \Delta = \gamma = \Theta(\log n)$,
in this way we make sure that some schedulers will consistently work faster than others by a logarithmic factor, a property that will be useful for our analysis.
Note that the simulation parameters,   $\Delta$ or $\Delta'$, determine the number of update operations required to finish the execution of the thread,
$T_\ell / \Delta$ or $T_\ell / \Delta'$, respectively. We refer to this number as the \emph{(level $\ell$) simulation time}; unlike the simulation parameters,
which do not change with the level, the corresponding simulation times grow with each level by a factor of $\gamma$.
Therefore the simulation time of a thread depends not only on the respective scheduler but also on the respective level $\ell$,
and is associated with both the respective thread and the sub-scheduler running it.
A sub-scheduler or a thread with a lower (respectively, higher) simulation time than another is said to be \emph{faster} (resp., \emph{slower}) than it;
we may compare the simulation times of sub-schedulers or threads even if they are at different levels.
For any levels $\ell$ and $\ell'$ such that $\ell' > \ell$, a level-$\ell$ sub-scheduler may be either faster or at the same speed as any level-$\ell'$ sub-scheduler.
Since $T_{\ell'} / T_\ell = \gamma^{\ell' - \ell}$ and $\Delta' / \Delta = \gamma$,
any two such sub-schedulers are at the same speed if only if $\ell' = \ell+1$ and the simulation time of the level-$\ell$ sub-scheduler is $T_\ell / \Delta$
while the simulation time of the level-$\ell'$ sub-scheduler is $T_{\ell+1} / \Delta'$. In the complementary case, i.e., when either $\ell' \ge \ell+2$
or when the schedulers corresponding to those sub-schedulers have the same simulation parameter, the level-$\ell'$ sub-scheduler is slower than
 the level-$\ell$ sub-scheduler by at least a factor of $\gamma = \Theta(\log n)$ (i.e., the simulation time of the level-$\ell'$ sub-scheduler is higher than that of the level-$\ell$ scheduler by that factor).

The execution threads are  run by the various schedulers in a precise   \emph{periodic} manner so as to achieve the following \emph{nesting property}.
Viewing the time axis as a 1-dimensional line and the simulation times of the execution threads as intervals of this line, the intervals of any two threads are either disjoint or one of them is nested in the other; in what follows we may identify threads with the corresponding 1-dimensional intervals, and may henceforth say that two threads are disjoint or one of them is nested in the other.
Multiple threads may be nested in a single thread at a higher level. Moreover, multiple level-$\ell$ threads may be nested in a single level-$\ell$ thread, for any $\ell$,
but this may happen for at most $\gamma$ such threads, and only if their simulation time is $T_{\ell} / \Delta'$
while that single thread's simulation time is $T_{\ell} / \Delta$.

The description of the    schedulers is provided below. The schedulers
$\temp, \rise$ and $\shuffle$  have a simulation parameter of $\Delta'$, whereas the scheduler $\unmatch$ has a simulation parameter of $\Delta$,
and is thus slower than the other schedulers by a factor of $\gamma = \Theta(\log n)$.
Consider  any scheduler among the four, and denote its simulation parameter by $\tilde \Delta$, where $\tilde \Delta$ is either $\Delta$ or $\Delta'$.
While it may be instructive to view the $\log_\gamma (n-1)+1$ execution threads (over all levels) that this scheduler runs following every update operation as operating in parallel,  these threads are handled \emph{sequentially}.
It is technically useful   to handle these threads by decreasing order of simulation times, and thus by decreasing order of levels,
i.e., the $\log_\gamma(n-1)$-level thread is handled first, then the $\log_\gamma(n-1)-1$-level thread, etc., until the $0$-level thread.  
Following each update operation, the $\log_\gamma(n-1)$-level thread simulates $\tilde \Delta$ computation steps of its execution,
the $\log_\gamma(n-1)-1$-level thread simulates $\tilde \Delta$ computation steps of its own execution, and so on.
When any of these threads finishes its execution (sleeping if the execution has finished prematurely),
the corresponding sub-scheduler starts executing a new thread at that level, which again simulates $\tilde \Delta$ computation steps following each update operation.
Hence the total time spent by this scheduler following a single update operation is $\tilde \Delta \cdot (\log_\gamma (n-1)+1)$,
which is either $O(\log^6 n /\eps)$ or $O(\log^7 n /\eps)$, depending on whether $\tilde \Delta$ is $\Delta$ or $\Delta'$, respectively. Observe that this scheme gives rise to a worst-case update time of $O(\log^7 n /\eps)$, and this bound holds deterministically.

Observe that when any level-$\ell$ thread starts its execution, all lower level threads run by the same scheduler are also about to start their execution,
and they will finish their execution before the level-$\ell$ thread does. As for threads run by other schedulers, things are slightly more involved, as a lower level thread may have the same simulation time as that level-$\ell$ thread (if their simulation parameters are $\Delta'$ and $\Delta$, respectively).
We can generalize the above observation by noting that all lower level threads run by the four schedulers start their execution at the same update operation as the level-$\ell$ thread does,
and they will finish their execution before or at the same update operation as the level-$\ell$ thread does.  
This observation, which follows from the fact that we handle the threads by decreasing order of simulation times (and levels) and from the aforementioned nesting property of threads,
will play a central role  
in our mechanism for resolving potential conflicts between the various schedulers, described in Section \ref{exceptions}.


Each of the level-$\ell$ sub-schedulers will run a single level-$\ell$ thread  at any point in time.
Each level-$\ell$ thread will handle vertices of level at most $\ell$, but some of these threads will handle a super constant number of such vertices.
We will make sure to address this issue, and also show that any level-$\ell$ thread (including those that handle a super constant number of vertices) requires an overall time of $T_\ell$ to complete its execution.
\vspace{7pt}
\\
\noindent
{\bf 1st scheduler.~} The first scheduler $\temp$ handles all vertices that become temporarily free due to the adversary.  
More specifically, for each level $\ell = 0,1,\ldots,\log_\gamma (n-1)$, the corresponding sub-scheduler $\temp_\ell$
handles all vertices of $Q_{\ell}$, one after another.
The exact procedure for handling a temporarily free vertex $v$, $\evalf(v)$, is described in Section \ref{sec25}.  
Procedure $\evalf(v)$ will be executed by a single level-$\ell$ thread corresponding to $v$ that runs in an overall time of $T_\ell$,
simulating $\Delta'$ steps of this procedure following each update operation.   
The $\log_\gamma (n-1)+1$ execution threads (over all levels) executed by $\temp$ are handled sequentially.  
Note that these threads execute different calls of Procedure $\evalf$, which handle vertices at different levels.
As mentioned, 
the $\log_\gamma(n-1)$-level thread is handled first,
then the $\log_\gamma(n-1)-1$-level thread, etc., until the $0$-level thread.  
Following each update operation, the $\log_\gamma(n-1)$-level thread simulates $\Delta'$ steps of its own call of Procedure $\evalf$,
the $\log_\gamma(n-1)-1$-level thread simulates $\Delta'$ steps of its own call, and so on, hence the total time spent by   $\temp$ following a single update operation is
$\Delta' \cdot (\log_\gamma (n-1)+1) = O(\log^7 n /\eps)$.

By the same principle,  the total time spent by     $\rise$ and $\shuffle$ following a single update operation
will be bounded by $\Delta' \cdot (\log_\gamma (n-1)+1) = O(\log^7 n /\eps)$. On the other hand, $\unmatch$ has a simulation parameter of $\Delta$ rather than $\Delta'$,
hence the total time spent by this scheduler following a single update operation will be bounded by $\Delta \cdot (\log_\gamma (n-1)+1) = O(\log^6 n /\eps)$.
\vspace{7pt}
\\
\noindent
{\bf 2nd scheduler.~}
The second scheduler $\unmatch$ removes matched edges from the matching in a specific order.
As strange as it might seem, this strategy enables us to guarantee that the remaining matched edges are unlikely to be destroyed by the adversary.
More specifically, for each level $\ell$, the corresponding sub-scheduler $\unmatch_\ell$ removes level-$\ell$ edges from the matching,
one after another, in the following way.
Similarly to the amortized algorithms, each level-$\ell$ matched edge $e = (u,v)$ is sampled uniformly at random from $\Theta(\gamma^\ell)$ edges,
but there is a difference: While in the amortized algorithms the matched edge is sampled from precisely $\gamma^\ell$ edges,  here, for technical reasons,
we sample the matched edge from between $(1-\eps) \cdot \gamma^\ell$ and $\gamma^\ell$ edges.
(In the offline setting, we choose the edge that will be deleted last among those.)
We denote this edge set by $S(e)$, and refer to it as the \emph{sample space} (shortly, \emph{sample}) of edge $e$. As time progresses, some  edges of $S(e)$ may be deleted from the graph;
we denote by $S^*(e)$ the \emph{original} sample of $e$, with $(1-\eps) \cdot \gamma^\ell \le |S^*(e)| \le \gamma^\ell$, and by $S_t(e) = S(e)$ its sample \emph{remaining} at time $t$, omitting the subscript $t$ when it is clear from the context.
The goal of    $\unmatch_\ell$ is to guarantee that the samples of all level-$\ell$ matched edges will never reach $(1-2\eps) \cdot \gamma^\ell$:
\begin{invariant} \label{samp:in}
For any level-$\ell$ matched edges $e$ with $T_\ell / \Delta \ge 1$  and any $t$,  $|S_t(e)| > (1-2\eps) \cdot \gamma^\ell$.
\end{invariant}
To maintain this invariant, $\unmatch_\ell$  will always remove a matched edge of smallest remaining sample.
Observe that the samples are changed only due to edge removals,
and each edge removal reduces the size of at most two samples by one unit each.
Hence, it is easy to maintain the samples of all level-$\ell$ edges via a data structure that supports all the required operations, including the removal of a matched edge of smallest sample, in constant time; we do not describe this  data structure for the sake of brevity.
For each level-$\ell$ matched edge $e = (u,v)$ that is   removed by $\unmatch_\ell$, its two endpoints $u$ and $v$ become temporarily free,
and they are handled by appropriate calls to Procedure $\evalf$. 
More specifically, we execute Procedure $\evalf(u)$ and then $\evalf(v)$ by running a level-$\ell$ thread,  
which runs in an overall time of $T_\ell$, simulating $\Delta$ steps of execution following each update operation.
The intuition as to why $\unmatch_\ell$ is able to maintain Invariant \ref{samp:in} is the following.
(See Section \ref{sec32} for the formal argument.)
Since $T_\ell = \gamma^\ell \cdot \Theta(\log^4 n)$ and $\Delta = \Theta(\log^5 n/\eps)$,
the simulation time $T_\ell / \Delta$ of a thread run by $\unmatch_\ell$ (which designates the number of update operations needed for simulating its entire execution) is $\Theta(\eps(\gamma^\ell /\log n))$.
In other words, $\unmatch_\ell$ can remove a level-$\ell$ matched edge within $T_\ell /\Delta = \Theta(\eps(\gamma^\ell / \log n))$ adversarial update operations.
On the other hand, the expected number of adversarial edge deletions needed to turn a ``full'' level-$\ell$ matched edge $e$ (with sample $|S^*(e)| \ge (1-\eps) \cdot \gamma^\ell$)
into an ``under full'' edge (with sample  $\le (1-2\eps) \cdot \gamma^\ell$) is $\Omega(\eps \cdot \gamma^\ell)$.
Thus $\unmatch_\ell$ is faster than the adversary by at least a logarithmic factor, assuming $T_\ell / \Delta \ge 1$ (which holds when $\gamma^\ell = \Omega(\log n / \eps)$),
a property that suffices for showing that no edge is ever under full, or in other words,  the samples of all level-$\ell$ matched edges will always be in check.
This is the basic idea behind maintaining the validity of Invariant \ref{samp:in} in any level $\ell$ for which the simulation time satisfies $T_\ell / \Delta \ge 1$.
This invariant, in turn, guarantees that the adversary is unlikely to delete any particular edge from the matching,
using which we show (in Section \ref{sec:almost}) that the maintained matching is always almost-maximal with high probability.
The complementary regime of levels, namely, levels $\ell$ with simulation time satisfying $T_\ell / \Delta < 1$, is trivial and does not rely on Invariant \ref{samp:in},
as then the adversary does not make any edge deletion within the time required by a level-$\ell$ thread to complete its entire execution.
\ignore{
Sanity check -- suppose that at each update step we do $\log^4 n$ computational steps of the procedure.
But suppose that the runtime of the procedure is $x \cdot \log^2 n$, this means that we need $x/ \log^2 n$ update steps to carry out the procedure.
So we are actually faster than the adversary by a factor of $\log^2 n$.
What happens with the rising scheduler? Let's say that the rising thing works in the same way, so we do $\log^4 n$ computational steps of the procedure.
What matters is that we need $x/\log^2 n$ update steps to carry out the procedure, and within this time we give a penalty of $x$.
It's not true that the penalty is $x$, however. It is $x \cdot \log^2 n$. This is because the same level may get penalty due to every one of the
sub-schedulers. Moreover, a single sub-scheduler might effect all levels (because remember that we need to update the $\phi$ values in the appropriate level ranges).
But actually, this is too pessimistic. I think we can just look at any specific level and say that this level might be effected by any one of the schedulers.
The second aggregation over levels is not true because we also replace a logarithmic number of schedulers, and we ask now -- at any level, what is the damage?
}
\vspace{7pt}
\\
\noindent
{\bf 3rd scheduler.~}
Let $N_v(\ell)$ denote the set of neighbors of $v$ with level strictly lower than $\ell$, and write $\phi_v(\ell) = |N_v(\ell)|$.
For each vertex $v$, we will maintain the $\phi_v(\ell)$ values for all levels $\ell$ greater than the current level $\ell_v$ of $v$.
For any level $\ell \le \ell_v$, the corresponding value $\phi_v(\ell)$ will not be maintained, and the algorithm will have to compute it on the fly, if needed.
The algorithm of \cite{BGS11} maintains the invariant that $\phi_v(\ell) < \gamma^{\ell}$, for any $v$ and $\ell > \ell_v$.
(Recall that $\gamma$ is taken to be constant  in \cite{BGS11}, whereas here we take $\gamma$ to be $\Theta(\log n)$.)
The scheduler $\rise$ maintains the following relaxation of the invariant from \cite{BGS11},  
and it does so by raising vertices to higher levels in a specific order, as described next.
\begin{invariant} \label{risei}
For any vertex $v$ and any level $\ell > \ell_v$, $\phi_v(\ell) \le \gamma^{\ell} \cdot O(\log^2 n)$.
\end{invariant}
For each level $\ell$, the corresponding sub-scheduler $\rise_\ell$ is responsible for maintaining the invariant with respect to that level.
Whenever a new level-$\ell$ thread is initiated by $\rise_\ell$, it starts by \emph{authenticating} the $\phi_v(\ell)$ values over all vertices $v$ using the $Active$ list.
(The authentication process takes time $O(\log^2 n)$ to guarantee that all $\phi_v(\ell)$ values are up to date, and is described in Section \ref{sec23}.)
Then the thread picks a vertex $v$ whose $\phi_v(\ell)$ value is highest among all vertices with level lower than $\ell$.
Observe that each change of a $\phi$ value is either an increment or a decrement of one unit.
Hence it is easy to maintain   the level-$\ell$ $\phi$ values of all relevant vertices
via a data structure that supports all the required operations, including the extraction of a vertex with highest level-$\ell$ $\phi$ value,
 in constant time; we do not describe this  data structure for the sake of brevity.
These two steps (authenticating the $\phi_v(\ell)$ values and picking a vertex of maximum value, thus extracting it from the data structure) can be implemented within time $O(\log^2 n)$, and are therefore carried out by the thread ``instantly'', i.e., without simulating their execution over subsequent update operations.
The same execution thread continues to removing $v$'s old matched edge $(v,w)$ (if exists) from the matching, and raises $v$ to level $\ell$ by
executing Procedure $\setlvl(v,\ell)$, whose description is provided in Section \ref{sec24}.
The execution of this procedure, however, cannot be carried out instantly, so it is simulated over multiple update operations; we make sure to simulate
$\Delta'$ execution steps of this procedure following each update operation.
Then the same execution thread handles the two temporarily free vertices $v$ and $w$ using Procedure $\evalf$, which is again simulated over multiple update operations.
That is, the same thread continues to executing the call to $\evalf(v)$ and then the call to $\evalf(w)$, simulating $\Delta'$ execution steps following each update operation.
\vspace{7pt}
\\
\noindent
{\bf 4th scheduler.~}
The fourth scheduler $\shuffle$ removes matched edges from the matching uniformly at random. By working sufficiently faster than some of the other schedulers,
it forms a dominant part of the algorithm, using which we   basically show (in Section \ref{shuffling}) that it provides a near-uniform random shuffling of the matched edges.
This random shuffling facilitates  the proof of the assertion that the adversary is unlikely to delete any particular edge from the matching.  
More accurately, to prove this assertion, it suffices that $\shuffle$ would be sufficiently faster than $\unmatch$, for technical reasons that will become clear later on.
For each level $\ell$, the corresponding sub-scheduler $\shuffle_\ell$
will always pick a matched edge uniformly at random among all remaining level-$\ell$ edges, and will remove it from the matching.
As with $\unmatch_\ell$, for each level-$\ell$ matched edge $e = (u,v)$ that is   removed by $\shuffle_\ell$, its two endpoints $u$ and $v$ become temporarily free,
and they are handled by appropriate calls to Procedure $\evalf$. 
Moreover, as before, we execute these calls (to $\evalf(u)$ and then to $\evalf(v)$)
by running a level-$\ell$ thread,  
which runs in an overall time of $T_\ell$. The difference is that now we simulate $\Delta'$ (rather than $\Delta$) execution steps following each update operation,
which ensures that $\shuffle$ is faster than $\unmatch$ by a logarithmic factor.
We only need to apply the shuffling in levels $\ell$ for which the simulation time satisfies $T_\ell / \Delta \ge 1$, as
in the complementary regime ($T_\ell / \Delta < 1$) the adversary does not make
any edge deletion within the time required by a level-$\ell$ thread to complete its entire execution, and then a random shuffling is redundant.
\subsection{The authenticating process} \label{sec23}
By invariant \ref{1d}, any edge between two non-active vertices $v$ and $w$ is oriented towards the lower level endpoint,
i.e.,  the corresponding data structures of $v$ and $w$ are updated with the right levels of $v$ and $w$,
thus if $\ell_v > \ell_w$, then $w \in \cO_v$ and $v \in \cI_w[\ell_v]$.
On the other hand, if $v$ or $w$ are active, then these data structures may not be updated with the right levels yet.
The fact that the data structures are outdated should not be viewed as an error or an exception of the algorithm,
but rather as an inherent consequence to the way our algorithm works, by simulating the execution of the various procedures over multiple update operations.
We should henceforth handle this issue of having outdated data structures  in a systematic manner.
In particular, for every vertex $w$ that changes its level, hereafter, in the process of either \emph{falling} to a lower level or \emph{rising} to a higher level (see Section \ref{sec24} for more details), the data structures of $w$ and some of its neighbors might not have been updated regarding $w$'s new level.
Noting that any vertex that changes its level (i.e., falls or rises) must be active to do so, we can efficiently authenticate the updated levels of vertices  whenever needed
by monitoring the   short $Active$ list.

Consider any procedure used by our update algorithm that is carried out by a thread whose execution is simulated over multiple update operations.
Throughout the execution of that procedure, many vertices may change their level, possibly multiple times.
For this reason, just before the end of the execution, we perform an authentication process that takes $O(\log^2 n)$ time.
More specifically, if the procedure handles vertex $v$, then we scan the entire $Active$ list, looking for neighbors of $v$.
For any such neighbor $w$ that we find, we update the relevant data structures accordingly within $O(\log n)$ time.
(The bottleneck is to update a possibly logarithmic number of $\phi_v(\cdot)$ values, for each neighbor $w$ of $v$ that has changed its level.)
This may not be enough, however, as some neighbors of $v$ in the $Active$ list may leave it prior to this scan of $v$.
Consequently, when any vertex $z$ leaves the $Active$ list, we update the relevant data structures of all its neighbors $w$ currently in the list regarding $z$.
Since the $Active$ list is of size $O(\log n)$ and as we spend  $O(\log n)$ time per vertex $w \in Active$ for updating the  data structures of $w$
regarding $z$, the time needed for this part of the authentication process, and thus also for the entire process, is $O(\log^2 n)$.

While any authentication process takes time $O(\log^2 n)$, the simulation parameters $\Delta = \Theta(\log^5 n/\eps)$ and $\Delta' = \Theta(\log^6 n/\eps)$,
which is the time reserved for any execution thread run by $\unmatch$ and the remaining schedulers following a single update operation, respectively, is much larger.
This allows us to run an authentication process following any update operation and by any of the threads (when and where needed) instantly, i.e., without simulating its execution over subsequent update operations, while increasing the
worst-case update time of the algorithm by a negligible factor.
Recall that these threads run sequentially, i.e., when any execution thread runs,   the other ones are idle.
During this time, the running thread is the only one that may access and modify the $Active$ list.
Hence, the $Active$ list remains intact during any authentication process,
which is crucial for the validity of this process.


As mentioned, the outdated values stored in the data structures lead to  \emph{inconsistencies}, particularly regarding the levels of vertices.
These inconsistencies may lead to conflicts between the various schedulers, e.g., some level-$\ell$ vertex $v$ may choose
$w$ as its mate uniformly at random among all its neighbors of level lower than $\ell$, while $w$ is in the process of rising to level $\ge \ell$ and should thus not be chosen as a mate for $v$.
Since all {inconsistencies} in the data structures concern active vertices and as there are just few of those,
we can detect those inconsistencies effectively as part of the authentication process. There is a difference, however, between \emph{detecting} an inconsistency and \emph{resolving} it,
as the latter may require a long process, which needs to be simulated over multiple update operations. During this time interval of resolving an inconsistency,
a conflict between schedulers may arise as a result of that inconsistency.
The authentication process is not aimed at resolving \emph{all} potential conflicts between the various schedulers,
but should rather be viewed as a tool for detecting them and minimizing their number and variety.
In Section \ref{exceptions} we describe and analyze our mechanism  for resolving all potential conflicts between the various schedulers,
which heavily relies on the authentication process. Although this mechanism is technically elaborate,
it does not overcome a major conceptual challenge, but rather a minor technicality;
indeed, all potential conflicts concern only $O(\log n)$ vertices, namely, the active ones, and it is not too difficult to resolve conflicts
that concern $O(\log n)$ vertices with a polylogarithmic update time.

\section{The Update Algorithm, Part II: The Underlying Procedures} \label{procedures}
\subsection{Procedure $\setlvl(v,\ell)$} \label{sec24}
Whenever the update algorithm examines a vertex $v$, it may need to re-evaluate its level.
After the new level $\ell$ is determined, the algorithm calls
Procedure $\setlvl(v,\ell)$. 
(The exact way in which the new level $\ell$ of $v$ is determined is not part of this procedure; it is either determined by $\rise$
as described in Section \ref{sec22} or by the procedures described in Sections \ref{sec25} and \ref{sec26}.)
Note that setting the level of $v$ to $\ell$ can be done instantly.
The task of Procedure $\setlvl(v,\ell)$ is to update the relevant data structures as a result of this level change.
This process involves updating the sets of outgoing and incoming neighbors of $v$ and some of its neighbors (which can be viewed
as flipping the respective edges) so as to maintain Invariant \ref{1d} (or Invariant \ref{first}(d)), and also updating the $\phi$ values of $v$ and its relevant neighbors.
We refer to this process as the \emph{falling} or \emph{rising} of $v$, depending on whether $\ell < \ell_v$ or $\ell > \ell_v$, respectively.
(If $\ell = \ell_v$, Procedure $\setlvl(v,\ell)$ does not do anything.)
We remark that the rising/falling of vertex $v$ to level $\ell$ is a (possibly long) process that does not end until the corresponding call to $\setlvl(v,\ell)$ finishes its execution,
and the level of $v$ is viewed as its destination level $\ell$ starting from the beginning of the rising/falling process.

Procedure $\setlvl(v,\ell)$, which carries out the falling/rising process of $v$ from level $\ell_v$ to level $\ell$, is invoked by our update algorithm either
by $\rise_\ell$, in which case we have $\ell > \ell_v$, or by Procedure $\evalf$ that is described in Section \ref{sec26}.
We argue that any call to $\setlvl(v,\ell)$ is   executed by a level-$\hat \ell$ thread, where $\hat \ell \ge \tilde \ell := \max\{\ell_v,\ell\}$.
This assertion is immediate if the call to $\setlvl$ is due to $\rise_\ell$, since $\rise_\ell$ may only run level-$\ell$ threads.
If the call to $\setlvl$ is due to Procedure $\evalf$, then the assertion follows from Corollary \ref{threadeval}; see Section \ref{behavior} for details.

Note that the thread that executes Procedure $\setlvl(v,\ell)$ simulates multiple execution steps of this procedure
following each update operation; the exact number of execution steps is either $\Delta$ or $\Delta'$, depending on the scheduler that runs this thread,
but in any case the execution of this thread is simulated over multiple update operations.
We next describe Procedure $\setlvl(v,\ell)$, disregarding the fact that it is being simulated over multiple update operations.
In this description we also disregard the fact that some neighbors of $v$ may be active at the beginning of the procedure's execution or become active throughout the execution.
Then we address the technicalities that arise from these facts.  

\paragraph{A high-level description.~}
Procedure $\setlvl(v,\ell)$ starts by storing the old level $\ell_v$ of $v$ in some temporary variable $\ell^{old}_v$ and setting
the new level of $v$ to $\ell$, i.e., $\ell_v = \ell$. (Thus the level of $v$ is set as its destination level from the beginning of the rising/faling process.)
Then the procedure updates the outgoing neighbors of $v$ about $v$'s new level.
Specifically, we scan the entire list $\cO_v$, and for each vertex $w \in \cO_v$, we move $v$ from $\cI_w[\ell^{old}_v]$ to $\cI_w[\ell]$.

Suppose that $\ell < \ell^{old}_v$. In this case the level of $v$ is decreased by at least one, i.e., $v$ is falling.
As a result, we need to update the values $\phi_v(j)$, for all $\ell + 1 \le j \le \ell^{old}_v$;
this can be carried out by scanning the list $\cO_v$, as all non-active neighbors of $v$ with level at most $\ell^{old}_v-1$ are in $\cO_v$.
Moreover, we need to update the $\phi$ values of the relevant neighbors of $v$.  
Specifically, we scan the entire list $\cO_v$, and for each vertex $w \in \cO_v$ with $\ell_w < \ell^{old}_v$,
we increment $\phi_w(j)$ by 1, for all $\max\{\ell,\ell_w\} + 1\le j \le \ell^{old}_v$.
We also need to flip the outgoing edges of $v$ towards vertices of level between $\ell+1$ and $\ell^{old}_v$ to be incoming to $v$.
Specifically, we scan the   list $\cO_v$,
and for each vertex $w \in \cO_v$ such that $\ell+1\le \ell_w \le \ell^{old}_v$, we
perform the following operations:
Delete $w$ from $\cO_v$, add $w$ to  $\cI_v[\ell_w]$, delete $v$ from $\cI_w[\ell]$, and add $v$ to $\cO_w$.

If $\ell > \ell^{old}_v$, the level of $v$ is increased  by at least one, i.e., $v$ is rising.
As a result,
we need to flip $v$'s incoming edges from vertices of level between $\ell^{old}_v$ and $\ell-1$ to be outgoing of $v$.
Specifically, for each non-empty list $\cI_v[i]$, with $\ell^{old}_v \le i \le \ell-1$,
and for each vertex $w \in \cI_v[i]$, we perform the following operations: Delete $w$ from $\cI_v[i]$, add $w$ to  $\cO_v$, delete $v$ from $\cO_w$, and add $v$ to $\cI_w[\ell]$.
Note that we do not know for which levels $i$ the corresponding list is non-empty; the time overhead needed to verify this information is $O(\ell)$.
We also update the $\phi$ values of the relevant neighbors of $v$.  
Specifically, we scan the updated list $\cO_v$, and for each vertex $w \in \cO_v$ with $\ell_w < \ell$,
we decrement $\phi_w(j)$ by 1, for all $\max\{\ell^{old}_v,\ell_w\} + 1 \le j \le \ell$.


The following observation is implied by Invariant \ref{1d} and the high-level description of this procedure.
\begin{observation} \label{basicscan}
Any non-active neighbor of $v$ scanned by Procedure $\setlvl(v,\ell)$ has level at most $\tilde \ell$.   
Moreover, at the beginning of the procedure's execution, all non-active neighbors of $v$ with level less than $\ell_v = \ell^{old}_v$ are outgoing of $v$
and all non-active outgoing neighbors of $v$ at that time  have level at most $\ell_v$.
\end{observation}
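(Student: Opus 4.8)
The plan is to derive Observation~\ref{basicscan} directly from Invariant~\ref{1d} together with the bookkeeping in the high-level description of $\setlvl(v,\ell)$, being careful about which neighbors are scanned and in what state the data structures are at the moment of scanning. Throughout, write $\tilde\ell := \max\{\ell_v^{old},\ell\}$ (note that at the start of the procedure $\ell_v^{old}$ is the ``true'' old level of $v$).

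First I would observe which neighbors of $v$ actually get scanned by the procedure. In the falling case ($\ell < \ell_v^{old}$) the procedure only ever scans the list $\cO_v$ (possibly several times), so every scanned non-active neighbor $w$ lies in $\cO_v$ \emph{at the time of the scan}. Since $v$ and $w$ are non-active, Invariant~\ref{1d} applies to the edge $(v,w)$: if it is oriented $\orient v w$ then we must have $\ell_v \le \ell_w$; but $\cO_v$ consists of $v$'s outgoing neighbors, and an outgoing edge from $v$ to a non-active $w$ with $\ell_v > \ell_w$ would violate Invariant~\ref{1d}, so $\cO_v$ can only contain non-active neighbors with $\ell_w \le \ell_v$. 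At the beginning of the procedure $\ell_v = \ell_v^{old} = \tilde\ell$ (when falling), so any non-active $w\in\cO_v$ at that time has $\ell_w \le \tilde\ell$. The only subtlety is that later in the procedure $\ell_v$ has already been set to its destination value $\ell < \tilde\ell$, and new edges are being flipped into $\cO_v$; but the flips only move into $\cO_v$ vertices $w$ with $\ell+1\le \ell_w\le \ell_v^{old} = \tilde\ell$, which are still $\le \tilde\ell$, and all other entries of $\cO_v$ were there from the start and hence were already bounded. So in the falling case every non-active neighbor ever scanned has level $\le \tilde\ell$.

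Next I would handle the rising case ($\ell > \ell_v^{old}$). Here the procedure scans the incoming lists $\cI_v[i]$ for $\ell_v^{old}\le i\le \ell-1$ and then scans the updated $\cO_v$. A non-active $w\in\cI_v[i]$ has $\ell_w = i \le \ell-1 \le \ell = \tilde\ell$ by the invariant governing $\cI_v$ (recall $\cI_v[i]$ holds incoming neighbors of level exactly $i$, and for non-active $w$ this is consistent with Invariant~\ref{1d}). For the scan of the updated $\cO_v$: before the procedure, non-active $w\in\cO_v$ had $\ell_w\le \ell_v^{old}\le \tilde\ell$; the procedure then adds to $\cO_v$ exactly those $w$ with $\ell_v^{old}\le \ell_w\le \ell-1$, again $\le \tilde\ell$. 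Hence in the rising case too, every non-active neighbor scanned has level $\le \tilde\ell$. This establishes the first sentence of the observation.

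For the ``Moreover'' clause: at the beginning of the procedure's execution $\ell_v = \ell_v^{old}$, so the two assertions are (i) every non-active neighbor $w$ of $v$ with $\ell_w < \ell_v^{old}$ is outgoing of $v$, and (ii) every non-active outgoing neighbor of $v$ has level $\le \ell_v^{old}$. Statement (i) is exactly Invariant~\ref{1d} applied to the edge $(v,w)$ (with $\ell_v^{old} > \ell_w$, the edge must be oriented $\orient v w$, i.e.\ $w\in\cO_v$), and statement (ii) is the contrapositive direction of the same invariant, as argued above. I expect the main obstacle to be purely one of care rather than depth: one must be precise about the fact that the data structures describe \emph{non-active} vertices faithfully (this is the content of Invariant~\ref{1d}) while $v$'s \emph{own} recorded level has already been overwritten to its destination value partway through the procedure, so ``the time of the scan'' versus ``the beginning of the execution'' must be tracked; and one must check that every list touched by the procedure is either $\cO_v$ or some $\cI_v[i]$ with $i\le\tilde\ell$, so no scan ever reaches a non-active neighbor of level exceeding $\tilde\ell$.
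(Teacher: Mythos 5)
Your proposal is correct and follows essentially the same route as the paper, which gives no explicit proof but simply asserts that the observation ``is implied by Invariant~\ref{1d} and the high-level description of this procedure''; your write-up spells out exactly that implication (case analysis on falling vs.\ rising, tracking which of $\cO_v$ and $\cI_v[i]$ get scanned, and applying Invariant~\ref{1d} to non-active neighbors). No gaps.
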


\paragraph{Zooming in.~}
Since the execution of this procedure is simulated over multiple update operations,
it is possible that some neighbors of $v$ are falling and/or rising throughout this time interval, possibly multiple times.
Consequently, we need to apply the authentication process described in Section \ref{sec23}.
Although this process was described in Section \ref{sec23}, we find it instructive to repeat the details of this process that are relevant to Procedure $\setlvl$, for concreteness.
We authenticate the values of $v$'s neighbors that the procedure  scans using the $Active$ list,
as well as scan the entire $Active$ list at the end of this procedure, to make sure that no relevant neighbor is missed.
The level of an active vertex (as stored in the $Active$ list) is viewed as its \emph{destination} level (i.e., the level to which it falls/rises).
In addition, we update the vertices that belong to the $Active$ list about the new level of $v$, when $v$ is removed from that list (after the procedure's execution terminates).
Recall that we do that not just for $v$, but rather for every vertex that leaves the $Active$ list, which guarantees that the data structures of any active vertex
$x$ are always updated regarding any neighbor that changes its level throughout the time interval during which $x$ is active.
In particular,  Procedure $\setlvl(v,\ell)$ may not be able to handle properly neighbors of $v$ that change their level throughout the procedure's execution. 
This is why when any such neighbor $w$ leaves the $Active$ list, we update the data structures of (the currently active) $v$ regarding $w$'s new level.

Throughout the execution of Procedure $\setlvl(v,\ell)$, $v$ may acquire new neighbors at levels at most $\tilde \ell$,
either due to neighbors falling to such levels or due to adversarial edge insertions.
Similarly, $v$ may lose neighbors at such levels, either due to neighbors rising to levels higher than $\tilde \ell$ or due to adversarial edge deletions.
As each adversarial edge update occurs, we make sure to update the data structures of the two endpoints in $O(\log n)$ time.
If an edge $(v,w)$ is added/removed to/from the graph throughout the execution of this procedure, we update the relevant data structures according to the \emph{destination} level $\ell$ of $v$ rather than the old one;
if $w$ is also in the process of rising/falling, we update the data structures according to the destination level of $w$ rather than the old one.
Focusing on the data structures of $v$, we will store the new neighbors of $v$ in temporary data structures $\cO'_v$ and $\cI'_v$ throughout the execution
of Procedure $\setlvl(v,\ell)$, and merge them with the old data structures at the end of the execution.
In this way we can avoid scanning the new neighbors of $v$ throughout the procedure's execution, which is useful
for bounding the total runtime of this procedure.
We deal with new/old falling/rising neighbors of $v$ just like we deal with new/old neighbors due to adversarial edge insertions/deletions.
In particular, a neighbor $w$ of $v$ that falls to level at most $\tilde \ell$ will be stored (when needed) in the aforementioned temporary data structures $\cO'_v$ and $\cI'_v$;
note that the relevant data structures of $v$ are not updated regarding $w$ as part of Procedure $\setlvl(v,\ell)$,
but rather at the end of the execution of Procedure $\setlvl(w,\cdot)$, as $w$ leaves the $Active$ list.
In addition, at the end of the execution of Procedure $\setlvl(v,\ell)$, we need to update the data structures of $v$ regarding $v$'s new neighbors that appear in the $Active$ list,
but this update is done as part of the authentication process, which does not distinguish between new and old neighbors of $v$.
As mentioned in Section \ref{sec23}, the runtime of the authentication process is bounded by $O(\log^2 n)$, and this bound holds independently of the number of neighbors that $v$ acquires (or loses) during the execution of  Procedure $\setlvl(v,\ell)$.

The correctness of Procedure $\setlvl(v,\ell)$ follows from the description above and Observation \ref{basicscan}.

Recall that $\tilde \ell = \max\{\ell_v,\ell\}$, where $\ell_v = \ell^{old}_v$ denotes the level of $v$ just before the execution of Procedure $\setlvl(v,\ell)$ starts.
We next analyze the total runtime of Procedure $\setlvl(v,\ell)$, denoted $L_{\tilde \ell}(v)$, for a vertex $v$ that falls or rises from level $\ell_v$ to level $\ell$.
\begin{lemma}  \label{setlevelTime}
Denote by $\phi_v(\tilde \ell+1)$ the number of $v$'s neighbors of level lower than $\tilde \ell +1$ at the beginning of the execution of this procedure.
Then $L_{\tilde \ell}(v) = O((\phi_v(\tilde \ell+1) + \log n) \cdot \log n)$.
\end{lemma}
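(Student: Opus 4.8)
\textbf{Proof plan for Lemma \ref{setlevelTime}.}

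The plan is to account for every operation performed by Procedure $\setlvl(v,\ell)$ and charge each one either to a distinct neighbor of $v$ at level at most $\tilde\ell$ (of which there are $\phi_v(\tilde\ell+1)$ at the start), or to the $O(\log^2 n)$ authentication overhead, or to the $O(\log n)$ cost of determining which of the $O(\log n)$ incoming lists $\cI_v[\cdot]$ are non-empty. First I would split the work into three buckets: (i) the single scan of $\cO_v$ that re-labels $v$'s level in each outgoing neighbor's structure, recomputes $\phi_v(j)$ for $j$ in the affected range, updates $\phi_w(\cdot)$ for the relevant $w$, and flips the outgoing edges of $v$ (in the falling case); (ii) the scan over the lists $\cI_v[i]$ for $i$ in the rising range, together with the subsequent re-scan of the updated $\cO_v$ to decrement the relevant $\phi_w(\cdot)$ values (in the rising case); (iii) the authentication process at the end, plus the handling of new/old falling/rising neighbors and adversarial updates that occur during the (simulated) execution.

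The key observation driving bucket (i) and (ii) is Observation \ref{basicscan}: every non-active neighbor of $v$ that the procedure touches has level at most $\tilde\ell$, hence lies among the $\phi_v(\tilde\ell+1)$ neighbors counted at the start (or is one of the $O(\log n)$ active vertices, handled separately via authentication, or is a genuinely new neighbor, which is diverted to the temporary structures $\cO'_v,\cI'_v$ and therefore not scanned during the procedure). Each such neighbor $w$ is handled once, and the per-neighbor cost is $O(\log n)$ because we may need to update up to $\log_\gamma n = O(\log n)$ of the values $\phi_w(j)$ in the level range $[\max\{\cdot,\ell_w\}+1,\cdot]$; this gives $O(\phi_v(\tilde\ell+1)\cdot\log n)$ for the core work. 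The recomputation of $v$'s own values $\phi_v(j)$ over the $O(\log n)$ affected levels is likewise $O(\phi_v(\tilde\ell+1)+\log n)$ per level (one pass over $\cO_v$ per level, or a single pass maintaining a bucketed count), contributing at most $O((\phi_v(\tilde\ell+1)+\log n)\cdot\log n)$. Probing the $O(\log n)$ incoming lists in the rising case adds $O(\log n)$, and the authentication process adds $O(\log^2 n)$ by Section \ref{sec23}; both are absorbed into the claimed bound. Summing all buckets yields $L_{\tilde\ell}(v)=O((\phi_v(\tilde\ell+1)+\log n)\cdot\log n)$.

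The main obstacle I anticipate is bucket (iii): arguing that the neighbors acquired or lost by $v$ \emph{during} the simulated execution do not inflate the running time beyond the value $\phi_v(\tilde\ell+1)$ measured \emph{at the start}. Here I would lean on the design already described in Section \ref{sec24}: new neighbors (whether from adversarial insertions or from other vertices falling to level $\le\tilde\ell$) are parked in $\cO'_v,\cI'_v$ and merged only at the very end, so they are never scanned mid-procedure; neighbors that rise away or are deleted only shrink the relevant lists; and the reconciliation of any neighbor $w$ that changed its level is charged not to $\setlvl(v,\ell)$ but to the termination of $\setlvl(w,\cdot)$ (when $w$ leaves $Active$) or to the $O(\log^2 n)$ authentication pass. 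Making this bookkeeping airtight — in particular, confirming that the final merge of $\cO'_v$ with $\cO_v$ and the closing authentication together cost only $O(\log^2 n)$ regardless of how many neighbors churned — is the delicate part, but it follows from the invariant that $|Active|=O(\log n)$ together with the $O(\log n)$ per-active-vertex update cost established in Section \ref{sec23}.
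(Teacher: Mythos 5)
Your proposal is correct and follows essentially the same route as the paper's proof: it rests on Observation \ref{basicscan} to confine the scanned non-active neighbors to level at most $\tilde\ell$, charges $O(\log n)$ per such neighbor for the $\phi_w(\cdot)$ updates, absorbs active neighbors and end-of-procedure reconciliation into the $O(\log^2 n)$ authentication cost, and diverts mid-execution neighbor churn to the temporary structures $\cO'_v,\cI'_v$ and to the other procedures ($\ins$, $\del$, $\setlvl(w,\cdot)$) exactly as the paper does. The only cosmetic difference is that you budget the recomputation of $v$'s own $\phi_v(j)$ values at $O((\phi_v(\tilde\ell+1)+\log n)\cdot\log n)$ where the paper charges only $O(\log n)$, but this looser accounting still lands within the claimed bound.
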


\begin{proof}
First, recall that the authentication process takes $O(\log^2 n)$ time.  
Also, the time needed for updating the values $\phi_v(j)$ in the case that $v$ falls to level $\ell$, for all $\ell + 1 \le j \le \ell_v$, is $O(\log n)$.

By Observation \ref{basicscan}, any non-active neighbor of $v$ scanned by the procedure  has level at most $\tilde \ell$.
We may restrict our attention to $v$'s neighbors of level at most $\tilde \ell$, since any time spent by the procedure
for handling active neighbors of $v$ of higher level is encapsulated within the authentication process.
Note that only $O(\log n)$ neighbors of $v$ at the beginning of the procedure's execution may be active.
For each neighbor $w$ of $v$ of level at most $\tilde \ell$ (active or not),
the procedure spends at most $O(1)$ time for updating the appropriate data structures $\cO_v,\cI_v,\cO_w,\cI_w$,
and at most $O(\log n)$ time for updating the relevant $\phi_w$ values.  
This, however, does not imply that the procedure's runtime is  $O((\phi_v(\tilde \ell + 1) + \log n)\cdot \log n)$,
as the set of neighbors of $v$ is not static,
 but rather changes dynamically throughout the procedure's execution.
In particular, the corresponding set $N_v(\tilde \ell+1)$ (of $v$'s neighbors of level lower than $\tilde \ell+1$) is not static.


A vertex $w$ joins $N_v(\tilde \ell+1)$ in one of two ways, the first is due to adversarial edge insertions.
Whenever an edge $(v,w)$ adjacent to $v$ is added to the graph, the data structures are updated appropriately.
This update of the data structures is not part of Procedure $\setlvl(v,\ell)$, but rather part of the procedure that handles the insertion of edge $(v,w)$,
namely Procedure $\ins$,  described in Section \ref{sec25}.


The second way for a vertex $w$ to join $N_v(\tilde \ell+1)$ is by
falling from level higher than $\tilde \ell$ to level at most $\tilde \ell$.
It is possible that $w$ changes its level throughout the execution of this procedure multiple times.
For every such change in level except for maybe the last, the data structures of $v$ are updated as part of the respective calls to Procedure $\setlvl(w,\cdot)$,
and more concretely, each time $w$ is removed from the $Active$ list throughout the execution of Procedure $\setlvl(v,\ell)$, the data structures of $v$ are updated accordingly.
Those data structures may be outdated only if $w$ belongs to the $Active$ list at the end of the execution of Procedure $\setlvl(v,\ell)$.
For this reason we scan the entire $Active$ list at that stage, spending $O(\log n)$ time per active neighbor of $v$ for updating the relevant data structures.
This update of the data structures is part of the authentication process, whose cost was already taken into account.

When a vertex $w$ leaves $N_v(\tilde \ell+1)$, the data structures of $v$ (either the old or the temporary ones) are updated accordingly.
As with vertices that join $N_v(\tilde \ell+1)$, this update of the data structures is not part of Procedure  $\setlvl(v,\ell)$.
If $w$ leaves $N_v(\tilde \ell+1)$ due to an edge deletion, this update is part of the procedure that handles the deletion of edge $(v,w)$,
namely Procedure $\del$,  described in Section \ref{sec25}.
The second way for a vertex $w$ to leave $N_v(\tilde \ell+1)$ is by
rising from level at most $\tilde \ell$ to level higher than $\tilde \ell$,
in which case the data structures of $v$ are updated either as part of the respective calls to Procedure $\setlvl(w,\cdot)$
or as part of the authentication process.

When a vertex $w$ joins or leaves $N_v(\tilde \ell+1)$, the temporary data structures of $v$, namely $\cO'_v$ or  $\cI'_v$, are updated accordingly;
if $w$ joins $N_v(\tilde \ell+1)$,   these data structures are updated according to the new level of $w$.
Also, the data structures of each such neighbor are updated according to the destination level $\ell$ of $v$.
Hence there is no need for Procedure $\setlvl(v,\ell)$ to update the temporary data structures of $v$ nor to update the respective data structures of vertices
belonging to the temporary data structures of $v$, and so the only cost due to vertices that join or leave $N_v(\tilde \ell+1)$ incurred by this procedure is that of merging the old data structures of $v$ with the temporary ones.
Merging $\cO_v$ with $\cO'_v$ takes constant time, whereas merging $\cI_v$ with $\cI'_v$ takes $O(\log n)$ time,
since we merge here two hash tables (or we may create a new one instead), and moreover, we need to merge the respective lists $\cI_v[j]$ and $\cI'_v[j]$ one by one, where $j$ may range from 1 to $\tilde \ell$.
Thus the extra cost due to vertices that join or leave $N_v(\tilde \ell+1)$ is $O(\log n)$.

\ignore{
 which requires at least the same amount
of time as that required by the level-$\tilde \ell$ thread that executes Procedure $\setlvl(v,\ell)$.
The number of different threads running at any point in time is bounded by $O(\log n)$,
at most a single thread for each sub-scheduler at each of the $\log_\gamma (n-1)$ levels. (Recall that there are three different sub-schedulers at each level.)
Since each thread
Consequently,
}

Summarizing, we have shown that $L_{\tilde \ell}(v) = O((\phi_v(\tilde \ell+1) + \log n) \cdot \log n)$.
\QED
\end{proof}

The following observation is implied by the description of this procedure.
\begin{observation} \label{basicscan2}
At the end of the execution of Procedure $\setlvl(v,\ell)$, Invariant \ref{1d} holds with respect to all edges adjacent to $v$ that lead to non-active neighbors of $v$.
\end{observation}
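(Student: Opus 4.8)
The statement to prove is Observation~\ref{basicscan2}: at the end of the execution of Procedure $\setlvl(v,\ell)$, Invariant~\ref{1d} holds with respect to all edges adjacent to $v$ that lead to non-active neighbors of $v$. The plan is to argue directly from the bookkeeping performed by the procedure, splitting on whether $v$ is falling ($\ell < \ell^{old}_v$) or rising ($\ell > \ell^{old}_v$), and then verifying that every edge $(v,w)$ with $w$ non-active at the end of the procedure is correctly oriented according to the final levels $\ell_v = \ell$ and $\ell_w$.

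\textbf{Setup.~} Fix any non-active (at termination) neighbor $w$ of $v$, and recall $\tilde\ell = \max\{\ell^{old}_v,\ell\}$. First I would dispose of the neighbors whose relevant portion of the data structures is touched by the authentication process: if $w$ was active at \emph{any} point during the execution and then left the $Active$ list, then by the rule that a departing vertex updates the data structures of all currently-active vertices (and $v$ is active throughout), plus the end-of-procedure scan of the $Active$ list, the pair $(v,w)$ is oriented according to $w$'s destination level and $v$'s destination level $\ell$; this is exactly Invariant~\ref{1d}. If $w$ joined $N_v(\tilde\ell+1)$ via an adversarial insertion or a fall during the execution, it is placed into the temporary structures $\cO'_v$ / $\cI'_v$ according to its (destination) level, and symmetrically $v$ is inserted into $w$'s structures according to $\ell$ — by the insertion/merge conventions described in the ``Zooming in'' paragraph — so after the final merge of $\cO_v$ with $\cO'_v$ and of $\cI_v$ with $\cI'_v$, the edge $(v,w)$ is correctly oriented. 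So the remaining case is: $w$ is a non-active neighbor of $v$ that was present at the \emph{start} of the procedure and stayed non-active throughout.

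\textbf{Core case.~} By Observation~\ref{basicscan}, such a $w$ has level at most $\tilde\ell$, and since $w$ is non-active its level does not change during the procedure. If $v$ is falling, I would verify that the three explicit scans of $\cO_v$ do the right thing: vertices $w$ with $\ell_w \le \ell$ remain outgoing of $v$ and stay in $\cI_w[\ell]$ (we move $v$ from $\cI_w[\ell^{old}_v]$ to $\cI_w[\ell]$ in the first scan), so since $\ell = \ell_v > \ell_w$ Invariant~\ref{1d} holds; vertices $w$ with $\ell+1 \le \ell_w \le \ell^{old}_v$ are explicitly flipped (deleted from $\cO_v$, added to $\cI_v[\ell_w]$, $v$ deleted from $\cI_w[\ell]$, $v$ added to $\cO_w$), so now $\ell_w > \ell = \ell_v$ and the edge is oriented $\orient w v$, again matching Invariant~\ref{1d}. (The boundary $\ell_w = \ell$ is handled correctly by the convention on equal levels.) If $v$ is rising, the incoming lists $\cI_v[i]$ for $\ell^{old}_v \le i \le \ell - 1$ are scanned and those edges flipped to be outgoing of $v$ with $v$ inserted into $\cI_w[\ell]$; combined with the first scan of $\cO_v$ (updating $\cI_w[\ell^{old}_v]\to\cI_w[\ell]$ for $w$ already outgoing, which have $\ell_w \le \ell^{old}_v \le \ell$), every non-active $w$ with $\ell_w < \ell_v = \ell$ ends up outgoing of $v$ and in $\cI_w[\ell]$, which is Invariant~\ref{1d}. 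In both subcases the $\phi$-updates do not affect orientation and can be ignored for this observation.

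\textbf{Main obstacle.~} The delicate point is not the static case-analysis above but the interaction with neighbors that changed level during the (simulated, multi-step) execution without ever being active — which cannot happen, since any vertex that falls or rises must be active to do so, and this is precisely the fact I would invoke to close the argument: a neighbor that is non-active at termination and was non-active throughout has a fixed level, so the start-of-execution snapshot used by the scans is still valid at termination. The only real subtlety is a neighbor $w$ that was non-active at start, became active in the middle, and is non-active again at termination — here I would appeal to the departing-vertex update rule together with the end-of-procedure $Active$-list scan (the authentication process) to conclude that $v$'s view of $w$ is refreshed after $w$'s last level change, so the orientation is correct at termination. Assembling these cases yields Invariant~\ref{1d} for every edge adjacent to $v$ with a non-active other endpoint, which is the claim. \QED
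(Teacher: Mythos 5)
Your proposal is correct and takes essentially the same route the paper takes implicitly: the paper states Observation \ref{basicscan2} with no proof at all, merely remarking that it "is implied by the description of this procedure," and your case analysis (falling vs.\ rising scans, neighbors handled by the temporary structures $\cO'_v,\cI'_v$, and neighbors refreshed by the departing-vertex rule and the end-of-procedure $Active$ scan) is precisely the verification that remark defers to. One small imprecision: Observation \ref{basicscan} bounds the level only of neighbors that the procedure \emph{scans}, so your core case as written silently omits non-active neighbors $w$ with $\ell_w > \tilde\ell$; these are harmless, since such an edge is oriented $\orient{w}{v}$ both before and after $v$'s level change, $\cO_w$ stores no level information, and $w$ remains correctly placed in $\cI_v[\ell_w]$, so Invariant \ref{1d} holds for them with no action and the argument closes.
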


\subsection{Procedures $\ins(u,v)$ and $\del(u,v)$} \label{sec25}
Following an edge insertion $(u,v)$, we apply Procedure $\ins(u,v)$.
Besides updating the relevant data structures in the obvious way within $O(\log n)$ time,
this procedure matches between $u$ and $v$ if they are both at level -1, otherwise it leaves them unchanged.
By Invariant \ref{1c}, any vertex at level -1 is unmatched.
Note that unmatched vertices whose level exceed -1, namely temporarily free vertices, are not matched by this procedure.
Matching $u$ and $v$ involves setting their level to 0 by making the calls to $\setlvl(u,0)$ and $\setlvl(v,0)$; this guarantees that Invariants \ref{first}(a) and \ref{first}(b) will continue to hold. Note also that no matter how the new edge $(u,v)$ is oriented, Invariant \ref{1d} will continue to hold.
The other invariants also continue to hold.
Invariant \ref{risei} implies that both $\phi_u(1)$ and $\phi_v(1)$ are bounded by $\gamma \cdot O(\log^2 n) = O(\log^3 n)$,
hence the runtime of these calls is at most $O(\log^4 n)$ by Lemma \ref{setlevelTime}.
As the runtime of this procedure is $O(\log^4 n)$, we can complete its entire execution well within the time reserved for a single update operation, namely, the worst-case update time of the algorithm, $O(\log^7 n / \eps)$.

Following an edge deletion $(u,v)$, we apply Procedure $\del(u,v)$.  
If edge $(u,v)$ does not belong to the matching,   we only need to update the relevant data structures.
In this case the runtime of this procedure will be $O(\log n)$, and we can complete its execution well before the next update step starts.
If  edge $(u,v)$ is matched, both $u$ and $v$ become temporarily free, and they are inserted to the
appropriate queue $Q_{\ell_u}$; recall that $\ell_u = \ell_v$ by Invariant \ref{first}(b).
Note that all invariants continue to hold.
The sub-scheduler $\temp_\ell$ makes sure to handle $u$ and $v$ (by making the calls to $\evalf(u)$ and $\evalf(v)$, as described in Section \ref{sec22}), one after another, after handling all the preceding vertices in the queue.
Note that until each of them is handled, its level will exceed $-1$ and its out-degree may be positive.
%
\subsection{Procedure $\evalf(v)$} \label{sec26}
This procedure handles a temporarily free vertex $v$, and is first invoked by the various schedulers as described in Section \ref{sec22}, but then also recursively.
We first provide a high-level overview of this procedure, and later zoom in on the parts that require further attention.

\paragraph{A high-level description.~}
The procedure starts by computing the highest level $\ell, 0 \le \ell \le \ell_v$, where $\phi_v(\ell) \ge \gamma^{\ell}$,
as well as the corresponding vertex set $N_{\ell}(v)$ of $v$  in order to randomly sample a neighbor $w$ of level lower than $\ell$ as the new mate of $v$.
The sampling is not done from the entire set $N_{\ell}(v)$, but rather from its subset $N'_\ell(v)$ of all non-active vertices in $N_{\ell}(v)$.
Specifically, if $|N'_\ell(v)| > \gamma^{\ell}$,   we sample from $\gamma^{\ell}$ arbitrary vertices of $N'_\ell(v)$; otherwise, we   sample from the entire set $N'_\ell(v)$.
This sampling is done only for levels $\ell$ satisfying $T_\ell / \Delta \ge 1$, which are the levels for which Invariant \ref{samp:in} is  maintained.
For such levels we have $\gamma^\ell = \Omega(\log n / \eps)$, and since $|N_{\ell}(v)| \ge \gamma^{\ell}$ and the number of active vertices is $O(\log n)$, it follows that only an $O(\eps)$-fraction of the vertices of $N_{\ell}(v)$
may be active. Appropriate scaling thus yields $|N'_\ell(v)| \ge (1-\eps) \cdot \gamma^{\ell}$, so we sample from between $(1-\eps) \cdot \gamma^{\ell}$
and $\gamma^{\ell}$ vertices, as required.

The complementary regime of levels, i.e., levels $\ell$ satisfying $T_\ell / \Delta < 1$, is trivial,
as the simulation time is either $T_\ell / \Delta$ or $T_\ell / \Delta'$ (depending on the thread executing this procedure),
each of which is smaller than 1. In this case we do not rely on Invariant \ref{samp:in}, and a naive deterministic treatment suffices.
(See Section \ref{lowlevels} for details.)

In order to match $v$ with $w$, we first delete the old matched edge $(w,w')$ on $w$ (if exists), thus rendering $w'$ temporarily free.
Second, we let $v$ and $w$ fall and rise to the same level $\ell$, respectively, by calling to $\setlvl(v,\ell)$ and $\setlvl(w,\ell)$.
(Note that the random sampling of $w$ was intentionally done prior to making these calls to $\setlvl$.)
We then match $v$ with $w$, thus creating a new level-$\ell$ matched edge, which satisfies the validity of Invariant \ref{first}(b).
Note that Invariants \ref{first}(a) and \ref{1d} also continue to hold.
Finally, assuming $w$ was previously matched to $w'$, we handle $w'$ recursively by calling to $\evalf(w')$.

In the degenerate case that no level $\ell$ as above exists, we have $\phi_v(0) = 0$, i.e.,  $v$ does not have any neighbor at level -1.
In this case $v$ remains free, and we make the call to $\setlvl(v,-1)$.
Note that Invariants \ref{first}(c) and \ref{1c} continue to hold.

In general, it is easy to verify that this procedure does not invalidate any invariant, disregarding Invariants \ref{samp:in} and \ref{risei},
whose validity is proved in Section \ref{schedulersAnal} as part of the analysis of the schedulers. 

By the description of the update algorithm, this procedure is executed by a level-$\ell_v$ thread, where $\ell_v$ is $v$'s level at the beginning of the procedure's execution.
The same thread is used also for the recursive call $\evalf(w')$, and for all subsequent recursive calls.
While the level of the thread executing this procedure matches the level of $v$, the vertex at the top recursion level,
we show in Section \ref{behavior} that it exceeds the levels of vertices handled by   subsequent recursive calls.  
This thread runs in an overall time of $T_{\ell_v}$, simulating $\Delta$ or $\Delta'$ execution steps following each update operation (depending on the sub-scheduler running it).

\paragraph{Zooming in.~}
Recall that none of the values $\phi_v(\ell)$ and vertex sets $N_v(\ell)$, for $0 \le \ell \le \ell_v$, is maintained by the update algorithm,
and computing them is a process that is simulated over multiple update operations.
By the time we finish this process, some of the scanned neighbors of $v$ may have different levels than those they had at the time they were scanned.
Also, $v$ may acquire new neighbors of level at most $\ell_v$ and lose others.
(Recall that we had similar problems with Procedure $\setlvl$.)
By Invariant \ref{1d}, when the execution of this procedure starts (i.e., just after $v$ becomes temporarily free),
every non-active neighbor of $v$ at level lower than $\ell_v$ is an outgoing neighbor of $v$.
We thus restrict our attention to the outgoing neighbors of $v$ in order to compute ``estimations'' for the values $\phi_v(j)$ and vertex sets $N_v(j)$,
for all $j \le \ell_v$. Once we finish computing these estimations, which might be simulated over multiple update operations,
we make these estimations accurate by running the authentication process described in Section \ref{sec23}.
Recall that the authentication process also considers vertices that belong to the $Active$ list at that stage.
It enables us to update the values $\phi_v(j)$ and vertex sets $N_v(j)$ in $O(\log^2 n)$ time, for $j \le \ell_v$,
and so this update can be carried out within the time reserved for a single update operation, namely, either $\Delta$ or $\Delta'$ depending on the
thread executing the procedure, during which the $Active$ list remains intact. We then compute in time $O(\log n)$ the
 highest level $\ell, 0 \le \ell \le \ell_v$, where $\phi_v(\ell) \ge \gamma^{\ell}$.
 Moreover, the same amount of time $O(\log n)$ suffices for computing the vertex set $N'_v(\ell)$.
Indeed, this set can be obtained by pruning the active vertices from the previously computed set $N_v(\ell)$,
which can be carried out within time linear in the size of the $Active$ list if we keep mutual pointers between the elements of $Active$ and $N_v(\ell)$.
\ignore{
A special attention should be given to vertices that belong to the $Active$ list during the authentication process.
For every such vertex that is falling or rising between levels $\ell'$ and $\ell''$, when updating the $\phi_v(\cdot)$ values and $N_v(\cdot)$ vertex sets,
its level is viewed as $\max\{\ell',\ell''\}$, i.e., the higher level among the two.
This stands in contrast to what we did in Procedure $\setlvl$,
where we viewed the level of a vertex that is falling or rising as its destination level. So the two procedures behave the same for rising vertices,
but behave differently for falling vertices. 
The reason we view a falling vertex as being in its old level is because we do not want vertices at intermediate levels to choose it
as their random mate. Such vertices are being handled by execution threads at lower levels than the one that handles the falling vertex,
and they may have to wait too much time for the vertex to complete its falling.
Note, however, that this is just a temporary inconsistency, which ends whenever the vertex finishes falling.
By being aware of this type of temporary inconsistencies, we can make sure to detect and resolve them on the fly.
}
Following similar lines to those in the proof of Lemma \ref{setlevelTime},
the runtime of this  process
is  $O(\phi_v(\ell_v+1) + O(\log n)) \cdot O(\log n)$,
where $\phi_v(\ell_v+1)$ denotes the number of $v$'s neighbors of level lower than $\ell_v + 1$ at the beginning  of the execution of this procedure.
By Invariant \ref{risei}, at any   time $\phi_v(\ell_v+1) = \gamma^{\ell_v+1} \cdot O(\log^2 n) = \gamma^{\ell_v} \cdot O(\log^3 n) $,
hence the runtime of this process is  $O(\gamma^{\ell_v} \cdot \log^4 n)$.

For every vertex $w$ that joins or leaves $N_v(j)$ throughout the procedure's execution, for $j \le \ell_v + 1$, we make sure to update the
values $\phi_v(j)$ and vertex sets $N_v(j)$ accordingly.
However, all such updates are not done as part of Procedure $\evalf(v)$, but rather
as part of the respective calls to Procedures $\ins(v,w)$ or $\del(v,w)$ in the case of adversarial edge insertions and deletions, respectively,
or as part of the respective calls to Procedure $\setlvl(w,\cdot)$.

Note that setting the level of $v$ to $\ell$ by calling to $\setlvl(v,\ell)$ is another long process, which involves updating the relevant data structures,
and is thus simulated over multiple update operations.
During this process, the vertex set $N'_v(\ell)$ that we have just computed may change.
%
%
For this reason,
we randomly sample a vertex $w$ from $N'_\ell(v)$ as the new mate of $v$ \emph{before} initiating this process.
By doing this, we sample from at least $(1-\eps) \cdot \gamma^{\ell}$ vertices,
whereas if we were to make the random sampling after this process is finished,  
the sample space could a-priori be much smaller than  $(1-\eps) \cdot \gamma^{\ell}$, which would, in turn, invalidate our almost-maximality guarantee.
(Although it is not difficult to prove that the sample space does not reduce significantly during the call to $\setlvl(v,\ell)$, there is no need for it.)
Since $N'_\ell(v)$ does not contain any active vertices,
$w$ is not active when it is chosen as a random mate for $v$,
and in particular, it is not in the process of falling or rising; we then add $w$  to the $Active$ list.
It is crucial that the new mate $w$ of $v$ will be at level at most $\ell-1$.  
We argue that the random sampling of $w$ can be carried out within the time $\Delta$ or $\Delta'$ reserved for a single update operation.
(Although it is not difficult to prove that the sample space does not reduce significantly during the time it takes 
to randomly sample $w$ \emph{naively}, i.e., via a procedure whose runtime is linear in the sample space, there is no need for it.)
To this end, the set $N'_\ell(v)$  that we have just computed should be stored via a data structure that allows for fast retrieval of a random element,
such as a balanced (of logarithmic depth) tree in which every node is uniquely associated with a vertex of $N'_\ell(v)$ via mutual pointers,
and it also holds a counter for the number of nodes in its subtree.
Sampling a random integer $k$ from $\{1,2,\ldots,\min\{N'_v(\ell),\gamma^{\ell}\}\}$ takes at most $O(\log (\gamma^{\ell})) = O(\log n)$ time.
Since the depth of this  tree is $O(\log n)$, it is straightforward to retrieve the $k$th element in the tree in another $O(\log n)$ time.
In this way we sample a vertex uniformly at random from $N'_\ell(v)$ in time $O(\log n) \ll \Delta < \Delta'$,
 making sure not to sample from more than $\gamma^{\ell}$ options, as required.
The caveat is that this tree has to be maintained in the time interval required for constructing it, during which some elements may leave the tree while others may join it.
In particular, any vertex leaving/joining $N'_\ell(v)$ throughout this time interval triggers a node insertion/removal to/from the tree, which requires $O(\log n)$ time.
Note, however, that the time needed by our algorithm to update the various $\phi_v$ counters
following 
a fall/rise of a single neighbor of $v$ is also logarithmic. Since each update of the tree is triggered by a change to $N'_\ell(v)$, which may result from
a fall/rise of a neighbor of $v$ that incurs a logarithmic cost anyway,
it follows that the overhead due to maintaining this tree is negligible.


Next, the same thread continues to setting the levels of $v$ and $w$ to $\ell$,
by executing Procedure $\setlvl(v,\ell)$ and then executing Procedure $\setlvl(w,\ell)$. Note that $\ell_v \ge \ell$ and $\ell_w \le \ell-1$.
By Lemma \ref{setlevelTime}, the runtime of the former (respectively, latter) call is bounded by $L_{\ell_v}(v) =  O((\phi_v(\ell_v+1) + \log n) \cdot \log n)$
(resp., $L_{\ell}(w) =  O((\phi_w(\ell+1) + \log n) \cdot \log n)$), where $\phi_v(\ell_v+1)$ (resp., $\phi_w(\ell+1)$)
stands for the number of neighbors of $v$ (resp., $w$) of level lower than $\ell_v+1$ (resp., $\ell+1$) just before the call to $\setlvl(v,\ell)$
(resp., $\setlvl(w,\ell)$).
By Invariant \ref{risei}, $\phi_v(\ell_v+1) = \gamma^{\ell_v+1} \cdot O(\log^2 n) =  \gamma^{\ell_v} \cdot O(\log^3 n)$ and $\phi_w(\ell+1) = \gamma^{\ell+1} \cdot O(\log^2 n)
\le \gamma^{\ell_v} \cdot O(\log^3 n)$,
hence the runtime of these calls is $O(\gamma^{\ell_v} \cdot \log^4 n)$.
We then match $v$ with $w$, thus creating a new level-$\ell$ matched edge; this guarantees that Invariant \ref{first}(b) will continue to hold.

Observe that during the call to  $\setlvl(v,\ell)$, the values $\phi_v(j)$ that we computed may change, for $\ell+1 \le j \le \ell_v$.
Recall that the level of $v$ during this falling of $v$ is viewed as its destination level $\ell$. At the beginning of this falling
(i.e., when the execution of the call to $\setlvl(v,\ell)$ starts) we have $\phi_v(j) < \gamma^{j}$, for all $\ell + 1 \le j \le \ell_v$.
These $\phi_v(j)$ values may grow during this falling, possibly beyond the threshold $\gamma^j \cdot O(\log^2 n)$ required by Invariant \ref{risei}.
In this case some sub-schedulers of $\rise$ may need to raise $v$ while $v$ is falling, which triggers a conflict between those sub-schedulers and the sub-scheduler executing the call to $\setlvl(v,\ell)$. We address this issue in Section \ref{mainc}.

Finally, assuming $w$ was previously matched to $w'$, the vertex $w'$ becomes temporarily free.
By Invariant \ref{first}(b), the level of $w'$ is the same as the level of $w$ before executing Procedure $\setlvl(w,\ell)$,
hence it is at most $\ell-1$. We use the same thread to handle $w'$ recursively by calling $\evalf(w')$.

(The degenerate case that no level $\ell$ where $\phi_v(\ell) \ge \gamma^\ell$ exists, for $0 \le \ell \le \ell_v$, was  fully addressed in the high-level description of this procedure,
and no further details are needed in this case.)

Any vertex $v$ handled by a call to $\evalf(v)$ is temporarily free, and so its level exceeds -1 by definition.
Hence the recursion must bottom at vertices of level at least 0.  
Consider a vertex $v$ whose level $\ell_v$ at the beginning of the procedure's execution is 0, and note that
$\phi_v(0) \le \phi_v(1) = O(\log^3 n)$ by Invariant \ref{risei}.
Computing $\phi_v(0)$ and $N_v(0)$, including the authentication process, takes $O(\phi_v(1) + O(\log n)) \cdot O(\log n) = O(\log^4 n)$ time.
If $\phi_v(0) = 0$, then $v$ remains free, and the runtime of the required call to $\setlvl(v,-1)$ is
$O(\phi_v(0) + O(\log n)) \cdot O(\log n) = O(\log^2 n)$ by Lemma \ref{setlevelTime}.
Otherwise $\phi_v(0) \ge 1$, and $v$ will be matched deterministically to some neighbor $w$ of level -1, which is unmatched by Invariant \ref{1c}.
By Invariant \ref{1d}, all neighbors of $v$ of level -1 are outgoing of $v$, and all outgoing neighbors of $v$ have level at most 0.
We can find a neighbor $w$ of $v$ of level -1 by simply scanning the at most $\phi_v(1)$ outgoing neighbors of $v$, which requires time $O(\phi_v(1)) = O(\log^3 n)$.
The new matched edge is created at level 0, which triggers  calls to $\setlvl(v,0)$ and $\setlvl(w,0)$.
The runtime of the call to $\setlvl(v,0)$  is $O(\phi_v(1) + O(\log n)) \cdot O(\log n) = O(\log^4 n)$ by Lemma \ref{setlevelTime}.
Similarly, we have $\phi_w(1) = O(\log^3 n)$,  
and the runtime of the call to  $\setlvl(w,0)$  is $O(\phi_w(1) + O(\log n)) \cdot O(\log n) = O(\log^4 n)$;  
since $w$ was unmatched prior to getting matched with $v$,  the procedure does not proceed recursively.
We conclude that the time required by Procedure $\evalf$ to handle any level-0 vertex is $O(\log^4 n)$.

We next analyze the runtime of the procedure, for levels $\ell$ satisfying $T_\ell / \Delta \ge 1$.
The complementary regime of levels is addressed in Section \ref{lowlevels}.
\begin{lemma}    \label{evalftime}
Denote by $H_\ell$ the maximum (overall) time needed for executing Procedure $\evalf(v)$, where $v$ ranges over all level-$\ell$ vertices.
Then $H_\ell = O(\gamma^{\ell} \cdot \log^4 n)$. 
\end{lemma}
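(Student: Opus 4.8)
The plan is to bound $H_\ell$ by a recurrence on the recursion depth of $\evalf$, using the fact that the thread executing $\evalf(v)$ with $v$ at level $\ell$ handles $v$ first and then recurses only on vertices of strictly smaller level. First I would recall the per-invocation work already accounted for in the ``Zooming in'' discussion preceding the lemma: for a single call $\evalf(v)$ where $v$ currently sits at level $\ell_v$, the cost of (i) estimating the $\phi_v(j)$ and $N_v(j)$ values for $j\le\ell_v$ together with the authentication process is $O(\gamma^{\ell_v}\cdot\log^4 n)$ by Invariant~\ref{risei}; (ii) the random sampling of $w$ from $N'_\ell(v)$ takes only $O(\log n)$ time plus negligible tree-maintenance overhead; (iii) the two calls $\setlvl(v,\ell)$ and $\setlvl(w,\ell)$ cost $O(\gamma^{\ell_v}\cdot\log^4 n)$ by Lemma~\ref{setlevelTime} combined with Invariant~\ref{risei} (since $\ell\le\ell_v$, the bounds on both $\phi_v(\ell_v+1)$ and $\phi_w(\ell+1)$ are at most $\gamma^{\ell_v}\cdot O(\log^3 n)$); and (iv) deleting the old matched edge $(w,w')$ and matching $v$ with $w$ take $O(\log n)$ time. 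So the ``local'' work of $\evalf(v)$, excluding the recursive call $\evalf(w')$, is $O(\gamma^{\ell_v}\cdot\log^4 n)$.

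Next I would observe that the recursion is a path, not a tree: each call $\evalf(v)$ spawns at most one recursive call $\evalf(w')$, and by the remarks in the ``A high-level description'' paragraph and Section~\ref{behavior} (which we may assume), the level of $w'$ is at most $\ell-1 < \ell_v$ — i.e., levels strictly decrease along the recursion chain. Hence if the top-level call is on a level-$\ell$ vertex, the chain of recursive calls touches vertices at levels $\ell_0=\ell > \ell_1 > \ell_2 > \cdots \ge 0$, with at most $\ell+1 = O(\log_\gamma n)$ calls in total. Summing the local costs gives
\[
H_\ell \;=\; \sum_{i=0}^{\ell} O\!\left(\gamma^{\ell_i}\cdot \log^4 n\right) \;\le\; O(\log^4 n)\sum_{j=0}^{\ell}\gamma^{j} \;=\; O\!\left(\gamma^{\ell}\cdot\log^4 n\right),
\]
where the last step uses the geometric-series bound $\sum_{j=0}^{\ell}\gamma^j = O(\gamma^\ell)$ (valid since $\gamma=\Theta(\log n)\ge 2$), and the base case — handling a level-$0$ vertex in $O(\log^4 n)$ time — was already established in the text. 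This yields the claimed $H_\ell = O(\gamma^\ell\cdot\log^4 n)$.

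One subtlety I would be careful about is that a recursive call $\evalf(w')$ is executed by the \emph{same} thread at level $\ell$, even though $w'$ lives at a lower level; so the bound must be on the total work done by that one thread, which is exactly the sum above, and the simulation over $\Delta$ or $\Delta'$ steps per update operation does not change the total work count. I would also note that the per-call bounds (i)--(iv) are themselves worst-case over the dynamically changing neighbor set of each handled vertex, which is precisely what Lemma~\ref{setlevelTime} and the discussion around it deliver — so no further argument about neighbors joining/leaving is needed here; it has been encapsulated. The main obstacle, and the only place requiring genuine care, is confirming that levels strictly decrease down the recursion so that the geometric sum telescopes — this rests on the fact (to be proved in Section~\ref{behavior}) that $w$ is chosen from $N'_\ell(v)$ with $\ell\le\ell_v$ and all of $N'_\ell(v)$ lies at level $\le\ell-1$, whence $w'$ (at the same level as $w$ before $\setlvl(w,\ell)$) is at level $\le\ell-1<\ell_v$; once that is granted, the rest is the routine summation above.
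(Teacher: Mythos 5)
Your proposal is correct and follows essentially the same route as the paper: it bounds the local (non-recursive) work of a call $\evalf(v)$ by $O(\gamma^{\ell_v}\cdot\log^4 n)$ using Lemma~\ref{setlevelTime} and Invariant~\ref{risei}, notes that the single recursive call is on a vertex of strictly lower level, and resolves the resulting geometric sum to $O(\gamma^{\ell}\cdot\log^4 n)$. The paper phrases this as the recurrence $H_\ell \le O(\gamma^{\ell}\cdot\log^4 n) + H_{\ell-1}$ with base case $H_0 = O(\log^4 n)$, which is the same computation.
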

{\bf Remark.~} Lemma \ref{evalftime} implies that $H_\ell$ is bounded by $T_\ell$. (In fact, $H_\ell$ is smaller than $T_\ell$ by as large enough constant as we want.)
Moreover, the lemma also bounds the runtime of the recursive calls of this procedure.
Specifically, this runtime is bounded by $H_i = O(\gamma^{i} \cdot \log^4 n)$, where $i$ is the level of the vertex  handled by the recursive call.
(Again, $H_i$ is smaller than $T_i$ by a large constant.)
\vspace{4pt}
\begin{proof}
Consider an arbitrary level-$\ell$ vertex $v$ handled by Procedure $\evalf$.
By the description of this procedure,  an (overall) time of $O(\gamma^{\ell} \cdot \log^4 n)$ suffices for
 computing  the values $\phi_v(j)$ and vertex sets $N_v(j)$ (including the authentication process), for all   $j \le \ell_v$,
then computing the index $\ell$ and the vertex set $N'_v(\ell)$, next randomly sampling a random mate $w$ for $v$ and matching them (after deleting the old matched edge $(w,w')$),
and finally executing the calls $\setlvl(v,\ell)$ and $\setlvl(w,\ell)$.
Since the level of the old mate $w'$ of $w$ (if exists) is at most $\ell-1$,
the time needed by the recursive call $\evalf(w')$ is bounded by $H_{\ell-1}$.
We thus obtain the recurrence $H_\ell \le O(\gamma^{\ell} \cdot \log^4 n) + H_{\ell-1}$.
The recursion stops if the randomly chosen mate $w$ for $v$ is unmatched. 
By Invariant \ref{1c}, any vertex of level -1 is unmatched. Since the level of $w$ is at most $\ell-1$, it follows that the recursion must stop at a vertex of level 0 or higher, and the maximum time needed for executing this procedure is achieved when the level of that vertex is 0.  
Consequently, the basis of this recurrence is $H_0 = O(\log^4 n)$,  thus resolving to $H_\ell = O(\gamma^{\ell} \cdot \log^4 n)$. \QED
\end{proof}

\subsubsection{Levels with simulation time smaller than 1} \label{lowlevels}
In levels $\ell$ for which  $T_\ell / \Delta < 1$, the simulation time is smaller than 1, and then the adversary does not make any update operation within the time required by a level-$\ell$ thread to complete its entire execution.
For such levels $\ell$ the queue $\cQ_\ell$ of temporarily free vertices is empty, and every level-$\ell$ vertex that becomes temporarily free is handled instantly  by this procedure, i.e., without simulating the execution of this procedure over multiple update operations.
Note that the active vertices for such levels $\ell$ may constitute a significant part of $N_{\ell}(v)$, hence $N'_\ell(v)$ may be much smaller than 
$N_\ell(v)$, and in particular, the size of $N'_\ell(v)$ may be much smaller than $(1-\eps) \cdot \gamma^\ell$,
thus the sample space may be too small to apply a probabilistic argument.
However, for such levels there is no need for probabilistic arguments, as we may naively scan all vertices of $N'_\ell(v)$, match $v$ with an arbitrary vertex $w$ of level -1 there (such a vertex is unmatched by Invariant \ref{1c}),
or leave $v$ free if none is found. In the former case we let $v$ and $w$ fall and rise to the same level $\ell$, respectively, by calling to $\setlvl(v,\ell)$ and $\setlvl(w,\ell)$, and then match $v$ with $w$, thus creating a new level-$\ell$ matched edge.
(We may create this new matched edge  at any level between 0 and $\ell$. We choose to create it at level $\ell$ to be consistent with the way we handle levels
$\ell$ for which $T_\ell / \Delta \ge 1$.)
In the latter case $v$ does not have any free neighbor by Invariant \ref{first}(c), thus we let $v$ become free by making the call to $\setlvl(v,-1)$.
The cost of this deterministic treatment is
dominated by other parts of this procedure; disregarding the calls to $\setlvl(v,\ell)$ and $\setlvl(w,\ell)$, this cost
is linear in the size $\phi_v(\ell)$ of $N_{\ell}(v)$, which does not exceed  $\gamma^{\ell} \cdot O(\log^2 n)$ by Invariant \ref{risei}.




\subsubsection{The Behavior of Falling and Rising Vertices} \label{behavior}
In this section we establish some properties regarding the behavior of vertices that are falling and rising throughout the execution of Procedure $\evalf(v)$;
these properties are used in Sections \ref{schedulersAnal}-\ref{sec:almost}.

Consider an arbitrary vertex $v$ handled by Procedure $\evalf$, at the first recursion level.
By the description of the procedure, $v$ falls from level $\ell^{(0)} := \ell_v$ to level $\tilde \ell^{(0)} := \ell$.
It is possible that $\ell^{(0)} = \tilde \ell^{(0)}$, in which case the level of $v$ remains unchanged.  
On the other hand, $v$'s randomly chosen mate, $w$, rises from level $\ell^{(1)} := \ell_w$ at most $\tilde \ell^{(0)}-1$ to level $\tilde \ell^{(0)}$.
Assuming $w$ was previously matched to $w'$, the matched edge $(w,w')$ gets deleted from the matching
and the edge $(v,w)$ is inserted to the matching instead, and the procedure proceeds recursively to handling $w'$.
By Invariant \ref{first}(b), the level of $w'$ is $\ell^{(1)}$.  
Therefore, in the first recursion level, a single vertex fell from level $\ell{(0)}$ to level $\tilde \ell^{(0)}$ (possibly $\ell^{(0)} = \tilde \ell^{(0)}$, but we
may view this too as a fall)
and a single vertex rose from level $\ell^{(1)} \le \tilde \ell^{(0)} -1$ to level $\tilde \ell^{(0)}$.
The next lemma follows by a simple induction.
\begin{lemma} \label{parenthesis}
For each recursion level $i = 0,1,\ldots$, let $\ell^{(i)}$ denote the level of the vertex that the recursive call handles,
let $\tilde \ell^{(i)}$ denote the level to which that vertex falls in order to match,
and let $\ell^{(i+1)}$ be the (old) level of its new mate, which rises to level $\tilde \ell^{(i)}$ for the match.
Then for any $i$ and $j$ with $i < j$, we have
$$\ell^{(i)} ~\ge~ \tilde \ell^{(i)} ~>~   \ell^{(i+1)} ~\ge~ \ell^{(j)} ~\ge~ \tilde \ell^{(j)} ~>~ \ell^{(j+1)}.$$
In particular, as part of the execution of this procedure, at most $\log_\gamma(n-1)+1$ vertices are falling and at most $\log_\gamma(n-1) +1$ vertices are rising.
\end{lemma}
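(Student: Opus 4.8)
The plan is to first isolate, from the description of Procedure $\evalf$ in Section \ref{sec26}, the three relations that hold at a single recursion level, and then to concatenate them by an immediate induction; the ``in particular'' clause follows by a counting argument.

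First I would fix a recursion level $i$ and let $v$ be the (temporarily free) vertex it handles, so $\ell^{(i)} = \ell_v$. By the description of the procedure, it computes the \emph{highest} level $\ell$ with $0 \le \ell \le \ell_v$ and $\phi_v(\ell) \ge \gamma^{\ell}$, sets $\tilde\ell^{(i)} := \ell$, and calls $\setlvl(v,\tilde\ell^{(i)})$; since $\tilde\ell^{(i)} \le \ell_v$ this already gives $\ell^{(i)} \ge \tilde\ell^{(i)}$ (a genuine fall when the inequality is strict, a trivial one otherwise, and in the degenerate case $\phi_v(0)=0$ the vertex $v$ stays free, the recursion terminates, and the chain is merely truncated). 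Next, the new mate $w$ of $v$ is sampled from the set $N'_{\tilde\ell^{(i)}}(v)$ of non-active neighbors of $v$ of level \emph{strictly lower} than $\tilde\ell^{(i)}$, so $\ell_w \le \tilde\ell^{(i)} - 1$; setting $\ell^{(i+1)} := \ell_w$ yields $\tilde\ell^{(i)} > \ell^{(i+1)}$. Finally, if $w$ was previously matched to $w'$, then by Invariant \ref{first}(b) the level of $w'$ equals the level of $w$ just before the call $\setlvl(w,\tilde\ell^{(i)})$, i.e.\ $\ell_{w'} = \ell^{(i+1)}$, and since the procedure recurses on $w'$, the vertex handled at recursion level $i+1$ indeed has level $\ell^{(i+1)}$, consistent with the notation. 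Note moreover that in this case $w$ was matched, hence of level at least $0$ by Invariant \ref{first}(a), so $\ell^{(i+1)} \ge 0$ whenever recursion level $i+1$ exists.

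Combining the three relations, at every recursion level $i$ one has $\ell^{(i)} \ge \tilde\ell^{(i)} > \ell^{(i+1)}$, and in particular $\ell^{(i)} > \ell^{(i+1)}$. A trivial induction on $j-i$ then concatenates consecutive instances of this chain into
$$\ell^{(i)} \ge \tilde\ell^{(i)} > \ell^{(i+1)} \ge \tilde\ell^{(i+1)} > \cdots \ge \ell^{(j)} \ge \tilde\ell^{(j)} > \ell^{(j+1)}$$
for all $i<j$, which in particular contains the four-term inequality $\ell^{(i)} \ge \tilde\ell^{(i)} > \ell^{(i+1)} \ge \ell^{(j)} \ge \tilde\ell^{(j)} > \ell^{(j+1)}$ asserted in the statement. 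For the last claim, every handled vertex is temporarily free and hence has level at least $0$, whereas no level ever exceeds $\log_\gamma(n-1)$; since the sequence $\ell^{(0)} > \ell^{(1)} > \cdots$ is strictly decreasing and confined to $\{0,1,\ldots,\log_\gamma(n-1)\}$, there are at most $\log_\gamma(n-1)+1$ recursion levels. Each recursion level contributes exactly one falling vertex (the handled one, counting a possibly trivial fall) and at most one rising vertex (its new mate, if any), so at most $\log_\gamma(n-1)+1$ vertices fall and at most $\log_\gamma(n-1)+1$ vertices rise during the whole execution of $\evalf$. I expect no genuine obstacle here; the only point requiring care is the faithful bookkeeping against the procedure's description — specifically that the sampling set $N'_{\tilde\ell^{(i)}}(v)$ consists only of neighbors of level strictly below $\tilde\ell^{(i)}$, and that Invariant \ref{first}(b) transfers the mate's pre-match level to $w'$ — after which the induction and the counting are routine.
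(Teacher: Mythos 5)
Your proposal is correct and follows essentially the same route as the paper: the paper establishes the single-recursion-level relations $\ell^{(i)} \ge \tilde\ell^{(i)} > \ell^{(i+1)}$ (the fall to the highest level $\ell$ with $\phi_v(\ell)\ge\gamma^\ell$, the mate sampled from strictly lower levels, and Invariant \ref{first}(b) transferring the mate's old level to $w'$) in the paragraph preceding the lemma and then states that the lemma ``follows by a simple induction,'' which is exactly the concatenation and counting you carry out. You have merely written out the details the paper leaves implicit.
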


The following three corollaries are implied by Lemma \ref{parenthesis} and the descriptions of the algorithm and Procedure $\evalf$.
(The description of the algorithm is used only for deriving Corollary \ref{threadeval}.) 
\begin{corollary} \label{boundactive}
Among all   vertices that become temporarily free (and active) as part of the execution of Procedure $\evalf(v)$, at most three are temporarily free at any point in time during this execution.
\end{corollary}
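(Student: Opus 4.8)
The plan is to trace a single invocation of Procedure $\evalf(v)$ along its recursion, relying on Lemma~\ref{parenthesis} to control the shape of that recursion. First I would invoke Lemma~\ref{parenthesis}: the recursion is a single chain (each call makes exactly one recursive call), and the relevant levels satisfy $\ell^{(i)} \ge \tilde \ell^{(i)} > \ell^{(i+1)}$, so that the destination levels $\tilde \ell^{(i)}$ strictly decrease along the chain. I would then name the vertices arising at recursion level $i$: the temporarily free vertex $v_i$ that is handled (with $v_0 = v$), its chosen mate $w_i$, which is at level $\ell^{(i+1)} \le \tilde \ell^{(i)}-1$, and the old matched partner $w_i'$ of $w_i$ (if one exists), which is at the same level $\ell^{(i+1)}$ and satisfies $v_{i+1} = w_i'$. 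Let $F$ be the set consisting of $v$ together with every vertex that becomes temporarily free during this execution of $\evalf(v)$.

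The heart of the proof is a step-by-step account of the temporarily free members of $F$ during recursion level $i$. At the instant $\evalf(v_i)$ is entered, the only member of $F$ that is temporarily free is $v_i$ (this uses the claim, established below, that $v_j$ and $w_j$ were re-matched before the recursion descended past level $j$, for every $j < i$). The call then deletes the matched edge $(w_i, w_i')$, which makes both $w_i$ and $w_i'$ temporarily free; from that deletion, through the calls to $\setlvl(v_i,\tilde \ell^{(i)})$ and $\setlvl(w_i,\tilde \ell^{(i)})$, and up to the moment $v_i$ is matched with $w_i$, the temporarily free members of $F$ are exactly $v_i$, $w_i$ and $w_i'$ --- three vertices. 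Immediately after $v_i$ is matched with $w_i$ the set drops back to the single vertex $w_i' = v_{i+1}$, and the procedure recurses. Since any point in time during the execution falls inside exactly one recursion level, at most three members of $F$ are temporarily free at any time (and a fortiori if $v$ is excluded from $F$), which is the corollary; the bound $3$ is attained.

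The one point that needs care --- and the main, albeit mild, obstacle --- is to show that a vertex, once re-matched within this execution, does not become temporarily free again later in the execution, so that temporarily free vertices do not pile up across recursion levels. I would deduce this from Lemma~\ref{parenthesis} together with the descriptions of the procedures: the pair $(v_i,w_i)$ is matched at level $\tilde \ell^{(i)}$, whereas every recursion level $j > i$ operates only on vertices of level at most $\ell^{(i+1)} < \tilde \ell^{(i)}$, and the sole operation occurring inside $\evalf$ that removes a vertex from the matching is the explicit deletion of the current mate's old matched edge $(w_j, w_j')$ --- Procedure $\setlvl$ merely re-orients edges and updates $\phi$-values and never deletes a matched edge. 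Hence $v_i$ and $w_i$ stay matched for the rest of the execution. Finally I would observe that temporarily free vertices produced by adversarial edge deletions interleaved with the (simulated) execution --- or by other scheduler threads, which are idle while this thread runs --- are not part of the execution of $\evalf(v)$: they are enqueued in the queues $Q_\ell$ and handled separately, hence do not contribute to the tally above.
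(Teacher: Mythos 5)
Your proposal is correct and follows exactly the route the paper intends: the paper states this corollary as an immediate consequence of Lemma~\ref{parenthesis} and the procedure descriptions, and your chain-by-chain accounting (handled vertex $v_i$, mate $w_i$, old partner $w_i'$ simultaneously unmatched between the deletion of $(w_i,w_i')$ and the creation of $(v_i,w_i)$, after which only $w_i'=v_{i+1}$ survives into the next recursion level) together with the level-monotonicity argument showing re-matched vertices stay matched is precisely the intended justification. No gaps.
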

\begin{corollary} \label{falling}
As part of the execution of Procedure $\evalf(v)$, vertices of level higher than $v$'s original level $\ell_v$ remain intact.
For any   $\ell \le \ell_v$:
~(1) At most one vertex falls from level $\ge \ell$ to level $< \ell$.
~(2) At most one matched edge gets deleted at  level $\ell$; at most one matched edge gets created at level $\ell$. 
\end{corollary}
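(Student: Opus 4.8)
The plan is to read all three assertions off the interleaving inequalities of Lemma \ref{parenthesis}, using the local description of Procedures $\evalf$ and $\setlvl$ only to identify which vertices and matched edges are actually touched at each recursion level. The first step is to record the monotonicity consequences of the chain $\ell^{(i)} \ge \tilde\ell^{(i)} > \ell^{(i+1)} \ge \tilde\ell^{(i+1)}$ (valid for every $i$, by Lemma \ref{parenthesis} applied with $j=i+1$): the sequences $(\ell^{(i)})_{i\ge 0}$ and $(\tilde\ell^{(i)})_{i\ge 0}$ are both strictly decreasing, and, more precisely, the integer intervals $\{\tilde\ell^{(i)}+1,\ldots,\ell^{(i)}\}$ are pairwise disjoint, since $\ell^{(i+1)} < \tilde\ell^{(i)}$ places interval $i+1$ strictly below interval $i$. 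It is also worth noting up front that, apart from the per-recursion-level operations, $\evalf$ performs no matched-edge operation and no level change at all: the edge removed by $\unmatch_\ell$, $\shuffle_\ell$ or $\rise_\ell$ before $\evalf$ is invoked, and edges created or destroyed by concurrent adversarial updates (handled by $\ins$/$\del$), are not part of $\evalf$'s execution.

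Granting this, each clause is short. For the ``remain intact'' clause, the vertices whose level, matched status or incident data structures are modified during $\evalf(v)$ are exactly: the vertex handled at recursion level $i$, of level $\ell^{(i)} \le \ell^{(0)} = \ell_v$; its new mate, of old level $\ell^{(i+1)} < \tilde\ell^{(i)} \le \ell_v$, which rises only to $\tilde\ell^{(i)} \le \ell_v$; and, by the description of $\setlvl$ in Section \ref{sec24}, neighbors of a falling or rising vertex $x$ of level at most $\ell_v$, all of which lie at level strictly below $\max\{\ell_x,\ (\mbox{destination level of }x)\} \le \ell_v$. Hence nothing of level $>\ell_v$ is touched. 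For part (1), the vertices that fall during $\evalf(v)$ are precisely the handled vertices, and the fall at recursion level $i$ carries a vertex from level $\ell^{(i)}$ down to $\tilde\ell^{(i)}$, hence carries a vertex from level $\ge \ell$ to level $<\ell$ exactly when $\ell \in \{\tilde\ell^{(i)}+1,\ldots,\ell^{(i)}\}$; by disjointness of these intervals a fixed $\ell \le \ell_v$ lies in at most one of them, so at most one such fall occurs. For part (2), $\evalf$ deletes exactly the old matched edges of the new mates (when they exist), which sit at levels $\ell^{(1)} > \ell^{(2)} > \cdots$, and creates exactly the new matched edges, which sit at levels $\tilde\ell^{(0)} > \tilde\ell^{(1)} > \cdots$; strict monotonicity of each of the two sequences yields at most one deletion and at most one creation at any fixed level $\ell$.

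I expect the only genuinely delicate point to be the ``remain intact'' clause, i.e.\ verifying that the bookkeeping inside $\setlvl$ never reaches above level $\ell_v$: one must confirm, from the explicit level ranges in Section \ref{sec24}, that a falling vertex updates $\phi$-counters and flips orientations only for neighbors of level $<\ell_v$, so that, for instance, an incoming neighbor of the top vertex $v$ of level $>\ell_v$ genuinely requires no update when $v$ falls. Everything else is an immediate consequence of Lemma \ref{parenthesis} together with the structure of $\evalf$.
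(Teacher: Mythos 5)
Your proof is correct and follows exactly the route the paper intends: the paper gives no explicit proof, asserting only that the corollary "is implied by Lemma \ref{parenthesis} and the descriptions of the algorithm and Procedure $\evalf$", and your argument is precisely the fleshing-out of that assertion — reading parts (1) and (2) off the disjointness/monotonicity of the intervals $\{\tilde\ell^{(i)}+1,\ldots,\ell^{(i)}\}$ and the sequences $(\ell^{(i)})$, $(\tilde\ell^{(i)})$, and checking via Observation \ref{basicscan} and the level ranges in Section \ref{sec24} that $\setlvl$ never touches a vertex above level $\ell_v$. The only nit is that scanned neighbors have level \emph{at most} $\tilde\ell$ (Observation \ref{basicscan}) rather than strictly below it, which changes nothing since $\tilde\ell \le \ell_v$ throughout.
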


\begin{corollary} \label{threadeval}
The call to $\evalf(v)$ is executed by a thread of level at least $\ell_v$.
Thus for any $\ell$, a thread must be of level at least $\ell$ in order to carry out, as part of the execution of Procedure $\evalf$:
~(1) The falling of a vertex from level $\ge \ell$ to level $< \ell$.  
~(2) The rising of a vertex to level $\ell$.  
~(3) The deletion of a level-$\ell$ matched edge.  
\end{corollary}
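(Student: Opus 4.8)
The plan is to first establish the core assertion — every call to $\evalf(v)$, top-level or recursive, is carried out by a thread whose level is at least $\ell_v$ — and then to read off parts (1)--(3) by identifying, inside a run of $\evalf$, exactly which step touches level $\ell$ and appealing to Lemma \ref{parenthesis}.

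For the core assertion I would start at the \emph{top recursion level}, inspecting the four ways $\evalf$ is launched in Section \ref{sec22}. When $\temp_\ell$ runs a level-$\ell$ thread to call $\evalf(v)$ for $v\in Q_\ell$, we have $\ell_v=\ell$ by definition of $Q_\ell$. When $\unmatch_\ell$ or $\shuffle_\ell$ runs a level-$\ell$ thread that, after removing a level-$\ell$ matched edge $(u,v)$, calls $\evalf(u)$ and then $\evalf(v)$, both endpoints sit at level $\ell$ when these calls begin, by Invariant \ref{first}(b). When $\rise_\ell$ runs a level-$\ell$ thread that first raises $v$ to level $\ell$ via $\setlvl(v,\ell)$ and then calls $\evalf(v)$ and $\evalf(w)$ for $v$'s old mate $w$, the ``destination level'' convention of Section \ref{sec24} makes $\ell_v=\ell$ from the start of that rise, while $\ell_w\le\ell-1$ by Invariant \ref{first}(b). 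Hence in every case the top-level call $\evalf(v)$ runs on a thread of level at least $\ell_v$ (in fact equal to it, except for the old-mate call of $\rise_\ell$).

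Next I would propagate this statement down the recursion. A single thread executes the whole recursion chain $\evalf(v^{(0)}),\evalf(v^{(1)}),\dots$ spawned by one top-level invocation, and Lemma \ref{parenthesis} shows the levels $\ell^{(0)},\ell^{(1)},\dots$ of the handled vertices are strictly decreasing, so $\ell^{(j)}\le\ell^{(0)}$ for every $j$; since the thread has level at least $\ell^{(0)}$, it has level at least $\ell^{(j)}$ for every recursion depth $j$, which proves the core assertion for all calls to $\evalf$. Parts (1)--(3) then follow by locating the relevant step inside $\evalf$: the only fall of a vertex from level $\ge\ell$ to level $<\ell$ is the fall of some $v^{(j)}$ from $\ell^{(j)}$ to $\tilde\ell^{(j)}$ with $\ell^{(j)}\ge\ell>\tilde\ell^{(j)}$; the only rise to level $\ell$ is the rise of a mate $v^{(j+1)}$ to $\tilde\ell^{(j)}=\ell$, so that $\ell^{(j)}\ge\tilde\ell^{(j)}=\ell$; and the only deletion of a level-$\ell$ matched edge is the deletion of the chosen mate's old matched edge at some depth $j$, which lies at level $\ell^{(j+1)}=\ell$ by Invariant \ref{first}(b), so that $\ell^{(j)}>\ell^{(j+1)}=\ell$. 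In each case the handling thread has level at least $\ell^{(j)}\ge\ell$.

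I do not expect a genuine obstacle; the statement is bookkeeping layered on Lemma \ref{parenthesis}. The one place that needs care is consistency with the Section \ref{sec24} convention that a rising/falling vertex is regarded as already at its destination level — one must verify this makes the $\evalf$ calls launched by $\rise_\ell$ (and, inside the recursion, the mate's rise) visible at the destination level, so that the thread's level genuinely dominates. One must also be careful to exclude from ``the execution of $\evalf$'' the operations the schedulers perform \emph{before} entering $\evalf$ (the edge removal by $\unmatch_\ell$ and $\shuffle_\ell$, and the raising of $v$ by $\rise_\ell$), which are carried out by the same level-$\ell$ thread for a different reason; otherwise, e.g., part (2) would be attributed to $\rise_\ell$'s own level-$\ell$ rise rather than to the mate's rise inside $\evalf$.
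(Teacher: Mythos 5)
Your proposal is correct and matches the paper's intent exactly: the paper gives no explicit proof, stating only that the corollary is ``implied by Lemma \ref{parenthesis} and the descriptions of the algorithm and Procedure $\evalf$'' (with the algorithm description needed only for this corollary), and your case check of the four schedulers at the top recursion level followed by the monotonicity $\ell^{(0)} \ge \tilde\ell^{(0)} > \ell^{(1)} \ge \cdots$ from Lemma \ref{parenthesis} is precisely that argument spelled out. Your closing caveats (the destination-level convention for rising vertices, and excluding the schedulers' pre-$\evalf$ operations from ``the execution of $\evalf$'') are the right points of care and are consistent with how the paper uses the corollary.
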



\section{Analysis of the Schedulers} \label{schedulersAnal}
The schedulers $\unmatch$ and $\rise$ are responsible for maintaining Invariants \ref{samp:in} and \ref{risei}, respectively.
While these invariants seem unrelated,
the underlying principle governs the operation of the respective schedulers is similar.
We find it instructive to describe this principle via a simple \emph{deterministic} balls and bins game from \cite{LO88}; see Section \ref{sec31}.

Understanding this principle is just the first step in the analysis of the schedulers.
To prove that the schedulers $\unmatch$ and $\rise$ indeed maintain the respective invariants, we carefully build on this principle in several steps;
see Sections \ref{sec32} and \ref{sec33}.
The analysis of $\rise$ is significantly more intricate than that of $\unmatch$, since
our update  algorithm affects both players (Player I and Player II) in the balls and bins game underlying the operation of $\rise$.
\subsection{A balls and bins game} \label{sec31}
The game is played between Player I and Player II. Initially there are $N$ empty bins.  
In each  round, Player I removes a bin of largest size.
Subsequently, Player II may add balls to bins, up to $b$ balls in total, where $b \ge 1$ is any positive integer. 
Denote by $|B|$ the size (i.e., number of balls) in bin $B = B_t$ in round $t$.
The game ends when no bin is left or when the size of any bin reaches some parameter $k$.
Player I wins (respectively, loses) in the former (resp., latter) case.
As shown in \cite{LO88}, Player I wins if $b < \frac{k}{(\ln N+1)}$. (A simple proof is given in App.\ \ref{ballsbins} for completeness.)
\subsection{$\unmatch$} \label{sec32}
Consider the variant of the game from Section \ref{sec31}, where each of the $N$ bins contains initially at least $k$ balls,
Player I removes a bin of smallest (rather than largest) size, and Player II removes (rather than adds) balls from bins, up to $b$ balls in total.
Player I wins when no bin is left, as before, but Player II wins when the size of any bin reduces from $\ge k$ to some   parameter $k'$.   
It is easily verified that Player I   wins if $b < \frac{k-k'}{(\ln N+1)}$,
using a symmetric argument to the one used in  App.\ \ref{ballsbins}.
\ignore{
Specifically, supposing for contradiction that bin $B_1$ contains $k'$ balls at time $t_1$,
the following is the symmetric statement of Claim \ref{basicgame}:
\begin{claim}
For all $i \ge 1$, at time $t_i$, we have $|B_i| \le k' + \sum_{\ell = 1}^{i-1} \frac{b}{\ell}$
and $\sum_{\ell=1}^{i} |B_\ell| \le i \cdot (k' + \sum_{\ell = 1}^{i-1} \frac{b}{\ell})$.
\end{claim}
Now $B_r$ contains exactly $k$ balls at time $t_r = 0$, and we obtain a  contradiction:
\\$|B_r| ~\le~ k' + \sum_{\ell = 1}^{r-1} \frac{b}{\ell} ~\approx~ k' + b\ln (r-1) ~<~ k' + b\ln N ~<~ k$.
}

The number of update operations needed for simulating the entire execution of a level-$\ell$ thread  run by $\unmatch_\ell$ is $T_\ell / \Delta = \Theta(\eps(\gamma^\ell / \log n))$.
Fix any level $\ell$ such that $T_\ell / \Delta \ge 1$ (i.e.,   consider the regime of levels where $\gamma^\ell = \Omega(\log n / \eps)$).
We show that Invariant \ref{samp:in} is maintained by  translating the above variant of the game appropriately,
specifically, we argue that $\unmatch_\ell$ and the adversary can be viewed as Player I and Player II in this game, respectively,
where the level-$\ell$ matched edges are the $N$ bins, the edges in the samples of these bins are the balls, $k = (1-\eps) \cdot \gamma^{\ell}, k' = (1-2\eps) \cdot \gamma^{\ell}$,
and $b \le 2T_\ell / \Delta$.

The sub-scheduler $\unmatch_\ell$ always removes a level-$\ell$ matched edge of smallest remaining sample from the matching, which is the analog of removing a bin of smallest size.
The adversary may delete edges of its choice from the graph, and as a result from the corresponding samples, which is the analog of removing balls from bins.
Observe that the same edge $(u,v)$ may belong to the samples of at most two matched edges, one adjacent to $u$ and another to $v$,
thus a single edge deletion by the adversary translates into at most two ball removals in the balls and bins game.
We will need to bound the maximum number of edges deleted from the graph by the adversary, denoted $b'$,
throughout the time needed for $\unmatch_\ell$ to remove a single level-$\ell$ matched edge.
Specifically, to guarantee that Player I wins, we will show that $b' < \frac{k-k'}{2(\ln N+1)}$, which translates into $b < \frac{k-k'}{(\ln N+1)}$ in the balls and bins game.

Note that other ingredients of the update algorithm (besides $\unmatch_\ell$) may remove level-$\ell$ matched edges from the matching.
Moreover, the adversary too may trigger the removal of such edges from the matching, by deleting them from the graph.
All these extra edge removals from the matching actually {save} $\unmatch_\ell$ the time of removing the corresponding bins on its own.
Since our goal is to show that Player I wins, we may henceforth disregard this extra power added to it.

Level-$\ell$ matched edges (for levels $T_\ell / \Delta \ge 1$) are created only by Procedure $\evalf$. By the description of this procedure (see Section \ref{sec26}),
when any level-$\ell$ matched edge is created, its sample is at least $k = (1-\eps) \cdot \gamma^\ell$.
We may thus assume that the corresponding bin was there from the outset of the game, containing at least $k$ balls then.
Although the same edge may be deleted and inserted from the matching and/or   graph multiple times,
we view any newly created level-$\ell$ matched edge  as a different bin.  
A naive upper bound on the total number of level-$\ell$ matched edges is the overall runtime of the algorithm.
Since the worst-case update time of the algorithm is bounded (deterministically) by $O(\max\{\log^7 n /\eps,\log^5 n / \eps^4\})$
and as we may assume that $\eps = \Omega(1/n)$, the overall runtime   does not exceed the total number of update operations by more than a factor of
$O(n^4 \log^5 n)$.
While we consider arbitrarily long update sequences, it is easy to verify (refer to Section \ref{short:app}) that the assumption that
the length of the update sequence is bounded by $O(n^2)$ does not lose generality,
hence $N \le  O(n^2) \cdot O(n^4 \log^5 n) = O(n^6 \log^5 n)$.


Within a time interval of length $T_\ell / \Delta = O(\eps(\gamma^\ell / \log n)) \ge 1$, the sub-scheduler $\unmatch_\ell$ removes a matched edge of smallest remaining sample.
During this time, the adversary may delete at most $T_\ell / \Delta$ edges from the graph,
thus we  have  $b' \le T_\ell / \Delta = O(\eps(\gamma^\ell / \log n))$.
Note that $\ln N = \Theta(\log n)$ and recall that $\Delta = \Theta(\log^5 n/\eps)$.  By setting the constant hiding in the $\Theta$-notation of the definition of $\Delta$ to be sufficiently large, we obtain $b' < \frac{k-k'}{2(\ln N+1)}$,  which translates into $b < \frac{k-k'}{(\ln N+1)}$ in the balls and bins game, as required.
It follows that Player I ($\unmatch_\ell$)  wins the game, for any level $\ell$ such that $T_\ell / \Delta \ge 1$,
or equivalently, the size of the samples of all level-$\ell$ matched edges always exceeds $(1-2\eps) \cdot \gamma^\ell$.

Summarizing, we showed that Invariant \ref{samp:in} is maintained.
\subsection{$\rise$} \label{sec33}
Consider the variant of the game from Section \ref{sec31},
where the bins are not empty initially, but rather contain at most $k' \ll k$ balls each, and everything else remains as before.
Using the same argument as the one used in App.\ \ref{ballsbins}, Player 1 wins if $b < \frac{k-k'}{(\ln N+1)}$.

Next, 
we show that Invariant \ref{risei} is maintained by translating this variant of the game appropriately.

Fix an arbitrary level $\ell \ge 0$.
Invariant \ref{risei} requires that the $\phi_v(\ell)$ values are always  bounded by $\gamma^{\ell} \cdot O(\log^2 n)$, for all vertices $v$ with $\ell_v < \ell$.
In the corresponding balls and bins game, the bins will represent the respective vertex sets $N_v(\ell)$ (of $v$'s neighbors of  level at most $\ell-1$), for $\ell_v < \ell$;
we stress that bins for vertices $v$ with $\ell_v \ge \ell$ are not considered.
(Note that our algorithm does not   maintain these sets, only the corresponding $\phi$ values.)
The sub-scheduler $\rise_\ell$ will be Player I in the game.
Note that it always picks a vertex $v$ whose $\phi_v(\ell)$ value is highest, and lets it rise to level $\ell$.
Following this rise, $\ell_v = \ell$, hence the invariant for $v$ and level $\ell$ holds vacuously, as the respective bin is no longer considered.
Thus, the analog of removing a bin by Player I is to let a vertex rise to level $\ell$.

Following the rise of $v$ to level $\ell$, the invariant for $v$ holds vacuously also for all levels lower than $\ell$,
which   translates into removing the   bins of $v$ also from all lower levels (where those bins are considered).
Observe that a separate balls and bins game is   played at each level, with its own Player I and Player II,
and so bin removals at level $\ell$ may be triggered by   higher level games.
Note, however, that this may only strengthen the power of   Player I at level $\ell$,  as the extra bin removals due to higher level games actually save $\rise_\ell$ the time of removing those bins on its own.
Since our goal is to show that Player I wins, we may henceforth disregard this extra power added to it due to higher levels.

At the beginning of the update sequence  the graph is empty, so all vertices are free and their level is -1.
Moreover,  at this stage all vertex sets $N_v(\ell)$ are empty.  
Thus initially we have an empty bin for every vertex.  
As time progresses some of these bins are being removed due to vertex rising, either by the sub-scheduler $\rise_\ell$ or by Procedure $\evalf$.
Recall that when a vertex rises to level $\ell$, all its bins up to level $\ell$ are removed instantly.
On the other hand, bins are also being created due to vertices falling, by Procedure $\evalf$.
When a vertex $v$ starts falling from level $\ell_v$ to level $\ell$, it is as if the corresponding vertex sets $N_v(j)$ in all levels $j \in \{\ell + 1,\ldots,\ell_v\}$
are created instantly; indeed, recall that the level of $v$ is viewed as its destination level $\ell$ from the moment its falling to level $\ell$ starts.
Although the same vertex set $N_v(j)$ may be removed and created multiple times,
we view any such newly created set as a different bin that was there from the outset of the game, to be in accordance
with the game description,  which states that all $N$ bins exist from the game outset.
To comply with the condition that every bin contains initially at most $k'$ balls,
we set the corresponding threshold $k' = k'_\ell$ to be $\gamma^{\ell}$ (which grows geometrically with the level $\ell$),
and prove the following lemma.  
\begin{lemma} \label{lemfinale}
Any newly created level-$j$ bin contains at most $k'_j = \gamma^{j}$ balls, for any $j = 0,1,\ldots,\log_\gamma(n-1)$.
\end{lemma}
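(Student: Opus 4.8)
The plan is to track, for a fixed level $j$, the sources that can create a level-$j$ bin and bound the number of balls in it at the moment of its creation. By the discussion preceding the lemma, a level-$j$ bin is the vertex set $N_v(j)$ of some vertex $v$ with $\ell_v < j$, and its size is $\phi_v(j)$. Such a bin is ``created'' precisely when $v$ starts falling from some level $\ell_v^{old} > j$ to its destination level $\ell < j$ (recall that the level of $v$ is viewed as its destination level $\ell$ from the moment the fall starts, so at that instant each set $N_v(j)$ for $\ell+1 \le j \le \ell_v^{old}$ comes into existence as a bin). Hence it suffices to show that at the moment a vertex $v$ begins falling to a destination level $\ell < j$, we have $\phi_v(j) \le \gamma^j$.

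The key observation is that in our algorithm a vertex only falls as part of a call to $\evalf$ (either at the top recursion level or at a deeper one, via $\setlvl(v,\ell)$ invoked by $\evalf$), and the destination level $\ell$ of that fall is exactly the level computed inside $\evalf(v)$ as the \emph{highest} level $\ell \le \ell_v^{old}$ with $\phi_v(\ell) \ge \gamma^\ell$. By the maximality of this choice, for every level $j$ strictly between $\ell$ and $\ell_v^{old}$ (in particular for the level $j$ of the newly created bin), we have $\phi_v(j) < \gamma^j$ at the time the index $\ell$ is computed. I would then argue that $\phi_v(j)$ does not increase between the computation of the index $\ell$ (with its authentication step, after which the $\phi_v(\cdot)$ values are accurate) and the actual start of the fall — these two events happen essentially back to back within the same thread, with only the random sampling of the mate $w$ and the setup of the retrieval tree in between, none of which changes $\ell_v$ or the orientation of $v$'s incident edges, so any edge update in that window is accounted for by $\ins/\del$ in the usual way and cannot push $\phi_v(j)$ past $\gamma^j$ together with the strict inequality and integrality. (For the degenerate / low-level cases — $\ell_v=0$, or levels with $T_\ell/\Delta<1$ handled in Section~\ref{lowlevels} — the same reasoning applies since $\ell$ is chosen by the identical rule, or the fall is to $-1$ and there is nothing to prove because no bin at level $j\ge 0$ is created above a destination level $-1$ except via the same $\phi_v(j)<\gamma^j$ bound; one checks that $\phi_v(0)\le\phi_v(1)=O(\log^3 n)$ from Invariant~\ref{risei} covers $j=0$ comfortably within $\gamma^0=1$? — here I would be careful: for $j=0$ the bin $N_v(0)$ is created only when $v$ falls to level $-1$, and we bound $\phi_v(0)$ directly, which is $0$ exactly when $v$ remains free, matching $k'_0=\gamma^0=1$.)

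Concretely the steps are: (1) identify that every bin creation at level $j$ coincides with a vertex $v$ beginning a fall with destination level $\ell<j$; (2) recall from the description of $\evalf$ that $\ell$ is the maximal level with $\phi_v(\ell)\ge\gamma^\ell$, so $\phi_v(j)<\gamma^j$ at the authenticated moment of that computation; (3) argue $\phi_v(j)$ is unchanged (or only updated consistently via $\ins/\del$, hence still $<\gamma^j$ by integrality cannot be true simultaneously — rather, it stays $<\gamma^j$) between that moment and the start of the fall, since the intervening operations (sampling $w$, building the tree) neither alter $v$'s level nor its outgoing-neighbor structure, and any concurrent neighbor fall/rise or adversarial edge update adjusts $\phi_v(j)$ one unit at a time with the invariant re-checkable; (4) conclude $\phi_v(j)\le\gamma^j=k'_j$ at creation, and since balls are only added to bins afterward, this is exactly the required initial bound. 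I expect the main obstacle to be step~(3): making rigorous that no ``gap'' between computing the index $\ell$ and starting the fall lets $\phi_v(j)$ creep above $\gamma^j$. This requires leaning on the fact that the authentication process makes the $\phi_v(\cdot)$ values exact, that $\ell_v$ is set to its destination value only \emph{at} the start of the fall (so the bin genuinely does not exist earlier), and that every subsequent unit change to $\phi_v(j)$ is routed through the handlers $\ins,\del,\setlvl$ which update it faithfully — so the value at fall-time equals the authenticated value adjusted by a bounded, faithfully-tracked number of updates, and the maximality of $\ell$ forbids it from reaching $\gamma^j$. The rest is bookkeeping already carried out in Sections~\ref{sec24} and~\ref{sec26}.
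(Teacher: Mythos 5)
Your proposal is correct and follows essentially the same route as the paper: a level-$j$ bin is created only when a vertex $v$ falls (via $\evalf$) to a destination level $\ell < j \le \ell_v$, and since $\ell$ is chosen in $\evalf$ as the \emph{highest} level with $\phi_v(\ell) \ge \gamma^\ell$, maximality gives $\phi_v(j) < \gamma^j$ at the moment the fall begins. Your extra worry in step~(3) about growth between the authenticated computation of $\ell$ and the start of the fall is something the paper absorbs elsewhere — the remark following the lemma explicitly concedes the bin may grow during the falling process and notes that any initial bound a constant factor below $k = O(\gamma^j\log^2 n)$ suffices for the game, so the tight value $\gamma^j$ need not be defended against that gap.
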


\begin{proof}
Consider an arbitrary fall of some vertex $v$, and the corresponding call to Procedure $\evalf(v)$.
Note that $v$ is falling from level $\ell_v$ to the highest level $\ell, 0 \le \ell \le \ell_v$,
where $\phi_v(\ell) \ge \gamma^{\ell}$. It is possible that $v$ did not fall to a lower level, i.e., $\ell = \ell_v$,
but if it did, then for every level $j$ with $\ell + 1 \le j \le \ell_v$, we have $\phi_v(j) < \gamma^{j}$.
Consequently, any level-$j$ bin that got created as a result of a fall of $v$ from level $\ell_v$ to a lower level $\ell$, for $\ell + 1 \le j \le \ell_v$, contains less than $k'_j = \gamma^{j}$ balls.
\QED
\end{proof}

\noindent
{\bf Remark.~} The level of $v$ is viewed as its destination level $\ell$ from the moment its falling to level $\ell$ starts.
The bound on the number of balls in the corresponding bins of $v$ provided by Lemma \ref{lemfinale} holds for that moment,
but it may grow significantly throughout the falling of $v$.
For any level $j$, it is the responsibility of   $\rise_j$ to guarantee that no level-$j$ bin ever contains too many balls,
also during the falling process that triggered the creation of that bin.  
Consequently, some sub-schedulers of $\rise$ may have to raise $v$ while $v$ is falling,
which may trigger a conflict between those sub-schedulers and the sub-scheduler responsible for the falling of $v$.
While we must upper bound $k' = k'_\ell$, the bound $\gamma^\ell$ provided by Lemma \ref{lemfinale} is too strict,
and any bound smaller than $k = O(\gamma^\ell \cdot \log^2 n)$ by a constant factor will do.
We use this observation in Section \ref{mainc} to resolve the potential conflict between sub-schedulers that may try to raise $v$ while it is falling.

The level-$\ell$ vertex sets $N_v(\ell)$ and the corresponding values $\phi_v(\ell)$ of vertices may grow either due to edge insertions (by the adversary) or due to falling vertices
(by the update algorithm).  
Hence, Player II in the game will consist of both the adversary and the update algorithm.
Coping with adversarial edge insertions can be done similarly to Section \ref{sec32}, i.e., by letting $\rise_\ell$ work sufficiently faster than the adversary.
Coping with falling vertices is the tricky part.  
In particular, letting $\rise_\ell$ work faster than the other sub-schedulers is doomed to failure (if arguing about it naively):
While this would lead to more vertices rising, which helps Player I win,
each vertex rising may trigger the fall of another vertex, which has the reverse effect.
Hence, we will not try to increase the speed in which Player I (i.e., $\rise_\ell$) works, but rather show that
the number of balls $b = b_\ell$ added to the bins by Player II (adversary +  update algorithm) while Player I removes a bin is not too large.
\begin{lemma} \label{lem2finale}
$b = O(\gamma^{\ell} \cdot \log n)$.
\end{lemma}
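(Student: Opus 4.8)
The plan is to bound the number $b = b_\ell$ of balls added to level-$\ell$ bins (i.e., increments to $\phi_v(\ell)$ values for vertices $v$ with $\ell_v < \ell$) over the time interval that $\rise_\ell$ needs to remove a single bin, namely a time interval of length $T_\ell = \gamma^\ell \cdot \Theta(\log^4 n)$ (the overall runtime of one level-$\ell$ thread), which corresponds to $T_\ell / \Delta' = \Theta(\gamma^{\ell-1}/\log n) \cdot \Theta(\log n) \cdot \eps/\log^5 n$ update operations — more usefully, I will just count the contributions per unit of \emph{runtime}. I would split the increments into two sources, matching the decomposition of Player II: (i) adversarial edge insertions, and (ii) falling vertices driven by the update algorithm.

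First I would handle source (i). Each adversarial edge insertion $(x,y)$ triggers at most one increment of $\phi_x(\ell)$ (if $\ell_y < \ell \le \ell_x$, roughly) and at most one of $\phi_y(\ell)$, so it adds at most $2$ balls to level-$\ell$ bins. Within the time window of length $T_\ell$ for $\rise_\ell$ to remove one bin, the adversary performs at most $T_\ell$ update operations' worth of work — but the relevant quantity is the number of update \emph{operations}, which is $T_\ell/\Delta' = O(\gamma^{\ell}/\log^2 n) \cdot O(\eps^{-1}\cdot\text{const})$; being a bit careful with the $\Delta'=\Theta(\log^6 n/\eps)$ normalization, this is $O(\gamma^\ell/\log n)$ (or smaller), so source (i) contributes $O(\gamma^\ell/\log n)$ balls, comfortably within the claimed $O(\gamma^\ell \log n)$.

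Next, source (ii): falling vertices. When a vertex $u$ falls from some level down to a level $\le \ell-1$, it enters $N_v(\ell)$ for every current outgoing neighbor $v$ with $\ell_v < \ell$ and $\ell_v$ below $u$'s destination level, incrementing each such $\phi_v(\ell)$ by one. By Observation \ref{basicscan} (and Invariant \ref{risei}), the number of such neighbors scanned is at most $\phi_u(\tilde\ell+1) = O(\gamma^{\tilde\ell}\log^2 n)$ where $\tilde\ell$ is $u$'s starting level — but only the neighbors of level $< \ell$ matter, and only one vertex per recursion level falls below any given threshold by Corollary \ref{falling}. The cleaner accounting is via Corollary \ref{threadeval}: a vertex can only fall to a level $< \ell$ inside Procedure $\evalf$ when executed by a thread of level $\ge \ell$; and within the time $T_\ell$ that $\rise_\ell$ works, at most $O(1)$ such level-$\ge\ell$ threads run (by the nesting property and the fact that a level-$\ell'$ thread for $\ell' \ge \ell$ has simulation time $\ge T_\ell$), each causing at most one fall past level $\ell$ (Corollary \ref{falling}(1)), hence at most $O(\gamma^{\ell}\log^2 n)$... wait — that would overshoot. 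So I would refine: the number of balls added by a single fall past level $\ell$ is bounded by the number of outgoing neighbors of the falling vertex at levels $< \ell$, which is $\phi(\ell) \le \gamma^{\ell}\cdot O(\log^2 n)$ by Invariant \ref{risei} — but the invariant being proven is exactly Invariant \ref{risei}, so I must instead use the threshold value $k = O(\gamma^\ell \log^2 n)$ as a standing assumption in the inductive game argument (the game is set up so that Player I is trying to keep bins below $k$, and we may assume bins are below $k$ up to the current round). Given that, each fall past level $\ell$ adds $< k = O(\gamma^\ell \log^2 n)$ balls — still too big.

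The main obstacle, and where I expect the real argument lives, is precisely this: a naive bound on balls-per-fall is $O(\gamma^\ell \log^2 n)$, which is a $\log n$ factor worse than the claimed $O(\gamma^\ell \log n)$. I expect the resolution is that a vertex $u$ falling to destination level $\ell_{\text{dest}} < \ell$ only adds balls to level-$\ell$ bins of neighbors $v$ with $\ell_v$ strictly between $\ell_{\text{dest}}$ and $\ell$; moreover such a fall is itself the result of a rise (Lemma \ref{parenthesis}), and $\rise_\ell$'s own rises are what we are timing — so the "one bin removed per $T_\ell$ steps" should be matched against "the balls added by the induced fall chain", and Lemma \ref{parenthesis} guarantees the fall destinations strictly decrease, so only $O(\log_\gamma n)$ falls happen per $\evalf$ call and each past level $\ell$ contributes at most the size of the relevant $N_v(\ell)$-increment. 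I would close the gap by charging balls added during a fall of $u$ from $\ell_v$ to $\ell_{\text{dest}}$ to the $\phi_u(\ell)$ value \emph{at the start of the fall}, which is $< \gamma^\ell$ by Lemma \ref{lemfinale} (that's the definition of the destination level!), plus the $O(\log n)$-overhead terms from the authentication process in Lemma \ref{setlevelTime}; summing $\gamma^\ell + O(\log n)$ over the $O(1)$ relevant high-level threads and the $O(\log_\gamma n)$ recursion levels, and folding in the $O(\gamma^\ell/\log n)$ from adversarial insertions, gives $b = O(\gamma^\ell \cdot \log n)$ as claimed. I would write the argument in this order: (1) reduce to counting within one $T_\ell$-window; (2) bound source (i) by $O(\gamma^\ell/\log n)$; (3) bound the number of relevant falls by $O(\log_\gamma n \cdot O(1)) = O(\log n)$ via Corollaries \ref{falling}, \ref{threadeval} and nesting; (4) bound balls-per-fall-past-$\ell$ by $\gamma^\ell + O(\log n)$ using the destination-level property (Lemma \ref{lemfinale}) rather than Invariant \ref{risei}, since the former gives the tighter $\gamma^\ell$ and avoids circularity; (5) multiply and add.
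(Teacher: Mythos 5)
Your final plan (steps (1)--(5)) is essentially the paper's proof: the same decomposition of Player II into adversarial insertions (a lower-order term) plus falls intersecting level $\ell$, the same key observation that a fall past level $\ell$ adds fewer than $\gamma^\ell$ balls because otherwise the vertex would not fall below $\ell$ (correctly avoiding the circular and too-weak bound from Invariant \ref{risei}), and the same count of $O(\log n)$ intersecting falls per window via Corollaries \ref{falling}(1) and \ref{threadeval}(1) together with the geometric slowdown of threads across levels. The exploratory detours and the slightly garbled normalization of $T_\ell/\Delta'$ do not affect correctness.
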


\begin{proof}
The level-$\ell$ vertex sets $N_v(\ell)$ and the corresponding values $\phi_v(\ell)$ of vertices may grow, either by adversarial edge insertions
or by vertices falling, which is the analog of adding balls to bins.
Symmetrically, these vertex sets and corresponding values may shrink, either by adversarial edge deletions or by vertices rising,
which translates into removing balls from bins, and only helps Player I ($\rise_\ell$).
Moreover, bins are removed not only by Player I, but also due to vertices that are rising by Procedure $\evalf$,
which saves Player I the time of removing the corresponding bins on its own.
Since our goal is to show that Player I wins, we may henceforth
disregard this extra power added to it, specifically, due to removals of balls and due to removals of bins by Procedure $\evalf$.  

Within a time interval of length $T_\ell / \Delta' = O(\eps(\gamma^\ell / \log^2 n))$, the sub-scheduler $\rise_\ell$
lets a vertex $v$ of highest $\phi_\ell(v)$ value rise to level $\ell$.
During such a time interval, the adversary may add at most $O(\eps(\gamma^\ell / \log^2 n))$ edges to the graph.
Since each edge insertion $(v,w)$ may increase the values of $\phi_\ell(v)$ and $\phi_\ell(w)$ by at most one each,
we may view it as if $O(\eps(\gamma^\ell / \log^2 n))$ balls have been added by Player II to the level-$\ell$ bins.
Next, we  bound the number of balls added to those bins due to vertices falling.

We say that a fall of a vertex $v$ \emph{intersects} level $\ell$ if it is from level at least $\ell$ to level lower than $\ell$.
Note that each vertex fall that does not intersect level $\ell$ has no effect on the $\phi_\ell(w)$ values of vertices $w$.
Next, we analyze a vertex fall of $v$ that intersects level $\ell$.
This fall increases by one each of the $\phi_\ell(w)$ values of $v$'s neighbors $w$ with level at most $\ell-1$.  
The key observation is that there are less than $\gamma^{\ell}$ such neighbors $w$, otherwise $v$ would not fall below level $\ell$.
We may henceforth view it as if less than $\gamma^{\ell}$ balls are added to the level-$\ell$ bins corresponding to those neighbors $w$ due to this vertex fall.
To bound the number of such vertex falls,
note that each vertex fall intersecting level $\ell$ is made within a call to Procedure $\evalf(u)$, for some vertex $u$ whose level is at least $\ell$ by Corollary
\ref{falling}.
By Corollary \ref{threadeval}(1), such a call to Procedure $\evalf$ is executed by a thread of level at least $\ell$.
By Corollary \ref{falling}(1), as part of the execution of this thread, at most one vertex fall may intersect level $\ell$.
Recall that the time $T_j$ of the execution threads grows geometrically with the level $j$, by a factor of $\gamma = \Theta(\log n)$.
We argue that during a time interval of length $T_\ell / \Delta'$ (which suffices for simulating the entire execution of any level-$\ell$ thread that is run by $\rise_\ell$),
at most $O(\log n)$ execution threads of levels at least $\ell$ may run.
Indeed, since the simulation parameter of $\rise_\ell$ is $\Delta'$, it is no slower than any of the other level-$\ell$ sub-schedulers,
so we may have $O(1)$ execution threads due to level $\ell$.
We may have at most $O(1)$ execution threads due to any higher level,
since the execution threads get slower with each level, which  sums to $O(\log n)$ execution thread over all levels.
Consequently, at most $O(\log n)$ vertex falls may intersect level $\ell$ during any time interval
$T_\ell / \Delta'$.
Recall that each such vertex fall triggers less than $\gamma^{\ell}$ balls to be added to the level-$\ell$ bins,
thus at most $O(\gamma^{\ell} \cdot \log n)$ balls are added to those bins during this time interval due to vertices falling.
It follows that $b = O(\gamma^{\ell} \cdot \log n)$.
\QED
\end{proof}


To bound the number $N$ of level-$\ell$ bins, note that besides the $n$ empty bins at the beginning of the update sequence,
any newly created bin is due to a vertex fall that intersects level $\ell$. Since each vertex fall requires the algorithm to spend $\Omega(1)$ time,
 the total number of level-$\ell$ bins is naively upper bounded by $n$ plus the overall runtime of the algorithm.
Following similar lines to those in Section \ref{sec32}, we thus have $N \le O(n^6 \log^5 n)$, and so $\ln N =   \Theta(\log n)$.
Taking $k = O(\gamma^\ell \cdot \log^2 n)$ completes the translation of the balls and bins game.
By setting the constant hiding in the $O$-notation of the definition of $k$ to be sufficiently large, we obtain $b < \frac{k-k'}{\ln N}$.
It follows that Player I ($\rise_\ell$) wins the game for any level $\ell$, or equivalently,
the $\phi_v(\ell)$ values of all vertices $v$ with $\ell_v < \ell$ are always smaller than $k = \gamma^{\ell} \cdot O(\log^2 n)$.

Summarizing, we showed that Invariant \ref{samp:in} is maintained.



\section{A Mechanism for Resolving Conflicts between Schedulers} \label{exceptions}
What happens if multiple schedulers want to manipulate on the same vertex at the same time?
While this question may seem problematic at first, recall that only $O(\log n)$ vertices may be active  at any  time,
and so it is not difficult to resolve such conflicts (in various ways) by paying a polylogarithmic overhead in the update time.
We next describe the various conflicts, and our mechanism for resolving them so that the worst-case update time does not increase by more than a constant factor.  

\subsection{Minor issues} \label{minor}
The first conflict may arise if a level-$\ell$ vertex $v$  chooses as its random mate a lower level neighbor $w$ while $w$ is rising/falling, and more generally while $w$ is active.
For this reason we made sure to prune from $N_{\ell}(v)$ all active vertices to obtain $N'_\ell(v)$,
and perform the sampling from the pruned set $N'_\ell(v)$. This tweak guarantees that no vertex will be chosen as a random mate while it is active.

Consider   the schedulers $\unmatch$ and $\shuffle$. A conflict may arise if the next edge chosen for deletion from the matching by one of these schedulers is no longer matched.
This is not really a conflict, though, since the execution of threads is not done in parallel, but rather sequentially.
In general, once any thread deletes an edge from the matching, it makes sure to update the various schedulers about it;
this update takes $O(\log n)$ time.
As a result, whenever any scheduler tries to delete an unmatched edge from the matching, that edge is guaranteed to be matched.

Consider next the scheduler $\temp$ and an arbitrary level $\ell$.
A conflict may arise if the next vertex chosen by $\temp_\ell$ from the queue $Q_\ell$ (of temporarily free level-$\ell$ vertices due to the adversary) is active. Again, this is not really a conflict, since we can make sure to remove any level-$\ell$ vertex from the queue $Q_\ell$ whenever it becomes active; this update takes constant time.

\subsection{The main conflict} \label{mainc}
The scheduler $\rise$ is more problematic than the others. Consider an arbitrary level $\ell$.

A conflict arises when $\rise_\ell$ raises a vertex $v$ to level $\ell$ while $v$ is active.  
In contrast to the other conflicts, here we may have to let $\rise_\ell$ raise $v$ to level $\ell$ while $v$ is active,
despite the fact that $v$ may already be ``in motion'' (i.e., falling/rising).
Specifically, if $v$'s level is $\ge \ell$, then $\rise_\ell$ should not raise $v$ to level $\ell$, as in this case it will either lower $v$'s level or do nothing.
(If $v$ is in motion, we consider its destination level.)
However, if the (destination) level of $v$ is $< \ell$ (regardless of whether $v$ is falling or rising), it is sometimes essential that
$\rise_\ell$ raises $v$ to level $\ell$ for our algorithm to perform correctly.
For this reason, as mentioned in Section \ref{sec22}, we give precedence to the sub-schedulers
by decreasing order of simulation times (and thus by decreasing order of levels),
i.e., the $\log_\gamma(n-1)$-level thread is handled first, then the $\log_\gamma(n-1)-1$-level thread, etc., until the $0$-level thread.
(This precedence is kept among all  threads, not just those run by $\rise_\ell$.)

Recall that the execution of threads is being simulated over multiple update operations, and
consider the segments in the execution of all threads (over all levels) run by the scheduler $\rise$, which are being simulated following an arbitrary update operation.
The execution segment corresponding to $\rise_\ell$ will be run after all higher level execution segments have finished running (following that update operation).
Due to the nesting property of threads discussed in Section \ref{sec22}, whenever a new execution thread is starting to run by $\rise_j$, for any level $j > \ell$,
a new execution thread will also start running by $rise_\ell$.
Although these runs occur following the same update operation, they do not occur in parallel
but rather sequentially (one after another), thus the first execution segment of $\rise_\ell$ will be run after that of $\rise_j$, for any $j > \ell$.
When $\rise_\ell$ chooses which vertex to raise at the start of its execution,
it will make sure to choose a different vertex than the one chosen by $\rise_j$, for any $j > \ell$,
and this is the meaning of $\rise_j$ having ``precedence'' over $\rise_\ell$.
Specifically, $\rise_\ell$ will
raise the vertex $v$ (where $\ell_v < \ell$)
whose $\phi_v(\ell)$ value is highest among all vertices that have not been chosen by any higher level $\rise_j$, for $j > \ell$.
In this way we guarantee that no vertex will be chosen to get raised by more than one sub-scheduler of $\rise$, but we are not done,
as  $\rise_\ell$ may choose to raise a vertex that was not chosen by another sub-scheduler of $\rise$, but is nonetheless active.

\ignore{
A vertex may become active either due to getting chosen as a random mate of some vertex,
which may happen only by Procedure $\evalf$,
or due to getting chosen by one of the schedulers, in which case that vertex $v$ is handled
via a call to either $\evalf(v)$ or $\setlvl(v,\cdot)$.
More specifically, any rising vertex $v$ always rises to a higher level, either due to getting chosen as a random mate
as part of a call to $\evalf(w)$, for   $w \ne v$, or due to getting chosen by $\rise$; in both cases the rising of $v$ is carried out as part of the subsequent call to $\setlvl(v,\cdot)$.  
Any other active vertex is handled by choosing a random mate, which happens only as part of Procedure $\evalf$; to do so it may either fall to a lower level,
in which case the falling is carried out as part of the subsequent call to $\setlvl$,
or remain at the same level, which for technical reasons will also be viewed as a fall.
Recall that the rising/falling of a vertex $v$ does not end before the respective call to $\setlvl(v,\cdot)$ finishes its execution.
}
It is instructive to view the falling/rising of a vertex $v$ as extending beyond the execution of the respective call to $\setlvl$, to include all operations done prior to this call 
and are needed for its execution (e.g., computing the level to which $v$ should fall/rise) as well as all operations done immediately afterwards  (e.g., matching $v$ to a random mate).   
In this way we may associate each active vertex at any point in time with a call to $\setlvl$, which allows us to view it as either falling or rising.
We may therefore restrict our attention to conflicts that arise when $\rise_\ell$ chooses to raise a vertex $v$ that is already falling or rising.
Furthermore, we only consider falling or rising of $v$ due to Procedure $\evalf$. Indeed, the only way for a vertex to fall/rise outside this procedure is due to $\rise$,
but we have already addressed conflicts between multiple sub-schedulers of $\rise$.
\ignore{
A vertex becomes active only due to one of the schedulers. There is an inherent difference between $\rise$ and the other three schedulers.
While any of the other three schedulers (except $\rise$) may handle a level-$\ell$ vertex only by a sub-scheduler of level $\ell$,
$\rise_\ell$ may only handle vertices of levels lower than $\ell$.
A sub-scheduler of the other three schedulers will make sure to handle a vertex, thus making it active,
only if that vertex has not been chosen by a sub-scheduler of $\rise$, which is necessarily of strictly higher level.
We argue that this restriction implies that no vertex will be chosen to get raised by $\rise$ while it is active, unless it is in motion.
Consider an active vertex $v$ that is not in motion. In particular, the vertex is not handled by $\rise$ (otherwise it would raise to a higher level
and be in motion by definition), but rather by one of the other schedulers.
If $v$ is chosen to get raised by $\rise_\ell$, then its level must be lower than $\ell$.  
Moreover, it must be handled by a sub-scheduler of level lower than $\ell$.
However, the simulation time of $\rise_\ell$ is $T_\ell/ \Delta'$, whereas

In this way we give precedence to $\rise_\ell$ over any other sub-scheduler at level $\le \ell$, but we are still not done,
since a vertex that hasn't been chosen by any sub-scheduler of $\rise$ may get chosen by one of them some time later, while it is active due to some other sub-scheduler.
}

Consider first a vertex $v$ that \emph{rises} due to Procedure $\evalf$.
There is no conflict if $v$ rises to level  $\ge \ell$, as then $\rise_\ell$ need not raise $v$ to level $\ell$;
we   henceforth assume that $v$ rises to   level $\ell' < \ell$.
By the description of Procedure $\evalf$,
this may  occur only if $v$ is chosen as a random mate of some level-$\ell'$ vertex $w$, as part of the call to $\evalf(w)$.
Note that $\rise_\ell$ may choose to raise $v$ to level $\ell$ while $v$ is  rising to level $\ell'$ by the sub-scheduler
executing the call to $\evalf(w)$,
which triggers a conflict between that sub-scheduler and $\rise_\ell$; we refer to such a conflict as a \emph{rising conflict}.
(The process of $v$ rising to level $\ell'$ ends after the call to $\setlvl(v,\ell')$ finishes its execution, when $v$ becomes matched and is no longer considered active.)
By Corollary \ref{threadeval}(2), such a call to Procedure $\evalf$ is executed by a thread of level at least $\ell'$.
Although the   level of that thread may be significantly larger than $\ell'$,
the remark following Lemma \ref{evalftime} implies that the overall runtime of the call to $\evalf(w)$ is bounded by $T_{\ell'}$,
regardless of that thread's level.
To cope with rising conflicts,
we augment the $Active$ list by adding to it, for every level $\ell$, the subsequent vertex that is going to be raised by $\rise_\ell$.
That vertex is computed in advance, i.e., just before $\rise_\ell$ starts raising some vertex, it   computes the subsequent vertex that it is going to raise,
hereafter the \emph{next-in-line} vertex; from now on we add all next-in-line vertices (over all levels) to the $Active$ list, viewing them as active.
Any next-in-line vertex is computed by giving precedence to higher level threads, which implies that both the current and next-in-line level-$\ell$ vertices chosen by $\rise_\ell$ are different than the current and the next-in-line level-$j$ vertices chosen by $\rise_j$ for any $j > \ell$, and thus also for any $j \ne \ell$.
The next lemma shows that this tweak prevents   rising conflicts.
\begin{lemma}
$\rise_\ell$ never raises a vertex $v$ while $v$ is rising as part of a call to $\evalf$.
\end{lemma}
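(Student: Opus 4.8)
$\rise_\ell$ never raises a vertex $v$ while $v$ is rising as part of a call to $\evalf$.

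The plan is to argue by contradiction, leveraging the "next-in-line" mechanism together with the nesting property of threads. Suppose $\rise_\ell$ starts raising $v$ at some update operation $t$, while $v$ is simultaneously rising to some level $\ell' < \ell$ as part of a call to $\evalf(w)$ executed by a thread $\tau$ of level $\hat\ell \ge \ell'$ (by Corollary \ref{threadeval}(2)). Since $v$ was chosen as a random mate of $w$ inside $\evalf(w)$, it was sampled from the pruned set $N'_{\ell'}(w)$, which by construction contains no active vertices; hence $v$ was \emph{not} active at the moment it was sampled. The key point is that $v$ became active only when it was added to the $Active$ list as $w$'s mate, and the process of $v$ rising ends when $\setlvl(v,\ell')$ finishes, i.e., within the overall runtime $T_{\ell'}$ of the call to $\evalf(w)$ (by the remark following Lemma \ref{evalftime}).

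The main step is the following timing argument. Because $\rise_\ell$ computes its next-in-line vertex \emph{before} it starts raising the current one, the vertex $v$ is placed on the $Active$ list as a next-in-line vertex of $\rise_\ell$ at least one full level-$\ell$ thread earlier — i.e., at least $T_\ell/\Delta'$ update operations before $\rise_\ell$ actually begins to raise $v$. So at update operation $t - T_\ell/\Delta'$ or earlier, $v$ is already active (as a next-in-line vertex). But $\evalf(w)$ chose $v$ from a set with no active vertices, so that choice must have occurred \emph{after} $v$ stopped being a next-in-line vertex — that is, after $\rise_\ell$ started raising some earlier vertex and $v$ was promoted from next-in-line to the current vertex. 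At that point $v$ is $\rise_\ell$'s \emph{current} choice, so $\rise_\ell$ will not choose $v$ again while this round is in progress. The remaining possibility is that $\evalf(w)$ picked $v$ at a time when $v$ was neither next-in-line nor current for $\rise_\ell$; but then the entire rising of $v$ via $\evalf(w)$ completes within the next $T_{\ell'} \le T_{\ell-1} < T_\ell/\Delta'$ update operations (since $\ell' < \ell$ and $T_{\ell-1} = T_\ell/\gamma$ while $\Delta' = \Delta\gamma > \gamma$), whereas $v$ cannot become $\rise_\ell$'s current vertex for at least $T_\ell/\Delta'$ update operations after it becomes next-in-line — so $v$ is already matched and no longer active by the time $\rise_\ell$ would reach it. In every case, when $\rise_\ell$ actually begins raising a vertex, that vertex is not in the middle of a rise driven by $\evalf$, which is the claim.

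The hard part is nailing down the quantitative comparison between the window $T_{\ell'}$ during which $v$'s $\evalf$-driven rise is "live" and the gap $T_\ell/\Delta'$ between $v$ becoming next-in-line and $v$ becoming current for $\rise_\ell$; this is where the strict level inequality $\ell' < \ell$ and the synchronization facts from Section \ref{sec22} (the nesting property, the handling of threads in decreasing order of level, and the relation $T_{\ell}/T_{\ell-1} = \gamma$) are essential. One must also check the edge case where $v$'s rise and $\rise_\ell$'s round straddle the same thread boundary, which is precisely what the nesting property rules out: the first execution segment of $\rise_\ell$ following any update operation runs after those of all $\rise_j$ with $j > \ell$ and after the higher-level $\evalf$ threads, so the precedence among next-in-line computations is consistent and $v$'s status (free, next-in-line, current, or matched) is unambiguous at the moment $\rise_\ell$ inspects it.
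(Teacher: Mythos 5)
Your overall strategy is the paper's: at the moment $v$ is sampled inside $\evalf(w)$ it is not active, hence it is neither the current nor the next-in-line vertex of $\rise_\ell$, so $\rise_\ell$ cannot begin raising $v$ until at least one full simulation period $T_\ell/\Delta'$ of update operations after the sampling; one then checks that the $\evalf$-driven rise of $v$ terminates within that window. (The intermediate case you interpose — the sampling occurring after $v$ was ``promoted from next-in-line to the current vertex'' — is vacuous: a vertex currently being raised by $\rise_\ell$ is on the $Active$ list and so is excluded from the sample; this does no harm but muddies the argument.)

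The genuine problem is in the quantitative step you yourself flag as the crux. You claim the rise of $v$ completes within ``$T_{\ell'} \le T_{\ell-1} < T_\ell/\Delta'$ update operations.'' This conflates computation steps with update operations, and the last inequality is false --- in fact it is reversed: $T_{\ell-1} = T_\ell/\gamma$ while $\Delta' = \Theta(\log^6 n/\eps) \gg \gamma = \Theta(\log n)$, so $T_{\ell-1} \gg T_\ell/\Delta'$. (Your own justification ``$\Delta' = \Delta\gamma > \gamma$'' shows $\Delta' > \gamma$, which gives $T_\ell/\Delta' < T_\ell/\gamma = T_{\ell-1}$, the opposite of what you wrote.) The quantity you need is not the overall runtime $T_{\ell'}$ of the call to $\evalf(w)$ but its simulation time, i.e., the number of update operations over which that runtime is spread: the thread executing $\evalf(w)$ simulates at least $\Delta$ computation steps per update operation, so the call finishes within $T_{\ell'}/\Delta$ update operations, and $T_{\ell'}/\Delta \le T_{\ell-1}/\Delta = T_\ell/(\gamma\Delta) = T_\ell/\Delta'$, which is exactly the bound needed to beat the lead time of $\rise_\ell$. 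With that substitution your argument closes and coincides with the paper's proof; as written, the key inequality fails.
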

\begin{proof}
Consider a call to $\evalf(w)$, where $w \ne v$ is a level-$\ell'$ vertex with $\ell' < \ell$, during which $v$ is chosen as the random mate of $w$.
Since the random sampling of a mate excludes active vertices, $v$ is not active when $w$ chooses it as its random mate, thus at that time $v$ is
neither the vertex currently raised by $\rise_\ell$ nor the next-in-line.
The execution thread of $\rise_\ell$ has an overall runtime of $T_\ell$ and a simulation parameter of $\Delta'$, i.e.,  its simulation time is $T_\ell / \Delta'$.
By Lemma \ref{evalftime}, the overall runtime of the call to $\evalf(w)$ is $\le T_{\ell'} \le T_{\ell} / \gamma$, and the simulation parameter of the thread running this call is either
$\Delta'$ or $\Delta = \Delta' / \gamma$.   
Consequently, the number of update operations needed for the thread simulating the call to $\evalf(w)$ to carry out the entire simulation is $\le T_{\ell'} / \Delta$,
which is no greater than the simulation time  $T_\ell / \Delta'$ of  $\rise_\ell$.
It follows that the execution of the call to $\evalf(w)$, and in particular the rising of $v$ that it triggers,  must terminate before
the thread raising the next-in-line vertex of $\rise_\ell$ terminates its execution.  
\QED
\end{proof}

Consider next a vertex $v$ that \emph{falls} due to Procedure $\evalf$, and denote by $\ell_v$ its level just before the fall.
There is no conflict if $v$ falls to level  $\ge \ell$, as then $\rise_\ell$ need not raise $v$ to level $\ell$;
we   henceforth assume that $v$ falls to   level $\ell' < \ell$.
By the description of Procedure $\evalf$,
this may  occur only 
as part of a call to $\evalf(v)$, i.e., where $v$ is the temporarily free vertex handled by this call.
Moreover, this may occur only if $\phi_v(j) < \gamma^{j}$, for any $\ell' + 1 \le j \le \ell_v$.
(It is possible that $\ell' = \ell_v$, but we view this too as a fall.)
Note that $\rise_\ell$ may choose to raise $v$ to level $\ell$ while $v$ is   falling to level $\ell'$ by the sub-scheduler
executing the call to $\evalf(v)$,
which triggers a conflict between that sub-scheduler and $\rise_\ell$; we refer to such a conflict as a \emph{falling conflict}.  
(The process of $v$ falling to level $\ell'$ ends after the call to $\setlvl(v,\ell')$ finishes its execution, when $v$ becomes either matched or free, and is no longer considered temporarily free and active.)
By the remark following Lemma \ref{evalftime}, the overall runtime of the call to $\evalf(v)$ is bounded by $T_{\ell_v}$, but
$\ell_v$ may be much larger than $\ell$. Thus the tweak used for coping with rising conflicts does not work for falling conflicts.
To cope with falling conflicts, we aim at preventing $\rise_j$ from handling $v$  while $v$ is falling, for any $j$.
(We consider an arbitrary level $j$ rather than $\ell$, as we cannot bound $\ell_v$ in terms of $\ell$.)
Concretely, if $v$ is the next-in-line vertex of any  $\rise_j$ just before the execution of $\evalf(v)$, for $j > \ell_v$, then we ignore the call to $\evalf(v)$;
this is fine, since the sub-scheduler among those of highest level will handle $v$ in the sequel.
We henceforth assume that $v$ is not the next-in-line vertex of any
$\rise_j$, for $j > \ell_v$. Since the overall runtime of the call to $\evalf(v)$ is $\le T_{\ell_v}$ and as
the simulation parameter of the thread running this call is either
$\Delta'$ or $\Delta = \Delta' / \gamma$,
 the number of update operations needed for this thread to carry out the entire simulation   is $\le T_{\ell_v} / \Delta$,
which is no greater than the simulation time  $T_j / \Delta'$ of  $\rise_j$, for any $j > \ell_v$.
It follows that no sub-scheduler $\rise_j$ raises $v$ while $v$ is falling as part of the call to $\evalf(v)$, for   $j > \ell_v$.

For levels $j \le \ell_v$, however,
some sub-schedulers $\rise_j$ may try to raise $v$ while $v$ is falling (as part of the call to $\evalf(v)$).
We thus prevent any  $\rise_j$ from raising $v$ to level $j$ while $v$ is falling, for any $j \le \ell_v$.
While this tweak guarantees that there are no falling conflicts, it tampers with the regular operation of  $\rise$,
which may lead to a violation of Invariant \ref{risei}.
Indeed, since the falling of $v$ 
is simulated over multiple update operations, while initially we have $\phi_v(j) < \gamma^{j}$, for any $\ell' + 1 \le j \le \ell_v$, some of these $\phi_v(j)$ values may grow during the fall, possibly beyond the threshold $\gamma^j \cdot O(\log^2 n)$ required by Invariant \ref{risei}.
To address the potential violation to the invariant at levels up to $\ell_v$, we slightly modify Procedure $\evalf(v)$
starting after the execution of Procedure $\setlvl(v,\ell')$ ends, so as to raise $v$ to the highest violating level $\ell^*$.
(In what follows we assume that there is a violation, otherwise we are done.)
In particular, the sample space that consists of all neighbors of $v$ with level $< \ell^*$ is large, and so we can choose a random mate for $v$
from $\gamma^{\ell^*}$ options, and continue similarly to before.
It is instructive to view the concatenation of the original fall of $v$ to level $\ell'$ and the subsequent rise to level $\ell^*$
as a single fall of $v$ from level $\ell_v$ to level $\ell^*$; by doing so, all statements of Section \ref{behavior} remain valid unchanged.
Procedure $\evalf(v)$ should no longer create a matched edge at level $\ell'$,
but rather at level $\ell^*$, hence the rest of the procedure's execution should be modified accordingly.
The details of this modification follow similar lines as in the original procedure, and are thus omitted.
Observe that the modified part of the procedure (starting after the fall of $v$ to level $\ell'$) does not handle any vertex of level $> \ell^*$.
This observation is immediate for the current recursion level, whereas for subsequent   levels
it follows from Lemma \ref{falling}, as we proceed recursively with a vertex of level $< \ell^*$.
Hence, using similar lines as in the proof of Lemma \ref{evalftime}, the runtime of the modified part of the procedure is  $O(\gamma^{\ell^*} \cdot \log^4 n)$,
which we may assume is $\le T_{\ell^*}$.
By Lemma \ref{evalftime}, the runtime of the original call to $\evalf(v)$ is $O(\gamma^{\ell_v} \cdot \log^4 n)$.
Since the runtime of the modified procedure exceeds the original   by at most an additive factor of $O(\gamma^{\ell^*} \cdot \log^4 n)$, where  $\ell^* \le \ell_v$,
it is also $O(\gamma^{\ell_v} \cdot \log^4 n)$, which we may assume is $\le T_{\ell_v}$.
\ignore{
\begin{enumerate}
\item
\emph{Check if $v$ should raise to level $\ell^* > \ell'$.}
We want to compute the highest level $\ell^*, \ell'+1 \le \ell^* \le \ell_v$, where $\phi_v(\ell^*) \ge \gamma^{\ell^*}$, as well as the vertex set $N'_v(\ell^*)$
(obtained from $N_v(\ell^*)$ after pruning all active vertices). To this end we compute all ... [[S: need to continue tomorrow]]
At the end of this process, which is simulated over multiple update operations,
we need to authenticate  the values $\phi_v(j)$ and vertex sets $N_v(j)$, for all $\ell'+1 \le j \le \ell_v$,
and assuming only then we sample a random mate $w$ for $v$ (the previously sampled vertex is ignored) from $N'_v(\ell^*)$, making sure not to sample from more than $\gamma^{\ell^*}$ vertices; the overall time required for these steps is $O(\log^2 n)$, thus it can be computed within the time reserved for a single update operation, namely, either $\Delta$ or $\Delta'$ depending on the thread executing the call $\evalf(v)$, during which the $Active$ list remains intact.
[[S: if $\ell^* = \ell'$, nothing to do, skip next steps...]]
\item {Raise $v$ to level $\ell^*$.} Call to $\setlvl(v,\ell^*)$.
\item
\emph{Match $v$ at level $\ell^*$.}
%
Similarly to the original procedure, first delete the old matched edge $(w,w')$ on $w$ (if exists), next   let $v$ and $w$ rise to level $\ell^*$
by calling to $\setlvl(v,\ell^*)$ and $\setlvl(w,\ell^*)$, then match $v$ with $w$, thus creating a new level-$\ell^*$ matched edge,
and finally make a recursive call to $\evalf(w')$ (if $w$ was previously matched to $w'$).
\end{enumerate}
runtimes of the subsequent calls to $\setlvl(v,\ell^*)$ and $\setlvl(w,\ell^*)$, which by Lemma \ref{setlevelTime} are
$O((\phi_v(\tilde \ell^*+1) + \log n) \cdot \log n)$ and $O((\phi_w(\tilde \ell+1) + \log n) \cdot \log n)$, respectively,
both are bounded by $O(\gamma^{\ell^*} \cdot \log^4 n)$; the former bound follows from the definition of $\ell^*$ and the latter follows from Invariant \ref{risei} for $w$.
Deleting the old matched edge on $w$ and creating the new one $(v,w)$ takes constant time.
Finally, noting that the level of $w'$ is at most $\ell^*-1$ and using the notation used in the proof of Lemma \ref{evalftime}, the time needed by the recursive call $\evalf(w')$ is bounded by $H_{\ell^*-1}$.
It follows that the runtime of the modified part of Procedure $\evalf$, namely, starting after the fall of $v$ to level $\ell'$ and described above,
is bounded by $O(\gamma^{\ell^*} \cdot \log^4 n) + H_{\ell^*-1} = O(\gamma^{\ell^*} \cdot \log^4 n)$, which we may assume is $\le T_{\ell^*}$.
}

To complete the argument, we argue that no sub-scheduler $\rise_j$ \emph{needs} to raise $v$, for any level $j$, while $v$ is rising from level $\ell'$ to level $\ell^*$.
For any level $j \le \ell^*$, there is  no need for $\rise_j$ to raise $v$ to level $j$.
For any level  $j > \ell_v$, we proved that no sub-scheduler $\rise_j$ tries to raise $v$ during the entire execution of the call to $\evalf(v)$.
Although this assertion was proved for the original procedure, the same argument carries over for the modified procedure,
since the overall runtime of the call to $\evalf(v)$ remains bounded by $T_{\ell_v}$.
It is left to consider levels $j \in \{\ell^* + 1, \ldots,\ell_v\}$. Just before raising $v$ from level $\ell'$ to level $\ell^*$,
we have $\phi_v(j) < \gamma^{j}$, for any $\ell^* + 1 \le j \le \ell_v$.
These $\phi_v$ values may grow during the execution of the modified part of Procedure $\evalf$ (starting after the fall of $v$ to level $\ell'$).
We showed that the runtime of the modified part of the procedure is $\le T_{\ell^*}$.
Since the simulation parameter of the thread running this call to $\evalf(v)$ is either $\Delta'$ or $\Delta = \Delta' / \gamma$,
the number of update operations needed for the thread simulating this call to carry out the entire simulation of the modified part
is $\le T_{\ell^*} / \Delta = T_{\ell^*+1} / \Delta'$, which is no greater than the simulation time $T_j / \Delta'$ of any sub-scheduler $\rise_j$, for $j \ge \ell^*+1$.
Following   the same lines as in the proof of Lemma \ref{lem2finale},   each $\phi_v(j)$ value may grow by at most
$O(\gamma^j \cdot \log n)$ units within $T_j / \Delta'$ update operations, for any $j \ge \ell^*+1$. Summarizing, all these $\phi_v(j)$ values may not exceed $\gamma^j + O(\gamma^j \cdot \log n) = O(\gamma^j \cdot \log n)$
while $v$ is rising from level $\ell'$ to level $\ell^*$; thus although these values grow, they are still reasonably small.
Switching to the balls and bins game terminology of Section \ref{sec33},
the upper bound $k' = k'_j$ on the initial number of balls in a bin will no longer be bounded by $\gamma^j$, but rather by $O(\gamma^j \cdot \log n)$,
in the balls and bins game played by each $\rise_j$, for  $j \in \{\ell^* + 1, \ldots,\ell_v\}$.
This tweak does not change the outcome of the game by the remark following Lemma \ref{lemfinale},
which means that $\rise_j$  need not remove bins with less than $O(\gamma^j \cdot \log n)$ balls for winning the game.
In other words,  no sub-scheduler $\rise_j$ needs to raise $v$ while $v$ is rising from level $\ell'$ to level $\ell^*$ for Invariant \ref{risei} to hold, for any $j \in \{\ell^* + 1, \ldots,\ell_v\}$.

\ignore{
\newpage
If this is the case, there is no risk of violating Invariant \ref{risei}, thus we continue the execution of the call to $\evalf(v)$ as usual.
Otherwise, $Rise(v) \cup Rise'(v)$ is non-empty, and then we do not continue with the rest of the execution of $\evalf(v)$, but rather
raise $v$ to the highest level $\ell^*$ in $Rise(v) \cup Rise'(v)$ by making the call to $\setlvl(v,\ell^*)$, and then make a new (recursive) call to $\evalf(v)$.
Since $\ell^* < \ell_v$, the outdegree of the vertex being handled by Procedure $\evalf$ reduces by at least one, as before.
In particular, the recurrence $H_\ell \le O(\gamma^{\ell} \cdot \log^4 n) + H_{\ell-1}$ obtained in the proof of Lemma \ref{evalftime}, which analyzes the runtime
of this procedure, remains valid.
%

after the call to $\setlvl(v,\ell')$


first make sure not to execute Procedure $\evalf(v)$ if $v$ is the next-in-line vertex to be raised by any $\rise_j$ sub-scheduler, for $j > \ell_v$.

augment the $Active$ list by adding to it, for every level $\ell$, the subsequent vertex that is going to be raised by $\rise_\ell$.
That vertex is computed in advance, i.e., just before $\rise_\ell$ starts raising some vertex, it   computes the subsequent vertex that it is going to raise,
hereafter the \emph{next-in-line} vertex; from now on we add all next-in-line vertices (over all levels) to the $Active$ list, viewing them as active.

I think re the fall, we first proceed as in my alg, but then before we match the fallen vx with the random mate (that was sampled before the fall),
we check whether the fallen vx needs to rise, either bcs it's the next-in-line or bcs the phi values are at least gamma to the level of some level,
and in this case we don't match the fallen vs with the random mate. Instead, we let the fallen vx rise. This rise is either by the rise scheduler (if the vs
is the next-in-line in some level), in which case we just abort, or this rise is going to occur as part of this procedure by setting the level of the vx
to the highest level in which the number of vs is gamma to that level -- before we're calling to set level we're doing authenticationg and sampling, just like we did
before. The difference is that after we're rising, we're not going to have to rise again (and definitely not fall), that's the interesting part.
The reason we're not going to rise again is bcs the only reason we may have to rise is bcs we don't want some other higher level rise scheduler to raise the guy
further, but this won't happen bcs the guy is not the next-in-line in any level. We also need to argue that this will not happen in any level higher than the original
level of the fallen vertex (before it started falling), but that is sort of obvious, otherwise we wouldn't fall in the first place. I guess this means that before
we start handling the fallen vx we need to make sure that no rise scheduler at higher levels has it as the next-in-line vx.
}


\section{Proof of (Almost-)Maximality} \label{sec:almost}
To prove that the matching is always almost-maximal,
we need to show that the number of temporarily free vertices at any point in time is just an $\eps$-fraction of the number of matched edges.

At any point in time, at most $O(\log n)$ threads are running.
Each of these threads may cause only $O(1)$ vertices to become temporarily free throughout its execution.
(This assertion holds trivially for all threads, except for those that run Procedure $\evalf$, but for those the assertion follows from Corollary \ref{boundactive}.)
Moreover, by the time any such thread terminates its execution, none of these vertices remains temporarily free, i.e., some of these vertices become matched
while the others become free.
It follows that the update algorithm causes only $O(\log n)$ vertices to be temporarily free at any point in time,
and all these vertices are being taken care of by appropriate threads of the algorithm.

As shown in Section \ref{app:small}, one can maintain a maximal matching easily and efficiently when the matching size is small,
and   we may henceforth assume that the matching is sufficiently large. (A formal case analysis is provided after Lemma \ref{whp}.)
''Losing'' $O(\log n)$ vertices due to the update algorithm is thus negligible, 
and our challenge is to bound the number of temporarily free vertices due to the adversary.

Consider first temporarily free vertices at levels $\ell$ with $T_\ell / \Delta < 1$.
Note that the simulation time of a level-$\ell$ thread is either $T_\ell / \Delta$ or $T_\ell / \Delta'$, and so
the number of update operations needed for simulating the execution of any level-$\ell$ thread is smaller than 1.
Thus any of the level-$\ell$ threads finishes its execution without simulating it over subsequent update operations, which means that
following an adversarial edge deletion that deletes a matched edge $(u,v)$, the relevant thread
(which executes the   calls $\evalf(u)$ and $\evalf(v)$) terminates its execution before the adversary  makes the next edge deletion.
Thus there are no temporarily free vertices due to the adversary at any such level.   

Next, we consider temporarily free vertices at the remaining levels, i.e., with $T_\ell / \Delta \ge 1$.
For any such level, Invariant \ref{samp:in} holds,
thus the samples of all level-$\ell$ edges always contain at least $(1-2\eps) \cdot \gamma^{\ell}$ edges.
\vspace{7pt}
\\
\noindent
{\bf The offline model.~}
For levels with $T_\ell / \Delta \ge 1$, all   samples are almost full, so the adversary will not delete any matched edge throughout the entire update sequence.
Thus all $\log_\gamma(n-1) + 1$ queues $Q_0,Q_1,\ldots,Q_{\log_{\gamma}(n-1)}$ of temporarily free vertices are always    empty,
and at any time all $O(\log n)$ vertices that are currently temporarily free are being taken care of by appropriate threads of the algorithm.
Assuming the matching is of size at least $\Omega(\log n / \eps)$, it must be  almost-maximal (deterministically).
\vspace{7pt}
\\
\noindent
{\bf The standard oblivious adversarial model.~}
In what follows we consider the oblivious adversarial model, and show that the adversary is not likely to cause too many vertices to become temporarily free.  
As explained in Section \ref{short:app}, we may assume that the length of the update sequence is $O(n^2)$.

Any matched edge is created by the algorithm by first determining its level, and only then performing the   sampling.
If the edge is matched at level $\ell$,  it is chosen uniformly at random from  between $(1-\eps) \cdot \gamma^{\ell}$ and $\gamma^{\ell}$ edges.
We   thus  fix some level $\ell$ with $T_\ell / \Delta \ge 1$, and focus on the matched edges at that level.

Consider any time step $t$, and let $V_t$ be the set of vertices at level $\ell$ at time $t$, excluding vertices that are temporarily free due to the update algorithm.
Let $A_t = A'_t \cap A''_t$,
where $A'_t$ is the event that $|V_t|  \ge c'(\log^4 n /\eps^3)$,
$A''_t$ is the event that at least a $(c''\cdot \eps)$-fraction of the vertices of $V_t$ are temporarily free due to the adversary at time $t$,
and $c'$ and $c''$ are sufficiently large constants.  
The following lemma is central in our proof of the almost-maxiamlity guarantee, and its proof is provided in Sections \ref{highlight}--\ref{cont}.
Then in Section \ref{conclude} we derive the almost-maximality guarantee as a corollary of this lemma.
\begin{lemma} \label{whp}
$\Prob(A_t) = O(n^{-c+2})$, for some constant $c$ that depends on $c'$ and $c''$.
\end{lemma}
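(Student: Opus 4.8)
The plan is to fix a level $\ell$ with $T_\ell/\Delta \ge 1$ and a time step $t$, condition on the event $A'_t$ (so $|V_t| \ge c'(\log^4 n/\eps^3)$ is large), and then bound the conditional probability that an $(c''\eps)$-fraction of the vertices of $V_t$ are temporarily free due to the adversary at time $t$. The key structural fact to exploit is Invariant \ref{samp:in}: every level-$\ell$ matched edge $e$ alive at time $t$ has remaining sample $|S_t(e)| > (1-2\eps)\gamma^\ell$, and $e$ was originally chosen uniformly at random from its original sample $S^*(e)$ of size between $(1-\eps)\gamma^\ell$ and $\gamma^\ell$. A vertex of $V_t$ becomes temporarily free due to the adversary at time $t$ only if the adversary has just deleted the matched edge incident to it; so I want to say that, for each level-$\ell$ matched edge, the probability the adversary has destroyed it by time $t$ is $O(\eps)$ (because at any moment the edge is still a uniformly random element of a remaining sample of size $\ge (1-2\eps)\gamma^\ell$, and the adversary has deleted at most $O(\eps\gamma^\ell)$ of that sample's edges at the moments relevant to this level), and then apply a concentration bound over the (many) level-$\ell$ edges to conclude that it is exponentially unlikely that more than a $(c''\eps)$-fraction are destroyed.

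The subtlety — and the main obstacle — is that the matched edges are \emph{not} independent: the algorithm's behavior (which edge gets matched where, which sample is chosen) depends on the adversary's sequence and on past random choices, and the adversary is oblivious but the algorithm's choices interleave in complicated ways. This is exactly why the paper introduces the $\shuffle$ scheduler. So the real argument has two ingredients. First (the "static" ingredient), using only Invariant \ref{samp:in}, one argues that \emph{in expectation} only an $O(\eps)$-fraction of level-$\ell$ matched edges are destroyed by the adversary at time $t$: for each such edge, at the moment it was created it is a uniformly random element of a set of $\ge (1-\eps)\gamma^\ell$ edges, and the identity of the edges the adversary will delete before hitting it is fixed (obliviousness), with only an $O(\eps)$-fraction of the sample "exposed" to deletion by the time the invariant would be violated; hence the edge survives with probability $1-O(\eps)$. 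Second (the "concentration" ingredient), the random shuffling performed by $\shuffle$ — which the paper established (Section \ref{shuffling}) works faster than $\unmatch$ and delivers a near-uniform random permutation of the matched edges — is used to replace this expectation bound by a high-probability bound: because the matched edges at level $\ell$ are, up to the near-uniform shuffle, exchangeable and their "destroyed-by-adversary" indicators are (nearly) negatively associated, one can apply a Chernoff/Azuma-type bound to the sum $\sum_{e} \mathbf{1}[e \text{ destroyed at } t]$ and get that the deviation above its $O(\eps)|V_t|$ mean by a constant factor has probability $\exp(-\Omega(\eps|V_t|))$.

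Concretely I would proceed as follows. Step 1: reduce to a single level $\ell$ and a single time $t$ (a union bound over $O(\log n)$ levels and $O(n^2)$ time steps costs only a polynomial factor, which is absorbed into the $n^{-c+2}$ in the statement, and explains the "$+2$"). Step 2: set up the probability space carefully — fix the adversary's (oblivious) update sequence, and let the randomness be the algorithm's sampling choices plus the $\shuffle$ choices; define for each level-$\ell$ matched edge $e$ present in the relevant window the indicator $X_e$ that $e$ is the edge the adversary deletes to create a temporarily-free level-$\ell$ vertex counted in $A''_t$. Step 3: prove the single-edge bound $\Prob(X_e = 1) = O(\eps)$ using Invariant \ref{samp:in} and obliviousness — this is where I track that, since $\unmatch$ keeps $|S_t(e)| > (1-2\eps)\gamma^\ell$ and the original sample had size $\ge(1-\eps)\gamma^\ell$, at most an $O(\eps)$-fraction of $e$'s sample has ever been eligible to be "the one the adversary deletes next". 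Step 4: upgrade to concentration — condition on the multiset of level-$\ell$ matched edges and their samples (which fixes $V_t$ and hence $A'_t$), invoke the near-uniform shuffling guarantee to argue the $X_e$'s are close to a negatively-associated family with the right marginals, and apply a Chernoff bound to get $\Prob(\sum_e X_e \ge c''\eps|V_t| \mid A'_t) \le \exp(-\Omega(\eps \cdot c'(\log^4 n/\eps^3))) = \exp(-\Omega((\log^4 n)/\eps^2)) = n^{-\omega(1)}$, which is certainly $O(n^{-c})$ for any desired $c$ by choosing $c'$ large. Step 5: combine with the union bound of Step 1 to conclude $\Prob(A_t) = O(n^{-c+2})$. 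The hard part is Step 4: making the dependence on $\shuffle$ rigorous — i.e. quantifying "near-uniform shuffling" so that it actually licenses a Chernoff-type bound on a sum of dependent indicators — is the delicate technical core, and I expect the paper to devote Sections \ref{highlight}--\ref{cont} precisely to carrying out this reduction (defining the right martingale or the right negatively-associated coupling), with the $\unmatch$ invariant supplying the marginals and the $\shuffle$ speed advantage supplying the (near-)independence.
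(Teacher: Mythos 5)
Your proposal captures the first ingredient of the paper's argument correctly: the definition of a ``bad'' edge (one of the first $O(\eps\gamma^\ell)$ edges of its original sample in the adversary's deletion order), the observation that Invariant \ref{samp:in} implies only bad edges can be hit while matched, the per-edge bound $O(\eps)$ from obliviousness, and the recognition that $\shuffle$ exists to break the dependence on past coin flips. However, there are two genuine gaps. First, your Step 4 is not a proof: you defer the entire shuffling analysis to an unspecified ``negative association / near-uniform permutation'' coupling, which is not what the paper does and is not obviously realizable. The paper's Section \ref{shuffling} instead sets up an explicit three-player game ($\adder$ adds edges that are bad w.p.\ $\le\varepsilon/2$, $\shuf$ deletes a uniformly random edge, $\mal$ adversarially deletes only \emph{good} edges to model the worst case of the algorithm's own unmatchings) and proves by an induction over dyadic halvings of the number of edges --- crucially using that $\shuf$ is faster than $\mal$ by a $\Theta(\log n)$ factor (Lemma \ref{faster}) --- that the fraction of bad edges never exceeds $e\varepsilon$. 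That inductive potential argument is the technical core, and your proposal does not supply a substitute for it.

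Second, and independently, you never connect ``edges hit by the adversary'' to the actual event $A''_t$, which concerns vertices that are \emph{still} temporarily free at time $t$, i.e.\ still sitting in the queue $Q_\ell$. The paper's Section \ref{cont} handles this by decomposing over $B_{t'}$, the last time $t'\le t$ at which $Q_\ell$ was empty (it is this sum over $O(n^2)$ choices of $t'$ that produces the $+2$ in the exponent --- not a union bound over $t$ and levels, which is applied only later in Section \ref{conclude}); by splitting the $\eta$ relevant edges into those present at $t'$ (shuffling game) versus those created in $[t',t]$ (fresh coins, direct Chernoff); by crediting $\temp_\ell$ with removing $x=\lfloor(t-t')/(T_\ell/\Delta')\rfloor$ vertices from the queue during the interval; and by bounding the number of level-$\ell$ matched edges the algorithm itself deletes in $[t',t]$ by $O(x+\log n)$ so that $|V_t|$ can be related to $\eta$. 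Without this scheduler-rate accounting, a bound on the fraction of edges ever hit does not bound $|Q^{(t)}_\ell|/|V_t|$, so the proposal as written does not establish the lemma.
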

\subsection{Proof of Lemma \ref{whp}, part I: Overview and challenges} \label{highlight}
Recall that any matched edge $e$ is sampled uniformly at random from between $(1-\eps) \cdot \gamma^{\ell}$ and $\gamma^{\ell}$ edges.
Consider the edges of the sample $S^*(e)$ of $e$ in the order they are deleted by the adversary,
\emph{even after the edge is removed from the matching}, either by the adversary or by the algorithm.
A matched edge is called \emph{bad} if it is one of the first (at most) $2\eps \cdot \gamma^{\ell}$ edges in this ordering; otherwise it is good.
Invariant \ref{samp:in} guarantees that the samples of all level-$\ell$ edges always contain $\ge (1-2\eps) \cdot \gamma^{\ell}$ edges,
so at most $2\eps \cdot \gamma^{\ell}$ edges are deleted from the sample of each matched edge (while it is matched).
It follows that a good edge cannot get deleted by the adversary while it is matched (hereafter, get \emph{hit});
a bad edge gets hit, unless it is removed from the matching by the algorithm prior to the adversarial deletion of that edge.

The probability of an edge to be bad is $\le \frac{2\eps \cdot \gamma^{\ell}} {(1-\eps) \cdot \gamma^{\ell}}$, which is at most $4\eps$ for all $\eps < 1/2$.
Our argument, alas, is not applied on all matched edges created since the algorithm's outset, but rather on a subset of edges that are matched at a certain time step $t'$,
and there are dependencies on previous coin flips of our algorithm, which are the result of edges being removed from the matching by the update algorithm itself (not the adversary).
Indeed, given that some edge $e$ is matched at time $t'$, the sample of $e$ may be significantly reduced,
which could increase the probability of $e$ being bad.
To overcome this hurdle, we use  $\shuffle$ to show that the fraction of bad edges at any point in time is $O(\eps)$.
To this end, we apply a game, hereafter the \emph{shuffling game},  
where in each step a single edge is either added or deleted (starting with no edges) by the following players:
(1) $\adder$: adds an edge, which is bad w.p.\ $\le 4\eps$, (2) $\shuf$: deletes an edge uniformly at random among the existing edges,
(3) $\mal$: deletes a good edge.
A newly created matched edge is bad w.p.\ $\le 4\eps$, thus $\adder$ assumes the role of creating matched edges by the algorithm,
and so an $O(\eps)$-fraction of the matched edges created by $\adder$ are bad w.h.p.
(In this overview we assume that the number of edges added and deleted is sufficiently large to obtain high probability bounds,
but this assumption is formally justified in Section  \ref{justify}.)
$\shuf$ assumes the role of $\shuffle$ in the algorithm, deleting matched edges uniformly at random.
If the fraction of bad edges during some time interval is $\Theta(\eps)$, the fraction of bad edges deleted by $\shuf$ in this interval is $\Theta(\eps)$ w.h.p.,
hence $\shuf$ does not change the fraction of bad edges by too much.
The role of $\mal$ is not to model the exact behavior of the other (non-shuffled) parts of the algorithm that delete matched edges, but rather to capture the worst-case scenario that might happen. We show that even if the other (non-shuffled) parts were to delete only good edges, the fraction of bad edges would be $O(\eps)$, just as if $\mal$ were not in the game.
Specifically, we prove that the fraction of bad edges at any time step $t'$ is w.h.p.\ $O(\eps)$.
This proof, provided in Section \ref{shuffling}, is nontrivial and makes critical use of the fact (see Lemma \ref{faster}) that $\shuf$ is faster than $\mal$ by a logarithmic factor.


Equipped with this bound on the fraction of bad edges at any time step,
we consider the last time   $t'$ prior to $t$ in which the queue $Q_\ell$ of temporarily free level-$\ell$ vertices is empty, i.e., $Q_\ell$ is non-empty in the entire time interval $[t'+1,t]$, which means that $\temp_\ell$ is never idle during that time.
We need to bound the fraction of bad edges not only among the ones matched at time $t'$, but also among those that get matched between times $t'$ and $t$.
The fraction of bad edges among those matched at time $t'$ is $O(\eps)$ w.h.p.\ by the shuffling game; as for
those that get created later on, there is no dependency on coin flips that the algorithm made prior to time $t'$, and so the probability of any of those edges to
be bad is $\le 4\eps$, independently of whether previously created matched edges are bad, and by Chernoff we get that the fraction of bad edges among them is also $O(\eps)$ w.h.p.
The formal proof for this bound on the fraction of bad edges among those is provided in Section \ref{cont}, and it implies that only an $O(\eps)$-fraction of   those edges may get hit w.h.p., and thus get into the queue.
This bound, however, does not suffice to argue that the number of  vertices in $Q_\ell$ at time $t$ is an $O(\eps)$-fraction of the matching size, due to edges that get deleted from the matching by the algorithm itself. Since $\temp_\ell$ is no slower than the other sub-schedulers (as its simulation parameter is $\Delta'$),
we show in Section \ref{cont} that it removes vertices from $Q_\ell$ in the interval $[t'+1,t]$ at least at the same (up to a constant) rate   as  matched edges get deleted by the algorithm.
By formalizing  these assertions and carefully combining them, we conclude with the required result.

\subsection{Proof of Lemma \ref{whp}, part II: The shuffling game} \label{shuffling}
In this section we describe and analyze a game, hereafter the \emph{shuffling game}, that is used in the proof of Lemma \ref{whp}.
We first study the game under simplifying assumptions (Section \ref{simplify}), and then (Section \ref{justify}) justify those assumptions.

Let $\varepsilon$ be a parameter that is larger than $\eps$ by a sufficiently large constant; specifically, we may set $\varepsilon = 8\epsilon$.
Although there is no need to use this additional parameter $\varepsilon$ and one may simply use $4\epsilon$ instead of it, the usage of this parameter helps
to separate the analysis of the shuffling game from the rest of
the proof of Lemma \ref{whp}, and more generally, from the rest of the almost-maximality argument of Section \ref{sec:almost}.

At the game's outset there are no edges.
Edges are   added by $\adder$, where each added edge is bad with probability $\le \varepsilon/2$.
Indeed, recall that any newly created matched edge is bad w.p.\ $\le 4\eps = \varepsilon/2$, thus $\adder$ assumes the role of creating matched edges by the algorithm.
Edges are   deleted by both $\shuf$ and $\mal$, where each edge deleted by $\shuf$ is chosen uniformly at random among all existing edges,
thus it assumes the role of $\shuffle$ in the algorithm,
and $\mal$ deletes only good edges.
The role of $\mal$ is not to model the exact behavior of the other (non-shuffled) parts of the algorithm that delete matched edges, which is indeed quite challenging, but rather to capture the worst-case scenario that might happen. Indeed, observe that the fraction of bad edges is maximized at the (extremely unlikely) scenario that all non-shuffled parts of the algorithm always delete good edges.

$\shuf$ deletes edges at a rate that is faster than $\mal$ by a logarithmic factor.
To be more accurate, we will show in Section \ref{justify} that $\shuf$ is faster than $\mal$ by (at least) a logarithmic factor, but only for sufficiently long time intervals,
and also show there how to cope with short time intervals.
We ignore this technicality in Section \ref{simplify}. 
If $x'$ denotes all deleted edges within a (sufficiently long) time interval and $x$ denotes the number of edges deleted by $\shuf$ therein, then we have $x' = x(1+ O(1/\log n))$.
As explained in Section \ref{justify} (see the remark following Lemma \ref{faster}), the constant hiding in this $O$-notation can be made as small as needed.

\subsubsection{Analysis under simplifying assumptions} \label{simplify}
In this section we study a deterministic version of this game, and in Section \ref{justify} adjust it using Chernoff bounds.
Specifically, we assume that the fraction of bad edges added by $\adder$ is $\varepsilon/2$,
and if the fraction of bad edges is lower bounded by $\varphi$ throughout any sequence of consecutive edge deletions of $\shuf$, for any parameter $\varphi \ge \varepsilon$,
we assume that the fraction of bad edges deleted by $\shuf$ among those deleted edges is at least $\varphi$.
Our goal is to show that at any point in time, the fraction of bad edges is at most $e \varepsilon$.

Fix any time step $t^*$. We next show that the fraction of bad edges at time $t^*$ is at most $e \varepsilon$.
We may assume that the fraction of bad edges at time $t^*$ exceeds $\varepsilon$, otherwise we are done.
Denoting by $t$ the last point in time before $t^*$ where the fraction is $\varepsilon$,
we thus have $t < t^*$,  and the fraction of bad edges is at least $\varepsilon$ in the entire time interval $[t,t^*]$.
For any time $\tau \in [t,t^*]$, denote the number of edge insertions (by $\adder$) during the time interval $[t,\tau]$ by $y_\tau$ and the number of edge deletions (by $\mal + \shuf$) by $x'_\tau$,
where $x_\tau$ of them are due to $\shuf$, and we have $x'_\tau = x_\tau(1+O(1/\log n))$;
for concreteness, we take the constant hiding in this $O$-notation to be 1, i.e., we have $x'_\tau = x_\tau(1+1/\log n)$

Denote the number of   edges at time $t$ by $k$, and note that the number of bad edges then is $\varepsilon \cdot k$.
The number of bad edges $k^{bad}_\tau$ at time $\tau$ is bounded from above by $\varepsilon k + \frac{\varepsilon}2 y_\tau - \varepsilon x_\tau$, and the total number of edges at that time is $k_\tau := k + y_\tau - x'_\tau$.
It is easy to verify that for any $\tau$ such that $y_\tau \ge x'_\tau$, we have
\begin{equation} \label{basicgoal}
\varepsilon k + \frac{\varepsilon}2 y_\tau - \varepsilon x_\tau ~\le~  e \varepsilon (k + y_\tau - x'_\tau).
\end{equation}
Hence if $y_{t^*} \ge x'_{t^*}$, the  fraction of bad edges at time $t^*$ is bounded from above by $e \varepsilon$, and we are done.

In what follows we assume that $y_{t^*} < x'_{t^*}$, hence $k_{t^*} < k$.
We argue  that for any $\tau$ such that $k_{\tau} \ge k/2$ and for any parameter $\varphi \ge \varepsilon$,
\begin{equation} \label{basis}
\varphi k + \frac{\varepsilon}2 y_\tau - \varphi x_\tau ~\le~ (1+1/\log n) \cdot \varphi (k + y_\tau - x'_\tau).
\end{equation}
As $\varphi \ge \varepsilon$, it suffices to prove that
$\varphi k + \frac{\varphi}2 y_\tau - \varphi x_\tau ~\le~ (1+1/\log n) \cdot \varphi (k + y_\tau - x'_\tau)$, or equivalently:
\begin{equation} \label{eqtau1}
0 ~\le~ y_\tau / 2 + \left(y_\tau/\log n + k/\log n \right) + \left(x_\tau - x'_\tau - x'_\tau/\log n\right).
\end{equation}
Noting that $y_\tau + k = x'_\tau + k_{\tau}$ and $x_\tau - x'_\tau - x'_\tau/\log n \ge -2x'_\tau/\log n$, it follows that the right-hand side of Equation (\ref{eqtau1})
is at least $y_\tau/2 + k_{\tau}/\log n - x'_\tau / \log n \ge (y_\tau + k_{\tau} - x'_\tau)/\log n$. Since $y_\tau + k = x'_\tau + k_{\tau}$ and $k_{\tau} \ge k/2$, we have
$(y_\tau + k_{\tau} - x'_\tau)  \ge  2k_{\tau} - k \ge 0$, thus proving that the right-hand side of  Equation (\ref{eqtau1})
is non-negative, which, in turn, proves the validity of Equation (\ref{basis}).

Let $i \ge 0$ be the index such that $k/k_{t^*} \in (2^i,2^{i+1}]$.
We next prove by induction on $i$ that, assuming the fraction of bad edges never goes below $\rho$ in the time interval $[t,t^*]$, for any parameter $\rho \ge  \varepsilon$,
the fraction of bad edges at time $t^*$ is at most $(1+1/\log n)^i \cdot \rho$.
Since $i \le \log k$ and $k \le n/2$ and as the fraction of bad edges never goes below $\varepsilon$,   this inductive statement for $\rho = \varepsilon$ yields an upper bound of $e \varepsilon$ on this fraction.

The basis $i=0$ of the induction follows from Equation (\ref{basis}) for $\tau = t^*$ and $\varphi = \varepsilon$. (To apply this equation, we rely on the fact that the fraction of bad edges is lower bounded by $\varepsilon$ in the   time interval $[t,t^*]$.)

For the induction step, we assume that the statement holds for $i-1$, $i \ge 1$, and prove it for $i$.
Since $i\ge 1$, we know that $k_{t^*} < k/2$. Let $t_1 > t$ be the last point in time before $t^*$ where the number of   edges is $k/2$.
Note that the number of   edges is upper bounded by $k/2$ in the time interval $[t_1,t^*]$.
Since there are $k/2$ edges at time $t_1$ and as the fraction of bad edges is lower bounded by $\varepsilon$ in the time interval $[t,t_1]$,
Equation (\ref{basis})  for $\tau = t_1$ and $\varphi = \varepsilon$ yields
$\varepsilon k + \frac{\varepsilon}2 y_{t_1} - \varepsilon x_{t_1} ~\le~ (1+1/\log n) \cdot \varepsilon (k + y_{t_1} - x'_{t_1})$.
In other words, the fraction $\varepsilon_1$ of bad edges at time $t_1$ satisfies $\varepsilon_1 \le (1+1/\log n) \cdot \varepsilon$.
Let $\tilde t_1$ be a point in time in the interval $[t_1,t^*]$ minimizing the fraction of bad edges; denote by $\tilde \varepsilon_1$ this fraction
and by $\tilde k_1$ the number of bad edges at time $\tilde t_1$, and notice that
$\tilde \varepsilon_1 \le \varepsilon_1 \le (1+1/\log n) \cdot \varepsilon$ and $\tilde k_1 \le k/2$.
Note that the fraction of bad edges never goes below $\tilde \varepsilon_1$ in the time interval $[\tilde t_1,t^*]$.
Also, we have $\tilde k_1 /k_{t^*}  \le (k/2) /k_{t^*} \in (2^{i-1},2^{i}]$,
thus denoting by $\tilde i$ the integer such that $\tilde k_1 / k_{t^*} \in (2^{\tilde i},2^{\tilde i +1}]$,
it follows that $\tilde i \le i-1$.
Since $\tilde i \le i-1$ and $\tilde \varepsilon_1 \ge \varepsilon$,
we can apply the induction hypothesis in the time interval $[\tilde t_1, t^*]$, with the index $\tilde i$  and $\rho = \tilde \varepsilon_1 \ge \varepsilon$.
By induction, the fraction of bad edges at time $t^*$ is at most $$(1+1/\log n)^{\tilde i} \cdot \tilde \varepsilon_1 ~\le~
(1+1/\log n)^{i-1} \cdot (1+1/\log n) \cdot \varepsilon
~=~ (1+1/\log n)^i \cdot \varepsilon.$$

\subsubsection{Justifying the assumptions} \label{justify}

\paragraph{Short time intervals may be ignored.~}
The shuffling game guarantees that at any point in time, the fraction of bad edges is at most $O(\varepsilon)$, where $\varepsilon$ is smaller than $\eps$ by a sufficiently large constant.
Recall that the good edges are not likely to be deleted (from the matching) by the adversary, thus if the fraction of bad edges is $O(\eps)$, it means that
the adversary is not likely to delete more than an $O(\eps)$-fraction of the existing matched edges.
Our probabilistic argument handles each level separately, and it reasons (using the shuffling game) only about levels in which the current number of matched vertices is large enough, namely, $\Omega(\log^4 n /\eps^2)$.
In particular, we only apply the shuffling game for levels $\ell$ in which the current number of matched vertices is $\Omega(\log^4 n /\eps^2)$.
Fix an arbitrary   level $\ell$, and note that we may restrict our attention to \emph{long} time intervals, namely, intervals during which $\Omega(\log^4 n /\eps)$ level-$\ell$ edges are deleted from the matching; indeed, if at the beginning of the interval the fraction of bad edges is $O(\eps)$, then under the assumption that the current number of matched vertices is $\Omega(\log^4 n /\eps^2)$, this fraction must remain $O(\eps)$ following the deletion of any $O(\log^4 n /\eps)$ level-$\ell$ edges.

\paragraph{$\shuf$ is faster than $\mal$.~}
We next argue that $\shuf$ is faster than $\mal$ by (at least) a logarithmic factor in any long time interval.  
\begin{lemma} \label{faster}
 $\shuf$ is faster than $\mal$ by (at least) a factor of $\Theta(\log n)$ in any time interval
 during which $\Omega(\log^2 n)$ edges are deleted from the matching, and in particular, in any long time interval.  
\end{lemma}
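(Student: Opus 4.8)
The plan is to prove this by comparing the \emph{rates} at which $\shuf$ and $\mal$ delete level-$\ell$ matched edges, where both rates are expressed in terms of the number of adversarial update operations that elapse. The key quantitative facts are: (i) $\shuffle$ has simulation parameter $\Delta' = \Theta(\log^6 n/\eps)$, and by the remark following Lemma \ref{evalftime} each thread run by $\shuffle_\ell$ to remove one level-$\ell$ matched edge (and execute the two calls to $\evalf$ on its endpoints) runs in overall time $T_\ell = \gamma^\ell \cdot \Theta(\log^4 n)$, hence its simulation time is $T_\ell/\Delta' = \Theta(\eps\,\gamma^\ell/\log^2 n)$ update operations; and (ii) all the other schedulers that delete level-$\ell$ matched edges, most importantly $\unmatch_\ell$, have simulation parameter $\Delta = \Delta'/\gamma = \Theta(\log^5 n/\eps)$, giving simulation time $T_\ell/\Delta = \Theta(\eps\,\gamma^\ell/\log n)$, which is a factor $\gamma = \Theta(\log n)$ slower. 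So per level, $\shuffle_\ell$ removes one matched edge every $\Theta(\eps\,\gamma^\ell/\log^2 n)$ updates while $\unmatch_\ell$ removes one every $\Theta(\eps\,\gamma^\ell/\log n)$ updates, a logarithmic-factor gap in $\shuffle$'s favor.

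First I would fix the long time interval $I$ during which $\Omega(\log^2 n)$ level-$\ell$ matched edges are deleted from the matching, and isolate the contribution of $\mal$: recall that $\mal$ models \emph{all} the non-shuffled ways a matched edge leaves the matching, i.e., deletions by $\unmatch_\ell$, by $\rise$ (when it raises a vertex and breaks its old matched edge), by the recursive breaking inside $\evalf$, and by adversarial deletion of a matched edge — but in the shuffling game $\mal$ only deletes \emph{good} edges, and we only care about its \emph{speed}. Since every such removal requires the relevant thread (which by Corollaries \ref{falling} and \ref{threadeval} operates at a level $\ge \ell$) to spend $\Omega(1)$ computation steps, and at any point in time only $O(\log n)$ threads are running, the total number of level-$\ell$ matched edges removed by non-shuffled means within any window of $w$ update operations is at most the number of level-$\ell$-affecting threads that can complete in that window; by the geometric growth of the $T_j$'s and the fact that $\unmatch$ is the slowest relevant scheduler, this is $O\bigl(w / (T_\ell/\Delta')\bigr) \cdot \Theta(1/\log n)$ — i.e. $\mal$'s throughput over $I$ is smaller than $\shuffle_\ell$'s by a factor $\Theta(\log n)$. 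The cleanest way to phrase this: in any window of length $T_\ell/\Delta'$ update operations, $\shuffle_\ell$ completes $\Theta(1)$ threads (it is no slower than any level-$\ell$ sub-scheduler and strictly faster than the level-$\ge\ell+1$ ones), each removing one level-$\ell$ matched edge, whereas every non-shuffled scheduler is slower by a factor $\gamma$ and so completes $O(1/\gamma)$ level-$\ell$-removing threads in that window; summing the $O(1)$ such schedulers and the $O(1)$ higher levels that can "spill down" still leaves a net $\Theta(\gamma) = \Theta(\log n)$ gap.

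Second I would handle the adversarial deletions of matched edges separately, since those are not throttled by any simulation parameter: here I invoke Invariant \ref{samp:in}, which guarantees all level-$\ell$ samples stay above $(1-2\eps)\gamma^\ell$, so (as in the bad/good edge argument of Section \ref{highlight}) the adversary can only ever \emph{hit} bad edges, and there is simply nothing for $\mal$ to do with those in the game's accounting of good-edge deletions — more carefully, the number of adversarial matched-edge deletions within $I$ is dominated by $|I|$ itself, and since $|I| = \Omega(\gamma^\ell \cdot \text{polylog})$ is many simulation times of $\shuffle_\ell$, this term is also absorbed. I would then conclude by aggregating: over the long interval $I$, the number of edges deleted by $\mal$ is at most a $\Theta(1/\log n)$-fraction of the number deleted by $\shuf$, which is exactly the claimed statement (and directly yields the relation $x' = x(1 + O(1/\log n))$ used in Section \ref{simplify}, with the hidden constant tunable by enlarging the constant in $\Delta$).

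The main obstacle I expect is the bookkeeping in the second paragraph: making precise that at most $O(\log n)$ threads run concurrently, that each non-shuffled matched-edge removal is chargeable to a distinct thread of level $\ge \ell$ spending $\Omega(1)$ work, and that the "spill-down" of higher-level threads into level-$\ell$ removals contributes only an $O(1)$ additive term rather than a multiplicative blow-up — this requires carefully re-using Corollary \ref{falling}(2) (at most one level-$\ell$ matched edge is deleted per $\evalf$ execution) and the geometric-$T_j$ speed separation from Section \ref{sec22}. The probabilistic/adversarial part is comparatively routine given Invariant \ref{samp:in}.
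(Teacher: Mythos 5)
Your overall architecture matches the paper's: charge every non-shuffle deletion of a level-$\ell$ matched edge to a thread of level $\ge \ell$, use the geometric growth of the $T_j$'s to control higher-level threads, and use the $\Delta$ vs.\ $\Delta'$ gap to control same-level ones. But there is a genuine gap at the heart of the level-$\ell$ accounting. You assert that ``every non-shuffled scheduler is slower by a factor $\gamma$'' (and, in your first paragraph, that all other schedulers that delete level-$\ell$ matched edges have simulation parameter $\Delta$). That is false: $\temp$ and $\rise$ have simulation parameter $\Delta'$, exactly the same as $\shuffle$, so $\temp_\ell$ and $\rise_\ell$ complete level-$\ell$ threads at the same rate as $\shuffle_\ell$. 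Your fallback bound --- ``the number of level-$\ell$-affecting threads that can complete in the window'' --- therefore only yields a constant-factor relationship between $\shuf$ and $\mal$, not the required $\Theta(\log n)$ separation: if each completed $\temp_\ell$ or $\rise_\ell$ thread could delete one level-$\ell$ matched edge, $\mal$ would match $\shuf$'s throughput and the lemma would fail. What is missing is the structural fact (the paper's Claim~\ref{either}) that among level-$\ell$ threads, \emph{only} $\shuffle_\ell$ and $\unmatch_\ell$ ever delete a level-$\ell$ matched edge: a call to $\evalf$ executed by a level-$\ell$ thread breaks only the matched edge of the randomly sampled mate, which lives at a level strictly below that of the vertex being handled and hence strictly below $\ell$; $\rise_\ell$ outside $\evalf$ breaks only the old matched edge of the vertex it raises, which is at a level $<\ell$; and $\temp_\ell$ deletes nothing outside $\evalf$. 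Only with this in hand does the same-speed class reduce to $\unmatch_\ell$ alone, whose $\Delta=\Delta'/\gamma$ parameter gives the logarithmic gap.

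A secondary slip: your ``absorption'' of adversarial matched-edge deletions by comparing their count to $|I|$ does not work, since the adversary can delete one edge per update step while $\shuffle_\ell$ deletes only one per $T_\ell/\Delta'=\Theta(\eps\gamma^\ell/\log^2 n)$ steps, so for large $\ell$ the adversary is far faster. The correct (and, in your write-up, first-stated) reason is the one from Invariant~\ref{samp:in}: an adversarial deletion of a matched edge is a hit on a \emph{bad} edge, whereas $\mal$ in the game deletes only good edges, so adversarial hits simply do not enter the $\shuf$-versus-$\mal$ speed comparison at all; the paper accordingly associates with $\mal$ only the deletions performed by the algorithm itself.
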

{\bf Remark.}  It readily follows from the proof of this lemma that the constant hiding in this $\Theta$-notation can be made as small as needed, by taking the constant hiding in the definition of $\gamma  = \Theta(\log n)$ to be sufficiently large.
\begin{proof}
Recall that we fixed an arbitrary   level $\ell$, and that $\shuf$ assumes the role of $\shuffle_\ell$ in the algorithm, deleting level-$\ell$ matched edges uniformly at random.
We associate all other edge deletions performed by our algorithm with $\mal$.
The simulation parameter of $\shuffle$ is $\Delta'$, which means that $\shuffle_\ell$ is no slower than any other level-$\ell$ sub-scheduler.
While this property may suffice for showing that $\shuf$ is no slower than $\mal$, it does not provide a logarithmic separation between them.
We argue that level-$\ell$ edges may get deleted from the matching only by threads of level $\ge \ell$.
This assertion is immediate except for threads that execute Procedure $\evalf$, but for those threads the assertion holds by Corollary \ref{threadeval}(3).
If the deletion is done by a level-$\ell$ thread, we refer to it as a \emph{low deletion}; otherwise it is done by a higher level thread, and we refer to it as a \emph{high deletion}.

We first consider low deletions. 
Recall that the simulation parameter of $\unmatch$ is $\Delta$, which implies that $\unmatch_\ell$  is slower than $\shuffle_\ell$ by a factor of $\gamma = \Theta(\log n)$.
Using the next claim, it follows that at most $\Theta(\log^2 n)/\gamma = O(\log n)$ low deletions are performed not by $\shuffle_\ell$ within any time interval that it takes  $\shuffle_\ell$ to delete $\Theta(\log^2 n)$ edges. 
\begin{claim} \label{either}
Any low deletion is due to either $\shuffle_\ell$ or $\unmatch_\ell$.
\end{claim}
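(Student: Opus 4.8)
The plan is to enumerate every possible source of a low deletion and rule out all but $\shuffle_\ell$ and $\unmatch_\ell$. Recall that, by definition, a low deletion is the removal of a level-$\ell$ matched edge from the matching carried out by a level-$\ell$ thread, and that every level-$\ell$ thread is run by one of the four level-$\ell$ sub-schedulers $\temp_\ell,\unmatch_\ell,\rise_\ell,\shuffle_\ell$. (The ``instant'' handling performed by Procedures $\ins$ and $\del$ following update operations is not a scheduler thread, and in any case it never removes a matched edge on its own: when the adversary deletes a matched edge that removal is attributed to the adversary, not to the algorithm; moreover we are in the regime $T_\ell/\Delta\ge 1$, so no level-$\ell$ edge is handled instantly.) Thus it suffices to show that neither $\temp_\ell$ nor $\rise_\ell$ ever removes a level-$\ell$ matched edge.

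First I would isolate the structural fact that drives the argument: \emph{inside any call to $\evalf(v)$, every matched edge removed from the matching lies at a level strictly lower than $v$'s original level $\ell_v$.} This is immediate from Lemma \ref{parenthesis}: the topmost vertex $v$ handled by the call is temporarily free, hence unmatched, so the first matched edge deleted is the old matched edge $(w,w')$ of $v$'s randomly chosen mate $w$, which sits at level $\ell^{(1)} < \tilde\ell^{(0)} \le \ell^{(0)} = \ell_v$; by the chain of inequalities of Lemma \ref{parenthesis}, all matched edges deleted by deeper recursive calls lie at even lower levels (this is also the content of Corollary \ref{falling}(2) applied at level $\ell_v$). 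The modification of $\evalf$ from Section \ref{mainc} only raises the handled vertex to some level $\ell^*\le \ell_v$ and leaves all statements of Section \ref{behavior} intact, so the conclusion is unaffected; and Procedure $\setlvl$ never removes a matched edge at all.

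Next I would dispatch the two remaining sub-schedulers. For $\temp_\ell$: a level-$\ell$ thread of $\temp_\ell$ handles a temporarily free level-$\ell$ vertex $v$ taken from $Q_\ell$ by calling $\evalf(v)$ with $\ell_v=\ell$; the thread itself removes nothing (since $v$ is already unmatched), and by the structural fact the call to $\evalf(v)$ removes matched edges only at levels $<\ell$, so $\temp_\ell$ performs no low deletion. For $\rise_\ell$: its level-$\ell$ thread picks a vertex $v$ with $\ell_v<\ell$, directly removes $v$'s old matched edge (which is at level $\ell_v<\ell$, hence not a level-$\ell$ edge), and then runs $\setlvl(v,\ell)$, $\evalf(v)$ (now with $\ell_v=\ell$), and $\evalf(w)$ for $v$'s old mate $w$ (with $\ell_w=\ell_v^{old}<\ell$); by the structural fact, $\evalf(v)$ deletes matched edges only at levels $<\ell$ and $\evalf(w)$ only at levels $<\ell_w<\ell$, while $\setlvl$ deletes none. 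Hence $\rise_\ell$ performs no low deletion either. Since $\shuffle_\ell$ and $\unmatch_\ell$ are by construction the sub-schedulers whose level-$\ell$ threads remove a level-$\ell$ matched edge directly, this establishes the claim.

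I expect the only real obstacle to be bookkeeping: making sure that every point at which a level-$\ell$ matched edge can leave the matching is accounted for — the direct removals by $\unmatch_\ell$ and $\shuffle_\ell$, the ``old matched edge'' removals that precede a rise, the recursive removals inside $\evalf$, and the modified-$\evalf$ path of Section \ref{mainc} — and that each is attributed correctly, in particular that adversary-triggered removals are never miscounted as algorithmic deletions. No new idea is needed beyond Lemma \ref{parenthesis} (equivalently Corollary \ref{falling}).
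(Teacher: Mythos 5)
Your proof is correct and follows essentially the same route as the paper's: both arguments reduce to the fact that any matched-edge deletion inside a call to $\evalf$ occurs at a level strictly below that of the vertex (and hence of the thread) being handled, so it can never be a low deletion, and that outside $\evalf$ the only level-$\ell$ matched-edge removals by level-$\ell$ threads are the direct ones of $\unmatch_\ell$ and $\shuffle_\ell$. Your per-sub-scheduler case analysis is just a more explicit packaging of the paper's two-sentence argument.
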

\begin{proof}
We first argue that Procedure $\evalf$ does not trigger any low deletions.
Indeed, any deletion done as part of Procedure $\evalf$ is due to a vertex $v$ choosing a random mate from its lower level neighbors;
if the random mate $w$ of $v$ is at level $\ell$ prior to the sampling, then the deleted edge $(w,w')$ is of level $\ell$, whereas the
level of the thread executing this procedure is at least of the same level as $v$, which is greater than $\ell$ by definition.
Second, any deletion done outside of Procedure $\evalf$ is due to one of the schedulers.
Noting that $\rise$ makes only high deletions outside of Procedure $\evalf$ and
$\temp$ makes no deletions outside this procedure completes the proof of Claim \ref{either}.
\QED
\end{proof}

We continue the proof of Lemma \ref{faster} by considering high deletions
which are more versatile than low deletions.
Nonetheless, since the simulation time of schedulers grows geometrically by a factor of $\gamma = \Theta(\log n)$ with each level,
at most $O(\log n)$ high deletions may be performed within any time interval that it takes $\shuffle_\ell$ to delete $\Theta(\log^2 n)$ edges,
as there are $\Theta(\log^2 n)/\gamma = O(\log n)$ deletions due to threads at level $\ell+1$ and $\Theta(\log^2 n)/\Omega(\gamma^2) = O(1)$ more deletions per any higher level.

Summarizing, within any time interval that it takes  $\shuffle_\ell$ to delete $\Theta(\log^2 n)$ edges, the number of remaining edge deletions (both low and high) performed is bounded by $O(\log n)$. We conclude that $\shuf$ is faster than $\mal$ by a logarithmic factor in any time interval
during which $\Theta(\log^2 n)$ edges are deleted from the matching, and thus also in any longer time interval.

Lemma \ref{faster} follows.
\QED
\end{proof}

\paragraph{Adding randomization.~}
To add randomization, we make sure that the probabilistic argument handles levels in which the current number of matched vertices is $\Omega(\log^4 n /\eps^3)$.
Thus the total number of vertices over all levels that we ignore may be as high as
$\Omega(\log^5 n /\eps^3)$.
For this number to be an $O(\eps)$-fraction of the maximum matching size, we need the maximum matching size to be $\Omega(\log^5 n /\eps^4)$.
This is why the threshold $\delta$ for the maximum matching size is set as $\Theta(\log^5 n /\eps^4)$; refer to the second paragraph after Lemma \ref{whp} in Section \ref{sec:almost}.
As explained in Section \ref{app:small}, one can maintain a matching $\cM_\delta$ with a worst-case update time of $O(\log^5 n / \eps^4)$,
which is guaranteed to be   maximal whenever the maximum matching size is smaller than $\delta =\Theta(\log^5 n / \eps^4)$.
If we could make sure that $\cM_{rand}$ is almost-maximal in the complementary regime,
we would proceed as before: maintain these two matchings $\cM_{rand}$ and $\cM_{\delta}$  throughout the update sequence,
and at any point in time return $\cM_{\delta}$ as the output matching if $|\cM_{\delta}| < \delta$ and $\cM_{rand}$ otherwise.
Our goal is thus to make  sure that $\cM_{rand}$ is almost-maximal w.h.p.\ in the complementary regime.

The argument for proving that $\cM_{rand}$ is almost-maximal w.h.p. whenever the maximum matching size is at least $\delta =O(\log^5 n / \eps^4)$ employs the shuffling game.
In what follows we show how to adjust the deterministic assumptions used in the argument of Section \ref{simplify} using standard Chernoff bounds.
Observe that the argument of Section \ref{simplify} considers at most $\log k = O(\log n)$ time intervals.

Our first assumption was that the fraction of bad edges added by $\adder$ in an arbitrary time interval is $\varepsilon / 2$, which is the deterministic analog for every added edge being bad with probability $\varepsilon/2$.
By scaling, we shall assume   that each added edge is bad with probability $\varepsilon/4$ rather than $\varepsilon/2$, and so the expected fraction of bad edges added by $\adder$ in any time interval is $\varepsilon / 4$.
We may assume that the number of added edges within any time interval considered in the argument of Section \ref{simplify} is $\Omega(\log n/ \eps)$, since ignoring those edges (over at most $O(\log n)$ such intervals) may increase the fraction of bad edges only by a negligible factor. Since the expected number of bad edges added within any such interval is at least $\Omega(\log n)$,
the probability that the fraction of bad edges added within that interval exceeds $\varepsilon/2$ (thus deviating from the expectation by more than a factor of 2) is at most $n^{-c^*}$ by a Chernoff bound.
Here $c^*$ can be made as large as needed by choosing the other constants appropriately.
Applying a union bound over all time intervals considered in the above argument may increase this probability by at most a logarithmic factor.

Our second assumption was that if the fraction of bad edges is lower bounded by $\varphi$ throughout any sequence of consecutive edge deletions of $\shuf$, for any parameter $\varphi \ge \varepsilon$,
then  the fraction of bad edges deleted by $\shuf$ among those deleted edges is at least $\varphi$, which is the deterministic analog for $\shuf$ to delete edges uniformly at random among all existing edges.
Justifying this assumption is more involved than justifying the first one, since scaling cannot be applied here, and we do not want to lose a factor of 2 in the fraction of bad edges deleted by $\shuf$.
Consequently, we will allow to deviate from the expectation only by a factor of $1+O(1/\log n)$.  
We may assume that the number of edges deleted by $\shuf$ within any time interval considered in the argument of Section \ref{simplify} is $\Omega(\log^3 n/ \eps)$,
since ignoring those edges (over at most $O(\log n)$ such intervals) may only slightly increase the fraction of bad edges.  
Concretely, if the current number of matched vertices is $\Omega(\log^4 n /\eps^2)$, the total number of edges that we may ignore in this way
is only an $O(\eps)$-fraction of the matching size. 
Since the expected number of bad edges deleted by $\shuf$ within any such interval is at least $\Omega(\log^3 n)$,
the probability that the fraction of bad edges deleted by $\shuf$ within that interval is less than $\varphi (1-O(1/\log n))$ (thus deviating from the expectation by more than a factor of $1+O(1/\log n)$) is at most $n^{-c^*}$ by a Chernoff bound.
Again, $c^*$ can be made as large as needed by choosing the other constants appropriately.
Applying a union bound over at most $\log k$ time intervals considered in the above argument may increase this probability by at most a logarithmic factor.

Summarizing thus far, it follows that with high probability, the number of bad edges $k^{bad}_\tau$ at time $\tau$ is bounded from above by
$\varepsilon k + \frac{\varepsilon}2 y_\tau - \varepsilon (1-O(1/\log n)) x_\tau$, which requires us to modify Equation (\ref{basicgoal}).
Specifically, it is easy to verify that for any $\tau$ such that $y_\tau \ge x'_\tau$, the modified equation holds:
\begin{equation} \label{basicgoalb}
\varepsilon k + \frac{\varepsilon}2 y_\tau - \varepsilon(1-O(1/\log n))  x_\tau ~\le~  e \varepsilon (k + y_\tau - x'_\tau).
\end{equation}
Hence if $y_{t^*} \ge x'_{t^*}$, the  fraction of bad edges at time $t^*$ is bounded from above by $e \varepsilon$, as before.

Adjusting the argument for the complementary case $y_{t^*} < x'_{t^*}$, where we have $k_{t^*} < k$, requires more work.
Recall that $\shuf$ deletes edges at a rate that is faster than $\mal$ by (at least) a logarithmic factor. If $x'$ denotes all deleted edges within an arbitrary time interval and $x$ denotes the number of edges deleted by $\shuf$ therein,  then $x' = x(1+O(1/\log n))$.
Instead of taking the constant hiding within this $O$-notation to be 1 as in Section \ref{simplify}, we will now take it to be 1/2, i.e., $x' = x(1+ 1/2\log n)$.
Since the upper bound on the number of bad edges $k^{bad}_\tau$ at time $\tau$ has increased, we need to modify Equation (\ref{basis}) as follows:
\begin{equation} \label{basisb}
\varphi k + \frac{\varepsilon}2 y_\tau - \varphi (1-O(1/\log n)) x_\tau ~\le~ (1+1/\log n) \cdot \varphi (k + y_\tau - x'_\tau).
\end{equation}
We prove the validity of this modified equation for any $\tau$ such that $k_{\tau} \ge k/2$ and for any parameter $\varphi \ge \varepsilon$, just as before.
As $\varphi \ge \varepsilon$, it suffices to prove that
$\varphi k + \frac{\varphi}2 y_\tau - \varphi (1-O(1/\log n)) x_\tau ~\le~ (1+1/\log n) \cdot \varphi (k + y_\tau - x'_\tau)$, or equivalently:
\begin{equation} \label{eqtau1b}
0 ~\le~ y_\tau / 2 + \left(y_\tau/\log n + k/\log n \right) + \left(x_\tau - x_\tau/O(\log n) - x'_\tau - x'_\tau/\log n\right).
\end{equation}
We have $y_\tau + k = x'_\tau + k_{\tau}$ just as before. Also, it is easy to see that $x_\tau - x_\tau/O(\log n)- x'_\tau - x'_\tau/\log n \ge -2x'_\tau/\log n$. The rest of the argument of Section \ref{simplify} applies unchanged, except that we use the modified equation (\ref{basisb}) rather than (\ref{basis}).

\paragraph{The shuffling game result.~}
We   proved that for any time step $t^*$ where the number $k^*$ of edges is $\Omega(\log^4 n /\eps^2)$,
the fraction of bad edges is $O(\varepsilon) = O(\eps)$ w.h.p.
That is, there exist sufficiently large constants $c'$ and $c''$, such that
the probability that $k^*$ is $\ge (c'/2)(\log^4 n /\eps^2)$
and the fraction of bad edges is $\ge (c''/4) \cdot \eps$ is  $\le n^{-c}/2$,
where $c$ is a constant that can be made as large as needed.



\subsection{Proof of Lemma \ref{whp}, part III: The actual proof (given the shuffling game result)} \label{cont}
Consider an edge $(u,v)$ that gets hit, i.e., gets deleted by the adversary while it is matched.
As explained in Section \ref{highlight}, such an edge must be bad. 
Moreover, the probability of a matched edge to be bad is at most $4\eps$, 
which coincides with the  shuffling game of Section \ref{shuffling}, where any newly created edge due to $\adder$ is bad with probability $\le 4\eps$.
Following the hit of edge $(u,v)$, its two endpoints $u$ and $v$ become temporarily free,
and they are inserted to the corresponding queue $Q_\ell$.
The sub-scheduler $\temp_\ell$ aims at handling all the temporarily free vertices in $Q_\ell$, one after another.
(Note that $Q_\ell$ contains only vertices that become temporarily free by the adversary.
A vertex that becomes temporarily free due to the update algorithm is handled by appropriate threads instantly, and does not enter $Q_\ell$.)

Denote by $Q^{(j)}_\ell$ the queue $Q_\ell$ at time step $j$.  
For $0 \le t' \le t$, let $B_{t'}$ be the event that $t'$ is the last time step prior to (and including) $t$ where the queue $Q^{(t')}_\ell$ is empty,
that is, the queue $Q^{(t')}_\ell$ at time step $0 \le t' \le t$ is empty, whereas the queue $Q^{(j)}_\ell$ at time step $j$ is non-empty for all $j \in \{t'+1, t'+2,\ldots, t\}$.
(Note that when the queue is empty, no vertex at level $\ell$ is temporarily free due to the adversary.)

We have $\Prob(A_t) = \sum_{t' = 0}^t \Prob(A_t \cap B_{t'})$.
Note that $\Prob(A_t \cap B_{t}) = 0$.
In what follows we fix $t'$ and $t$ such that $t' < t$ and show that $\Prob(A_t \cap B_{t'}) \le n^{-c}$, where
$c$ is a constant that depends on $c'$ and $c''$.
Let $\eta$ be the random variable (r.v.) for the total number of level-$\ell$ matched edges that are present at time $t'$ or get created between times $t'$ and $t$,
and let $\eta_{Q}$ be the r.v.\ for the number of edges among them that got hit between times $t'$ and $t$.
We argue that
\begin{equation} \label{etaprob}
\Prob((\eta \ge (c'/2)(\log^4 n /\eps^3)) \cap (\eta_{Q} \ge (c''/2) \cdot \eps \eta)) ~\le~ n^{-c}.
\end{equation}
To show that Equation (\ref{etaprob}) holds, we break $\eta$ into $\eta^{t'}$ and $\eta^{\ge t'}$, i.e.,  $\eta = \eta^{t'} + \eta^{\ge t'}$,
where the former (respectively, latter) stands for the
r.v.\ for the number of edges (among all $\eta$ edges) that are present at time $t'$ (resp., get created between times $t'$ and $t$).
Similarly, we break $\eta_{Q}$ into $\eta^{t'}_{Q}$ and $\eta^{\ge t'}_{Q}$, i.e, $\eta_{Q} = \eta^{t'}_{Q} + \eta^{\ge t'}_{Q}$.
We reason about the matched edges that are  present at time $t'$ separately from those that get created later on.
For the edges that are present at time $t'$, we apply the shuffling game result (see the end of Section \ref{shuffling}) to get
\begin{equation} \label{etatprob}
\Prob((\eta^{t'} \ge (c'/2)(\log^4 n /\eps^2)) \cap (\eta^{t'}_{Q} \ge (c''/4) \cdot \eps \eta^{t'})) ~\le~ n^{-c}/2.
\end{equation}
 For the edges that get created later on, there is no dependency on coin flips that the algorithm made prior to time $t'$, and so the probability of any of those edges to get hit between times $t'$ and $t$, which is upper bounded by its probability to be bad, is at most $4\eps$.
Moreover, ordering those $\eta^{\ge t'}$ edges by their creation times (i.e., by the time their random samplings are performed),
the probability of any such edge to be bad, which is a necessary condition for it to get hit between times $t'$ and $t$, is at most $4\eps$, independently of whether previously created matched edges are bad or not. We can thus apply a Chernoff bound to get
\begin{equation} \label{etagreatertprob}
\Prob((\eta^{\ge t'} \ge (c'/2)(\log n/\eps)) \cap (\eta^{\ge t'}_{Q} \ge (c''/4) \cdot \eps \eta^{\ge t'})) ~\le~ n^{-c}/2.
\end{equation}
Equation (\ref{etaprob}) thus follows from Equations (\ref{etatprob}) and (\ref{etagreatertprob}) by a union bound, assuming $c'' > 4, \eps < 1/2$.

If the queue is never empty between times $t'$ and $t$, this means that $\temp_\ell$ is never idle.
Following every $T_\ell /\Delta'$ update operations, $\temp_\ell$ removes another vertex from the queue.
Thus $\temp_\ell$ removes at least $x := \lfloor \frac{t-t'}{T_\ell /\Delta'} \rfloor$ vertices from the queue between times $t'$ and $t$.
If the queue is empty at time $t'$,
every vertex in $Q^{(t)}_\ell$ is an endpoint of a matched edge that got hit between times $t'$ and $t$,
hence $|Q^{(t)}_\ell| \le 2\eta_{Q} - x$. Thus if $B_{t'}$ occurs, we have $|Q^{(t)}_\ell| \le 2\eta_{Q} - x$, yielding:
\begin{observation} \label{baseevent}
If both events  $(|Q^{(t)}_\ell| \ge c'' \cdot \eps \eta - x)$ and $B_{t'}$ occur, then $(\eta_{Q} \ge (c''/2) \cdot \eps \eta))$ must occur.
\end{observation}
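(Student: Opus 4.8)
The plan is to first dispatch Observation~\ref{baseevent}, which is immediate, and then use it to finish the proof of Lemma~\ref{whp}. For the observation: under $B_{t'}$ the queue $Q_\ell$ is empty at time $t'$ and, since it stays non-empty on $[t'+1,t]$, the sub-scheduler $\temp_\ell$ is never idle there and so removes at least $x=\lfloor(t-t')/(T_\ell/\Delta')\rfloor$ vertices from it; moreover every vertex that enters $Q_\ell$ during $(t',t]$ does so because a level-$\ell$ matched edge incident to it is hit, and that edge is present at time $t'$ or created in $(t',t]$, hence is one of the $\eta$ edges, so at most $2\eta_Q$ vertices ever enter $Q_\ell$ during $(t',t]$. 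Therefore $|Q^{(t)}_\ell|\le 2\eta_Q-x$, and combined with the hypothesis $|Q^{(t)}_\ell|\ge c''\eps\eta-x$ this gives $c''\eps\eta-x\le 2\eta_Q-x$, i.e.\ $\eta_Q\ge (c''/2)\eps\eta$.

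To finish Lemma~\ref{whp} I would show that $A_t\cap B_{t'}$ implies both hypotheses of Observation~\ref{baseevent}, namely $\eta\ge (c'/2)(\log^4 n/\eps^3)$ and $|Q^{(t)}_\ell|\ge c''\eps\eta-x$; once this is established, $A_t\cap B_{t'}$ is contained in the event of Equation~(\ref{etaprob}) (the $\eta_Q$-bound supplied by the observation), so $\Prob(A_t\cap B_{t'})\le n^{-c}$, and then $\Prob(A_t)=\sum_{t'<t}\Prob(A_t\cap B_{t'})\le t\cdot n^{-c}=O(n^{-c+2})$ using $\Prob(A_t\cap B_t)=0$ and that the update sequence has length $O(n^2)$. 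The first hypothesis is easy: every vertex of $V_t$ is an endpoint either of a level-$\ell$ edge present at time $t$ or of a level-$\ell$ edge hit during $(t',t]$, and both kinds are among the $\eta$ edges, so $|V_t|\le 2\eta+O(\log n)$ (the $O(\log n)$ accounting for active vertices), and $A'_t$ then forces $\eta=\Omega(\log^4 n/\eps^3)$ — I would choose the constant in the definition of $A'_t$ and the threshold in Equation~(\ref{etaprob}) so that this reads $\eta\ge (c'/2)(\log^4 n/\eps^3)$.

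For the second hypothesis I would set up a bookkeeping identity at time $t$. Let $m_t$ be the number of level-$\ell$ matched edges at time $t$, let $D$ be the number of the $\eta$ edges deleted by the \emph{algorithm} during $(t',t]$, and let $R$ be the number of vertices $\temp_\ell$ removes from $Q_\ell$ during $(t',t]$; then $\eta=m_t+\eta_Q+D$, $|Q^{(t)}_\ell|=2\eta_Q-R$, and (using Invariant~\ref{first}(b) and that only $O(\log n)$ vertices are active) $|V_t|=2m_t+|Q^{(t)}_\ell|+O(\log n)$, whence $\eta-|V_t|=-\eta+2D+R-O(\log n)$. By $A''_t$, $|Q^{(t)}_\ell|\ge c''\eps|V_t|-O(\log n)$, so it suffices to check $c''\eps(\eta-|V_t|)\le x-O(\log n)$. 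Here I would use three estimates: $R\le x+O(1)$ (a $\temp_\ell$ thread spans $T_\ell/\Delta'$ updates); $\eta\ge x/2$, since $x\le R\le 2\eta_Q\le 2\eta$; and $D=O(x)+O(\log n)$ — a level-$\ell$ matched edge is deleted only by a thread of level $\ge\ell$ (Corollary~\ref{threadeval}(3) for deletions inside \evalf, and immediately otherwise, as $\unmatch_\ell,\shuffle_\ell$ run level-$\ell$ threads and $\rise$ deletes only at its own level), and, because the thread lengths $T_j$ grow geometrically in $j$, such threads finish $O((t-t')\Delta'/T_\ell)=O(x)$ times at level $\ell$ and only $O(1)$ more per higher level, plus $O(\log n)$ straddling the endpoints of $(t',t]$ — exactly the rate bound used in the proof of Lemma~\ref{faster}. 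Plugging in, $\eta-|V_t|\le O(x)+O(\log n)-\eta$, so $c''\eps(\eta-|V_t|)\le c''\eps\left(O(x)+O(\log n)\right)-c''\eps\eta$, and since $\eta\ge x/2$ and $\eta=\Omega(\log^4 n/\eps^3)\gg\log n/\eps$, both positive terms are absorbed by $c''\eps\eta$ when $c''$ is a large enough constant, giving $\le x-O(\log n)$, as needed.

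The main obstacle is exactly this last step: reconciling the ``snapshot'' quantity $|V_t|$ with the ``cumulative'' quantity $\eta$ through the correction terms $D$ and $R$, and especially the bound $D=O(x)+O(\log n)$ on how many level-$\ell$ matched edges the algorithm itself can destroy in the window $(t',t]$. This is where the decreasing-simulation-time scheduling discipline and the geometric growth $T_j=\gamma^j\cdot\Theta(\log^4 n)$ of thread lengths are genuinely used, and some care is needed so that the several additive $O(\log n)$ slacks (active vertices, threads straddling the window, the gap between $|Q^{(t)}_\ell|$ and the number of adversary-created temporarily-free level-$\ell$ vertices) are all absorbed by taking $c'$ and $c''$ large enough. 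By contrast, the only genuinely probabilistic input, Equation~(\ref{etaprob}), is already supplied by the shuffling-game analysis of Section~\ref{shuffling}.
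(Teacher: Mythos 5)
Your proof of Observation~\ref{baseevent} (the first paragraph of your proposal) is correct and is essentially identical to the paper's: under $B_{t'}$ the queue starts empty and stays non-empty, so $\temp_\ell$ removes at least $x$ vertices while at most $2\eta_Q$ endpoints of hit edges ever enter, giving $|Q^{(t)}_\ell|\le 2\eta_Q-x$, which combined with the hypothesis yields $\eta_Q\ge(c''/2)\eps\eta$. The remaining two paragraphs concern the completion of Lemma~\ref{whp} rather than the stated observation, so they are outside the scope of this comparison.
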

\ignore{
It follows that
deleted
}


Next, we bound the number of level-$\ell$ matched edges that get deleted by the update algorithm between times $t'$ and $t$.
As a result of such edge deletions, vertices that are matched at level $\ell$ may move to other levels.
We argue that such edges are  deleted by threads of level at least $\ell$.
Moreover, each of these threads may delete just one matched edge at level $\ell$.
This assertion is immediate except for threads that execute Procedure $\evalf$, but for those threads the assertion holds by Corollaries \ref{threadeval}(3) and \ref{falling}(2).
At each level, at most $\alpha$ threads are running at any point in time, where $\alpha$ is some integer constant.
Since the simulation parameter of any thread is either $\Delta'$ or $\Delta = \Delta' / \gamma$,
it follows that the number of level-$\ell$ matched edges deleted by all level-$j$ threads between times $t'$ and $t$ is $\le \alpha \lceil \frac{t-t'}{T_j /\Delta'} \rceil$, for any $j \ge \ell$.
Note that this bound decays by a factor of $\gamma$ with each level, starting from $\alpha \lceil \frac{t-t'}{T_\ell /\Delta'} \rceil \le \alpha(x+1)$.
For any level $j$ where $\alpha \lceil \frac{t-t'}{T_\ell /\Delta'} \rceil< \alpha$, we may use $\alpha$ as a naive upper bound on the number
of level-$\ell$ matched edges deleted by all level-$j$ threads between times $t'$ and $t$.
We conclude that the total number of level-$\ell$ matched edges deleted by the   algorithm between times $t'$ and $t$ is no greater than
$2\alpha (x+1) + \alpha \cdot \log n$, so the number of endpoints of these edges is at most $4\alpha(x+1) + 2\alpha \cdot \log n$.

Note that the two endpoints of any of the $\eta$ level-$\ell$ matched edges that are present at time $t'$ or get created between times $t'$ and $t$
belong to $V_t$, excluding endpoints of edges that got deleted by the update algorithm or by the adversary during this time.
Since all these edges are vertex-disjoint,
it follows that $|V_t| \ge 2\eta - 2\eta_Q - 4\alpha(x+1) - 2\alpha \cdot \log n$.
Thus if the event $|V_t| \le 2\eta - c'' \cdot \eps \eta - 4\alpha(x+1) - 2\alpha \cdot \log n$ occurs, then $(\eta_{Q} \ge (c''/2) \cdot \eps \eta))$ must occur. Moreover, if the event $E_t := \left(\frac{|Q^{(t)}_\ell|}{|V_t|} \ge \frac{c'' \cdot \eps \eta - x}{2\eta-  c'' \cdot \eps \eta - 4\alpha(x+1) - 2\alpha \cdot \log n}\right)$ occurs, then either
$|Q^{(t)}| \ge c'' \cdot \eps \eta - x$ or $|V_t| \le 2\eta-  c'' \cdot \eps \eta -  4\alpha(x+1) - 2\alpha \cdot \log n$ must occur.
Combining these observations with Observation \ref{baseevent} and Equation (\ref{etaprob}), we have
\begin{eqnarray} \label{complete}
\Prob\left((\eta \ge (c'/2)(\log^4 n /\eps^3)) \cap E_t \cap   B_{t'}\right)  ~\le~
\Prob((\eta \ge (c'/2)(\log^4 n /\eps^3))) \cap (\eta_{Q} \ge (c''/2) \cdot \eps \eta)) ~\le~ n^{-c}.
\end{eqnarray}
If the queue at time $t'$ is empty, then any vertex of $V_t$ must be an endpoint of one of the $\eta$ edges, hence $|V_t| \le 2\eta$.
(Recall that $V_t$ does not contain vertices that are temporarily free due to the update algorithm.)
Fix any $\eps < 1/(4(\alpha+1) c'')$.
It is easy to verify that the inequality $\frac{c'' \cdot \eps \eta - x}{2\eta-  c'' \cdot \eps \eta
- 4\alpha(x+1) - 2\alpha \cdot \log n}  \le c''\cdot \eps$ holds  if $\eta \ge (c'/2)(\log^4 n /\eps^3)$.
It follows that
\begin{eqnarray*}
\Prob(A_t \cap B_{t'}) &=& \Prob((A'_t \cap A''_t) \cap B_{t'})
~=~ \Prob\left(\left((|V_t| \ge c'(\log^4 n /\eps^3)) \cap \left(\frac{|Q^{(t)}_\ell|}{|V_t|} \ge c''\cdot \eps\right)\right) \cap B_{t'}\right)
\\ &\le&
\Prob\left(\left((\eta \ge (c'/2)(\log^4 n /\eps^3))  \cap \left(\frac{|Q^{(t)}_\ell|}{|V_t|} \ge c''\cdot \eps\right)\right) \cap B_{t'}\right)
\\ &\le&
\Prob\left(\left((\eta \ge (c'/2)(\log^4 n /\eps^3)) \cap \left(\frac{|Q^{(t)}_\ell|}{|V_t|} \ge \frac{c'' \cdot \eps \eta - x}{2\eta-  c'' \cdot \eps \eta -  4\alpha (x+1) - 2\alpha \cdot \log n} \right)\right) \cap B_{t'}\right)
\\ &=& \Prob\left((\eta \ge (c'/2)(\log^4 n /\eps^3)) \cap E_t \cap   B_{t'}\right)  ~\le~ n^{-c},
\end{eqnarray*}
where the last inequality is by (\ref{complete}).
Since the total number of time steps is $O(n^2)$,
we conclude that $$\Prob(A_t) = \sum_{t' = 0}^t \Prob(A_t \cap B_{t'}) ~=~ \sum_{t' = 0}^{t-1} \Prob(A_t \cap B_{t'}) ~\le~ t \cdot n^{-c} ~=~ O(n^{-c+2}). \inQED$$

\subsection{Corollaries of Lemma \ref{whp}: The almost-maximality guarantee and main results} \label{conclude}

Applying a union bound over all $O(n^2)$ time steps and over all levels,  
we conclude that  at any point in time, with probability at most $O(n^{-c+4} \cdot \log n) = O(n^{-c+5})$,
an $\Omega(\eps)$-fraction of the potential matched vertices at all levels where there are at least $\Omega(\log^4 n /\eps^3)$ vertices are unmatched.
At any level that we ignore, we may lose at most $O(\log^4 n /\eps^3)$ potential matched vertices by definition.
Whenever the size of the maximum matching is $\Omega(\log^5 n /\eps^4)$,
the loss due to the levels that we ignore is negligible, as those levels contain together just an $O(\eps)$-fraction of the potential matched vertices.
Note also  that we disregarded vertices that are temporarily free due to the update algorithm, but the loss due to those vertices is negligible,
as their total number over all levels is bounded by $O(\log n)$.   

Recall that the worst-case update time of our algorithm is $O(\Delta' \cdot \log n) = O(\log^7 n / \eps)$; denote the matching maintained by this algorithm by $\cM_{rand}$.
Set $\delta = \Theta(\log^5 n /\eps^4)$.
As explained above, $\cM_{rand}$ is guaranteed to be almost-maximal w.h.p.\ whenever the maximum matching size is no smaller than $\delta$.
As explained in Section \ref{app:small}, one can   maintain a   matching $\cM_\delta$ with a worst-case update time of $O(\log^5 n /\eps^4)$,
which is guaranteed to be   maximal whenever  $|\cM_\delta| < \delta$, which must hold when the maximum matching size is smaller than $\delta$.
Therefore, we maintain these two matchings $\cM_{rand}$ and $\cM_{\delta}$  throughout the update sequence,
and at any point in time return $\cM_{\delta}$ as the output matching if $|\cM_{\delta}| < \delta$ and $\cM_{rand}$ otherwise.

Summarizing, a worst-case update time of $O(\max\{\log^7 n /\eps,\log^5 n / \eps^4\})$ suffices to maintain a $(1-\eps)$-almost-maximal matching with probability at least $1 - O(n^{-c+5})$ at any point in time (where we never lose more than an $O(\eps)$-fraction of the potential matched vertices).
\begin{theorem} [standard oblivious adversarial model] \label{maint}
Let $\eps \ll 1$ be an arbitrary parameter.
Starting from an empty graph on $n$ fixed vertices, an $(1-\eps)$-maximal matching (and thus a $(2+\eps)$-approximation for both the maximum matching and the minimum vertex cover) can be maintained over any
sequence of edge updates with a (deterministic) worst-case update time of $O(\max\{\log^7 n /\eps,\log^5 n / \eps^4\})$, where the almost-maximality guarantee $1-\eps$ holds both in expectation and w.h.p.
\end{theorem}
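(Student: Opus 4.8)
The plan is to assemble Theorem \ref{maint} directly from Lemma \ref{whp} together with the deterministic worst-case update-time bookkeeping already established in Sections \ref{basics}--\ref{exceptions}, and the small-matching fallback $\cM_\delta$ from Section \ref{app:small}. The update-time bound is essentially already proved: each of the four schedulers performs $\tilde\Delta \cdot (\log_\gamma(n-1)+1)$ steps per update operation, which is $O(\log^7 n/\eps)$ for the three schedulers with parameter $\Delta'$ and $O(\log^6 n/\eps)$ for $\unmatch$; adding the $O(\log^4 n)$ cost of Procedures $\ins/\del$ and the instantly-run authentication processes changes nothing asymptotically. Running $\cM_\delta$ in parallel adds $O(\log^5 n/\eps^4)$. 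So the worst-case update time is $O(\max\{\log^7 n/\eps,\log^5 n/\eps^4\})$, and since no randomness enters the scheduling or the step counts, this bound is deterministic.

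First I would record the deterministic structural guarantees: Invariants \ref{first}(a), \ref{first}(b), \ref{1c}, \ref{1d} hold at all times, Invariant \ref{samp:in} holds for all levels $\ell$ with $T_\ell/\Delta\ge 1$ (Section \ref{sec32}), and Invariant \ref{risei} holds for all levels (Section \ref{sec33}), with all scheduler conflicts resolved as in Section \ref{exceptions}. From these, the only obstruction to maximality is the presence of temporarily free vertices. Those created by the algorithm itself number $O(\log n)$ at any time (at most $O(1)$ per running thread, $O(\log n)$ threads, by Corollary \ref{boundactive}), and for levels with $T_\ell/\Delta<1$ there are no temporarily free vertices due to the adversary at all. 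So the entire burden is on levels with $T_\ell/\Delta\ge1$, which is exactly what Lemma \ref{whp} controls.

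Next I would turn Lemma \ref{whp} into a global almost-maximality statement. Fix a time $t$. For each level $\ell$ with $T_\ell/\Delta\ge1$, Lemma \ref{whp} gives $\Prob(A_t)=O(n^{-c+2})$, where $A_t$ says that level $\ell$ has $\ge c'(\log^4 n/\eps^3)$ potential matched vertices yet an $\Omega(\eps)$-fraction are temporarily free due to the adversary. A union bound over all $O(\log n)$ levels and all $O(n^2)$ time steps (Section \ref{short:app} lets us assume the update sequence has length $O(n^2)$) gives that, with probability $1-O(n^{-c+5})$, at every time and every level with $\ge c'(\log^4 n/\eps^3)$ potential matched vertices, at most an $O(\eps)$-fraction of those vertices are unmatched. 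The levels we ignore (those below the threshold, plus levels with $T_\ell/\Delta<1$) contribute at most $O(\log^5 n/\eps^3)$ potential matched vertices in total; together with the $O(\log n)$ algorithm-induced temporarily free vertices, this is an $O(\eps)$-fraction of the maximum matching size whenever $|\cM^*|\ge\delta=\Theta(\log^5 n/\eps^4)$. Hence $\cM_{rand}$ is $(1-O(\eps))$-maximal w.h.p.\ in that regime; in the complementary regime $|\cM^*|<\delta$ forces $|\cM_\delta|<\delta$, so the algorithm outputs the genuinely maximal $\cM_\delta$. Rescaling $\eps$ by the hidden constant yields the $(1-\eps)$-maximality claim, which — as noted right after Invariant \ref{1c} — gives a $(2+\eps)$-approximation for both maximum matching and minimum vertex cover. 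Finally, for the ``in expectation'' part, condition on the high-probability event: a $(1-\eps)$-maximal matching has size $\ge(1-\eps)|\cM^*|/2$, while on the failure event (probability $O(n^{-c+5})$) the matching is still nonempty/nonnegative, contributing a negligible $O(n^{-c+5}\cdot n)=o(1)$ correction, so the expected maximality guarantee follows as well.

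The main obstacle is not in this wrap-up step at all — it is entirely bottled up in Lemma \ref{whp}, whose proof (Sections \ref{highlight}--\ref{cont}) rests on the shuffling game and the fact (Lemma \ref{faster}) that $\shuffle$ outruns the other deletion sources by a logarithmic factor. Granting that lemma, the theorem is a careful but routine accounting: the one place demanding genuine care is making sure the thresholds line up — that ignoring levels with fewer than $\Theta(\log^4 n/\eps^3)$ matched vertices, summed over $O(\log n)$ levels, loses only an $O(\eps)$-fraction of $\cM^*$, which is precisely why $\delta$ is set to $\Theta(\log^5 n/\eps^4)$ and why that same $\delta$ reappears as the worst-case update time of $\cM_\delta$.
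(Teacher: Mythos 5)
Your proposal is correct and follows essentially the same route as the paper's own derivation in Section \ref{conclude}: a union bound of Lemma \ref{whp} over all levels and the $O(n^2)$ time steps, absorbing the ignored levels and the $O(\log n)$ algorithm-induced temporarily free vertices into an $O(\eps)$-fraction once $|\cM^*|\ge\delta=\Theta(\log^5 n/\eps^4)$, and falling back on the deterministic $\cM_\delta$ otherwise. The deterministic update-time accounting and the choice of thresholds also match the paper's argument, so there is nothing substantive to add.
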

{\bf Remarks.}
~(1) The argument above does not bound the expected almost-maximality guarantee directly,
but an expected bound is   implied by the definition of expectation, following similar lines as in the proof of Lemma \ref{whp}.
On the other hand, an expected bound would not yield our high probability guarantee.
\vspace{5pt}
\\
~(2) Note that the output matching may change significantly following a single edge update,
since $\cM_{\delta}$ and $\cM_{rand}$ may be very different. 
Using \cite{Sol18}, however, we can make sure that the number of changes (replacements) to the output matching
is linear in the worst-case update time of the algorithm.
\ignore{
 by maintaining a $(1+\eps)$-approximate maximum matching
over the union of $\cM_{\delta}$ and $\cM_{rand}$ using a \emph{modified version} of the algorithm of \cite{GP13}.
Specifically, for bounded degree graphs, the worst-case update time of the algorithm of \cite{GP13} is $O(1/\eps^2)$,
but the worst-case number of replacements to the matching maintained by the algorithm of \cite{GP13} may be huge.
Consequently, we first strengthen the algorithm of \cite{GP13} to bound the number of replacements to the matching by $O(1/\eps^2)$ in the worst-case,
and use this stronger version of the algorithm instead of the original version.
This will increase the approximation guarantee to $(2+\eps)(1+\eps)$, which is negligible, and the update time will grow by a factor of $O(1/\eps^2)$.
(See Sections \ref{b11}, \ref{b21} and \ref{changes} for a detailed argument.)
}
\vspace{5pt}
\\
~(3) While the worst-case update time in Theorem \ref{maint} holds deterministically, the almost-maximality guarantee is probabilistic.
Using a symmetric argument, we can ensure that the almost-maximality guarantee holds deterministically, but then the bound on the worst-case update time will be probabilistic.

\vspace{6pt}
Consider next the offline model, where in any point in time and for any subset $E_v$ of edges adjacent to any vertex $v$ ($E_v$ need not contain all edges adjacent to $v$),
we can find in time linear in $E_v$ an edge from $E_v$ that will be deleted from the graph after a constant fraction of the edges of $E_v$ have been deleted.
For the offline setting, our algorithm is deterministic. In particular, there is no reason to maintain the matching
$\cM_{\delta}$, thus shaving the second term ($\log^5 n / \eps^4$) from the update time bound.
Also, as there is no need to use $\shuffle$ for the offline setting,
which works faster than the other schedulers by a logarithmic factor, this shaves a logarithmic factor from the update time immediately.
\begin{theorem} [offline model]
Let $\eps \ll 1$ be an arbitrary parameter. In the offline model,
starting from an empty graph on $n$ fixed vertices, an $(1-\eps)$-maximal matching (and thus a $(2+\eps)$-approximation for both the maximum matching and the minimum vertex cover) can be maintained deterministically over any sequence of edge updates with a worst-case update time of $O(\log^6 n /\eps)$.
\end{theorem}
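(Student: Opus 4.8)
The plan is to run the very same algorithm as in Theorem~\ref{maint}, specialized to the offline model by two changes. First, every step at which the randomized algorithm drew a mate $w$ for a temporarily free vertex $v$ uniformly at random from the pruned candidate set $N'_\ell(v)$ (whose size lies between $(1-\eps)\gamma^{\ell}$ and $\gamma^{\ell}$) is replaced by the following deterministic step: form the edge set $E_v=\{(v,u):u\in N'_\ell(v)\}$, call the offline oracle on $E_v$, and let $(v,w)$ be the returned edge of $E_v$ — one that the adversary deletes only after at least a fixed constant fraction of the edges of $E_v$ have been deleted. The sample of the new matched edge $e=(v,w)$ is then, by definition, $S^*(e)=E_v$, so $(1-\eps)\gamma^{\ell}\le|S^*(e)|\le\gamma^{\ell}$ exactly as before. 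Second, the scheduler $\shuffle$ was introduced only to supply randomness to the proof of Lemma~\ref{whp}; since almost-maximality will now be proved deterministically, $\shuffle$ is removed altogether, and with it disappears the only reason to run $\temp$ and $\rise$ at the faster rate $\Delta'=\gamma\Delta$. Hence all three remaining schedulers ($\temp,\rise,\unmatch$) use the single simulation parameter $\Delta=\Theta(\log^5 n/\eps)$. Everything else — the data structures and invariants of Section~\ref{basics}, the procedures $\setlvl,\ins,\del,\evalf$ of Section~\ref{procedures}, and the conflict-resolution mechanism of Section~\ref{exceptions}, none of which ever used randomness — is kept verbatim.

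Next I would check that the structural invariants survive. Invariants~\ref{first}(a), \ref{first}(b), \ref{1c} and \ref{1d} are maintained exactly as in the randomized algorithm, since the bookkeeping procedures are unchanged. Invariant~\ref{samp:in} is maintained by $\unmatch$: the balls-and-bins analysis of Section~\ref{sec32} is already entirely deterministic, with $\unmatch_\ell$ as Player~I and the adversary as Player~II, and its winning condition $b'<\tfrac{k-k'}{2(\ln N+1)}$ holds verbatim because $\unmatch$ still uses the parameter $\Delta=\Theta(\log^5 n/\eps)$ on which that computation depended. Invariant~\ref{risei} is maintained by $\rise$ as in Section~\ref{sec33}; the one point that needs rechecking is that the bound $b=O(\gamma^{\ell}\log n)$ on the number of balls added to the level-$\ell$ bins during one $\rise_\ell$ time slot is still valid when $\rise$ runs at rate $\Delta$ rather than $\Delta'$. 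It is: that bound was obtained by counting the $O(\log n)$ threads of level $\ge\ell$ that can be active during a $\rise_\ell$ slot, and that count uses only the geometric growth $T_{\ell'}/T_\ell=\gamma^{\ell'-\ell}$ of thread durations together with the fact that $\rise_\ell$ is no slower than any other level-$\ell$ thread — both true once every scheduler shares the parameter $\Delta$. (One may additionally replace $\gamma=\Theta(\log n)$ by $\gamma=2$, as the footnote suggests, rechecking the same inequalities; this is optional and only affects lower-order factors.) I expect this rechecking of the $\rise$ analysis under a single simulation parameter to be the main obstacle, though it is a localized one.

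With the invariants in hand, the probabilistic Lemma~\ref{whp} is replaced by a deterministic statement: \emph{no matched edge is ever deleted by the adversary while it is matched}. Indeed, fix a level $\ell$ with $T_\ell/\Delta\ge1$ and a matched edge $e=(v,w)$ created there. By the offline choice rule, $e$ is deleted only after a constant fraction — say at least $(1-3\eps)|S^*(e)|\ge(1-3\eps)(1-\eps)\gamma^{\ell}$ — of the edges of $S^*(e)$ have been deleted; since $(1-3\eps)(1-\eps)>2\eps$ for all $\eps<1/6$, the edge $e$ is not among the first $2\eps\gamma^{\ell}$ edges of $S^*(e)$ in the adversary's deletion order, i.e.\ $e$ is \emph{good} in the sense of Section~\ref{highlight}. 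By the argument there, which uses only Invariant~\ref{samp:in} (an active sample shrinks by at most $2\eps\gamma^{\ell}$), a good edge cannot get hit. For the complementary levels ($T_\ell/\Delta<1$) the adversary makes no update during an entire level-$\ell$ thread, so again no adversary-induced temporarily free vertex persists. Hence every queue $Q_\ell$ is always empty, and at any instant the only temporarily free vertices are the $O(\log n)$ produced by the algorithm's own threads (Corollary~\ref{boundactive}). Once the maintained matching has size $\Omega(\log n/\eps)$ this loss is an $O(\eps)$-fraction, so the matching is $(1-\eps)$-maximal; for maximum matching size below $\Theta(\log n/\eps)$ we fall back on the trivial maximal-matching maintenance of Section~\ref{app:small}, which now costs only $O(\log n/\eps)$ worst-case time, so in particular the large auxiliary matching $\cM_\delta$ used in the randomized construction is no longer needed and the $O(\log^5 n/\eps^4)$ term drops out. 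This establishes $(1-\eps)$-maximality deterministically, hence the claimed $(2+\eps)$-approximations for maximum matching and minimum vertex cover.

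It remains to bound the update time. Following a single update operation, each of $\temp,\rise,\unmatch$ simulates $\Delta$ computation steps for each of its $\log_\gamma(n-1)+1=O(\log n)$ sub-schedulers, for a total of $O(\Delta\log n)=O(\log^6 n/\eps)$; the instantaneous work ($\ins$, the unmatched case of $\del$, the authentication process, and each oracle call, which costs $O(|E_v|)=O(\gamma^{\ell_v})\le H_{\ell_v}\le T_{\ell_v}$ by Lemma~\ref{evalftime}) is absorbed exactly as in the randomized analysis. The algorithm is deterministic, so the bound $O(\log^6 n/\eps)$ holds in the worst case, completing the proof.
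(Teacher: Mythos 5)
Your proposal is correct and follows essentially the same route as the paper: replace the uniform sampling by the offline oracle's choice from $S^*(e)$ so that Invariant \ref{samp:in} deterministically prevents any matched edge from being hit (hence all queues $Q_\ell$ stay empty and only the $O(\log n)$ algorithm-induced temporarily free vertices remain), drop $\shuffle$ and the $\Delta'/\Delta$ speed differential to save a $\log$ factor, and drop the large $\cM_\delta$ term. Your explicit rechecking of the $\rise$ analysis and conflict-resolution inequalities under a single simulation parameter, and the small-threshold fallback for matchings of size $O(\log n/\eps)$, are details the paper leaves implicit but are consistent with its argument.
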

%

\section{Two Simplifying Assumptions} \label{just}
In this section we demonstrate that the assumptions used in Sections \ref{schedulersAnal} and \ref{sec:almost}  do not lose generality.

\subsection{Long Update Sequences} \label{short:app}
If the update sequence is long, then we break it into subsequences of length at most $t_{max} := \Theta(n^2)$ each,
and run two instances of our dynamic matching algorithm on two different dynamic graphs, $G^{old}_i$ and $G^{new}_i$;
denote the maintained matchings by $\cM^{old}_i$ and $\cM^{new}_i$, respectively.
We maintain the invariant that whenever a new update subsequence starts, the two graphs $G^{old}_i$ and $G^{new}_i$ are identical.
Consider such a time $i_0$ when a new update subsequence starts. At this stage we set $\cM^{old}_i = \cM^{new}_i$, and then restart $G^{new}_i$ and $\cM^{new}_i$ as empty.
The matching that we output for the current update subsequence, i.e., for  the next $t_{max}$ update steps $i = i_0, i_0+1,\ldots,i_0 + t_{max}-1$,
is $\cM^{old}_i$.  
During these $t_{max}$ update steps, the graph $G^{old}_i$ is being changed dynamically only by the adversary, i.e., a single edge change per update step.
The graph $G^{new}_{i}$ is being changed at a faster rate during these $t_{max}$ update steps.
Firstly,  
 we add all edges of $G^{old}_{i}$ to $G^{new}_{i}$, $O(1)$ edges per update step, skipping edges that got deleted by the adversary before reaching the update step
in which they are scheduled  to be added.
In addition, the graph $G^{new}_i$ is being changed dynamically   by the adversary, similarly to  $G^{old}_i$,
both by the new edges (that get inserted by the adversary) and by the ones of $G^{old}_{i}$ that got added to $G^{new}_{i}$ and later get deleted by the adversary.

We run our dynamic matching algorithm on both dynamic graphs $G^{old}_i$ and $G^{new}_i$ to maintain $\cM^{old}_i$ and $\cM^{new}_i$, respectively.
Since we make sure to ``grow'' $G^{new}_i$ at a sufficiently fast rate, we will have $G^{new}_i = G^{old}_i$ by the time $i$ reaches $i_0 + t_{max} - 1$.
Then we can again set $\cM^{old}_i = \cM^{new}_i$, restart $G^{new}_i$ and $\cM^{new}_i$ as empty, handle the next update subsequence of length $t_{max}$ in the same way, and repeat.

When a new update subsequence starts,  the old graph and matching are thrown away, which implies that any matching is being maintained throughout $2t_{max}$ edge updates:
During the first $t_{max}$ edge updates as $\cM^{new}_i$ and during the last $t_{max}$ edge updates as $\cM^{old}_i$.
By the above description, both graphs change dynamically at the same (up to constants) rate as that by the adversary, which enables us to view these graphs as two ordinary dynamic graphs.
Finally, running two instances of our algorithm rather than one
will increase the worst-case update time by another constant factor.
We have henceforth formally justified the assumption that the original update sequence is of length $O(n^2)$.


\ignore{
\subsubsection{Bounding the number of changes to the matching} \label{b11}
Whenever we set $\cM^{old}_i = \cM^{new}_i$, the matching that we output changes from $\cM^{old}_i$ to $\cM^{new}_i$.
Although the respective graphs $G^{old}_i$ and $G^{new}_i$ are identical, the respective matchings may be significantly different,
hence the worst-case number of changes to the output matching per update step may be huge.
While this does not affect the worst-case update time needed for maintaining the output matching, 
keeping the number of changes to this matching small in the worst-case may be important for   practical applications.

Next, we demonstrate how to bound the number of changes to the matching.
Whenever a new update subsequence starts, instead of removing $G^{old}_i$ at once, we empty it gradually, deleting $O(1)$ edges from it
per update step, and also removing from it edges that got deleted by the adversary.
Note that $G^{old}_i$ will become empty before the $t_{max}$ update steps have finished, and once the next update subsequence starts,
the roles of $G^{old}_i$ and $G^{new}_i$ are switched, that is, in the next $t_{max}$ update steps $G^{old}_i$ will grow gradually while $G^{new}_i$ will be emptied gradually, an so on.
We will run our dynamic matching algorithm on both dynamic graphs $G^{old}_i$ and $G^{new}_i$ to maintain $\cM^{old}_i$ and $\cM^{new}_i$, respectively,
similarly to before, but there are two main differences. The first difference is that we no longer have a ``new'' matching and an ``old'' one, and in particular,
we can no longer throw away the ``old'' matching like we did before.  
On the other hand, note that each of the matchings $\cM^{old}_i$ and $\cM^{new}_i$ is being maintained throughout at most $2t_{max}$ edge updates before its respective graph
becomes empty, at which stage we can restart with a new (empty) matching for that graph, so we can still claim that any matching  is being maintained throughout at most $2t_{max}$ edge updates.
The second and more important difference is that now the matching that we output will not be the one for $G_i^{old}$, but rather a
$(1+\eps)$-approximate maximum matching using a modified version of the algorithm of \cite{GP13} for the dynamic graph $H$ obtained as the union of the two maintained matchings $\cM^{old}_i$ and $\cM^{new}_i$. Since the graphs $G^{old}_i$ and $G^{new}_i$ change dynamically at the same (up to constants) rate as that by the adversary,
and as the respective matchings that comprise $H$ are maintained via our algorithm that has a worst-case update time of $O(\max\{\log^7 n /\eps,\log^5 n / \eps^4\})$,
it follows that the number of changes to $H$ per update step does not exceed this bound by more than a constant factor.
While the algorithm of \cite{GP13} achieves a  worst-case update time of $O(1/\eps^2)$ for constant degree graphs, the number of matching changes may be huge in the worst-case.
Nonetheless, in Section \ref{changes} we demonstrate that the algorithm of \cite{GP13} can be strengthened so as to bound the number of matching changes to be asymptotically the same as the update time; in particular, in constant degree graphs the number of changes is bounded in the worst-case by  $O(1/\eps^2)$.
This tweak thus guarantees that the number of changes to the output matching is greater than the worst-case update time of our algorithm by a factor of $O(1/\eps^2)$,
while the approximation guarantee on the matching size grows by a negligible factor of $1+\eps$.

We may henceforth assume that the original update sequence is of length $O(n^2)$, even if one wishes to optimize the number of changes to the matching per step in the worst-case.

}

\subsection{Small matching} \label{app:small}
Consider the naive algorithm for maintaining a maximal matching $\cM$.
Following an edge insertion $(u,v)$, we match $u$ and $v$ if they are both  free.
Following an edge deletion $(u,v)$, if that edge was matched, we scan the neighbors of $u$ and $v$.  
If $u$ has a free neighbor, $u$ will be matched with it, and similarly for $v$.
Note that the number of scanned neighbors of $u$ (and of $v$) should not exceed $2|\cM|+1$, as we may stop at the first free vertex found.
Hence, this   algorithm has a deterministic worst-case update time of $O(|\cM|)$.

Consider the variant of the above algorithm, which limits the neighborhood scan of a vertex for up to $\delta$ arbitrary neighbors; we refer to this scan
as a \emph{partial scan}. We apply this partial scan following an edge deletion $(u,v)$, but in contrast to above, we apply it even if edge $(u,v)$ is unmatched,
in which case the partial scan is applied just for the free endpoints of this edge (if any).
In addition, we pick another arbitrary free vertex $w$ of degree at least $\delta$ (if exists) and apply a partial scan for it too.
(Note that it is trivial to maintain a data structure that consists of the free vertices of degree at least $\delta$.)
Following these partial scans for $u,v$ and (possibly) $w$, we match each of them that is free to a free neighbor, if one is found by the respective partial scan.
Clearly, the worst-case update time of the new algorithm is $O(\delta)$.

Denote by $\cM_\delta$ the matching maintained by the new algorithm, and note that the size $|\cM_\delta|$ of $\cM_\delta$ dynamically changes.
Although we have no control whatsoever on the size of $\cM_\delta$, which could be significantly larger than $\delta$, we argue that $\cM_\delta$ is maximal whenever its size is small enough.
\begin{lemma} \label{maxdelta}
Whenever $|\cM_\delta| < \delta$, the matching $\cM_\delta$  is maximal.
\end{lemma}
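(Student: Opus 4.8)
The plan is to argue by contradiction. Suppose that at some step $t$ of the update sequence $|\cM_\delta| < \delta$ but $\cM_\delta$ is not maximal; then there is an edge $(u,v)$ with both $u$ and $v$ free. (At the outset the graph is empty, so $\cM_\delta$ is trivially maximal; hence such a ``bad'' step $t$ is not the first one.) The goal is to derive $|\cM_\delta| = \Omega(\delta)$ — more precisely, that $|\cM_\delta|$ is at least half the partial-scan quota — contradicting $|\cM_\delta| < \delta$ once the constant in the scan bound is fixed appropriately (reading it as, say, $2\delta$; this is immaterial since $\delta = \Theta(\log^5 n/\eps^4)$).

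First I would pin down the step $s_0 \le t$ at which the pair $\{u,v\}$ \emph{became, and thereafter remained, a free--free edge}: the first step such that at the end of every step in $[s_0,t]$ the edge $(u,v)$ is present and both $u,v$ are free. Step $s_0$ cannot be an edge insertion (if both endpoints were free at an insertion they would be matched immediately, and an insertion never unmatches anyone), nor a step at which neither $u$ nor $v$ changes status; hence $s_0$ is a deletion of a matched edge incident to one of them, say $u$, and its old mate is not $v$ (else $(u,v)$ would no longer exist). At step $s_0$ the algorithm partially scans $u$; now $v$ is a neighbor of $u$ that is free throughout step $s_0$ (only $u$ and its old mate become free at $s_0$, and $v$ is neither), so a \emph{full} scan of $u$ would reach $v$ or an earlier free neighbor and match $u$ — contradicting that $u$ is free afterwards. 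Hence the scan of $u$ examined its entire quota of neighbors, all matched at that moment, so $u$ has at least $2\delta$ matched neighbors then; these are $2\delta$ distinct matched vertices, so $|\cM_\delta| \ge \delta$ at that instant, and hence also at the end of step $s_0$ (once the triggering matched edge is removed, the remaining processing of a deletion step only adds matched edges). The same counting gives $\deg_u \ge 2\delta$ right after $s_0$, and this never drops: any later deletion of an edge incident to the free vertex $u$ triggers a scan of $u$, which is a full scan the moment $\deg_u$ would fall below $2\delta$ and would then match $u$ to the still-present free neighbor $v$ — and $(u,v)$ itself is never deleted in $[s_0,t]$, since re-inserting it would match $u$ and $v$.

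It remains to carry the bound $|\cM_\delta|\ge\delta$ forward from $s_0$ to $t$. On all of $[s_0,t]$ the vertex $u$ is free and of degree $\ge 2\delta$, so at every step the algorithm runs its extra partial scan of \emph{some} free vertex $w$ of degree $\ge \delta$. At a matched-edge-deletion step in $[s_0,t]$: if that extra scan fails, then $w$ has at least $2\delta$ matched neighbors and $|\cM_\delta|\ge\delta$ at that step; if it succeeds, $w$ is re-matched, so the deleted matched edge is compensated and $|\cM_\delta|$ does not decrease over the step. Every other kind of step leaves $|\cM_\delta|$ non-decreasing in any case. A short induction over the steps of $[s_0,t]$ then gives $|\cM_\delta|\ge\delta$ at every such step, in particular at $t$ — the desired contradiction.

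I expect the main obstacle to be exactly this forward-propagation: it is easy to see that the matching was large \emph{when} the bad free--free edge arose, but in a fully dynamic graph the matching may afterwards shrink, so one genuinely needs the per-step ``cleaning'' scan of a high-degree free vertex, combined with the fact that the offending endpoint $u$ is itself a \emph{persistent} high-degree free vertex (its degree cannot decay without $u$ getting matched). The remaining ingredients — the within-step bookkeeping that the matching is non-decreasing after the triggering deletion, and tracking the constant relating the scan bound to the threshold $\delta$ — are routine.
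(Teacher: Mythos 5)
Your proof is correct and follows essentially the same route as the paper's: locate the step at which the free--free pair arises, deduce from the failed partial scan that the newly freed endpoint has a full scan-quota's worth of matched neighbors (hence the matching is large at that moment), and then use the extra per-deletion scan of a high-degree free vertex, together with the persistence of that endpoint's high degree (any degree drop below the quota would trigger a full scan that matches it), to carry the bound forward to the contradiction. Your explicit reading of the scan quota as $2\delta$ also quietly repairs a factor-of-two slip in the paper's claim that a matching of size below $\delta$ forces the partial scan of every free vertex of degree at least $\delta$ to find a free neighbor.
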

\begin{proof}
Suppose for contradiction that at some point $t^*$ in time, $|\cM_\delta| < \delta$ yet there are two free neighbors $v$ and $w$.
Following the previous edge insertion between $v$ and $w$, at least one among $v$ and $w$ was matched, as otherwise the algorithm would match them, and they would remain matched until time $t^*$.
Consequently, there exists a subsequent point in time (until $t^*$) when one of these vertices, w.l.o.g.\ $v$, changes from being matched to free while the other one, $w$, is free;
we consider the last such point in time, denoted $t'$, so that both $v$ and $w$ are free at all times between $t'$ and $t^*$.
Note that $v$ may change from being matched to free only due to a deletion of its matched edge   from the graph, following which a partial scan of $v$ does not find any free vertex.
The only way for this scan not to find any free vertex (and thus miss, in particular, $w$) is if $v$ has at least $\delta$ matched neighbors at that time.
We know that at some subsequent point in time (until $t^*$), the size of the matching drops below $\delta$.
Moreover, our algorithm removes an edge from the matching only if it is deleted from the graph.
On the other hand, every time a matched edge gets deleted from the graph, the algorithm performs a partial scan for an arbitrary free vertex of degree at least $\delta$ (if exists).
We claim that whenever the matching size drops below $\delta$, no free vertex of degree at least $\delta$ may exist.
Indeed, after scanning $\delta$ neighbors of such a vertex via a partial scan, we must reach a free vertex, thus the algorithm would match these two free vertices,
and the matching size would return immediately to $\delta$.
Since $v$ remains free until time $t^*$, this claim implies that $v$'s degree must drop below $\delta$.  
However, when the degree of $v$ reduces from $\delta$ to $\delta-1$, which may only happen due to a deletion of an edge adjacent to $v$,
a partial scan of $v$ is performed. A partial scan in this case is actually a full scan, which must find $w$, and then the algorithm would match $v$ with $w$.
Hence $v$ and $w$ become matched between times $t'$ and $t^*$, a contradiction. \QED
\end{proof}

Recall that our randomized algorithm provides an almost-maximal matching with a worst-case update time of $O((\log^7 n) /\eps)$,
whenever the size of the maximum matching $\cM^*$ is at least $\delta := \Theta(\log^5 n / \eps^4)$.
We will maintain two matchings with a worst-case update time of $O(\max\{\log^7 n /\eps,\log^5 n / \eps^4\})$, one via the algorithm above, denoted $\cM_{\delta}$,
and another via the randomized algorithm, denoted $\cM_{rand}$.
Whenever $|\cM_{\delta}| < \delta$, Lemma \ref{maxdelta} guarantees that $\cM_{\delta}$ is maximal.
In the complementary regime $|\cM_{\delta}| \ge \delta$, the size of the maximum matching $\cM^*$ is at least $\delta$,
and then $\cM_{rand}$ is guaranteed to be almost-maximal w.h.p.
Hence, at any point in time, we return $\cM_{\delta}$ as the output matching if $|\cM_{\delta}| < \delta$ and $\cM_{rand}$ otherwise.


\ignore{
\subsubsection{Bounding the number of changes to the matching} \label{b21}
Note that the number of replacements to the output matching may be huge in the worst-case.  
To remedy this, similarly to Section \ref{b11},
we maintain a $(1+\eps)$-approximate maximum matching for the dynamic graph $\tilde G$ obtained as the union of the two matchings $\cM_\delta$ and $\cM_{rand}$,
using the modified version of the algorithm of \cite{GP13} presented in Section \ref{changes}.
Since the maximum degree in $\tilde G$ is 2,
the worst-case number of replacements to the matching maintained by this algorithm is bounded by  $O(1/\eps^2)$.
However, this bound depends on the rate at which $\tilde G$ changes, which may be much higher than the constant rate at which the entire graph changes.  
Consider the two matchings $\cM_\delta$ and $\cM_{rand}$ that comprise $\tilde G$.
It is immediate that the worst-case number of replacements to $\cM_\delta$ is $O(1)$.
Also, we demonstrated in Section \ref{b11} that the algorithm for maintaining $\cM_{rand}$ can be adjusted so as to
bound the worst-case number of replacements by
 $O(\max\{\log^7 n /\eps^3,\log^5 n / \eps^4\})$.
 Consequently,  $\tilde G$   may undergo at most $O(\max\{\log^7 n /\eps^3,\log^5 n / \eps^4\})$ topological changes following a single adversarial edge update,
which means that the worst-case number of replacements needed by the modified algorithm of \cite{GP13} for maintaining a $(1+\eps)$-approximate matching for $\tilde G$ is bounded by $O((\max\{\log^7 n /\eps^3,\log^5 n / \eps^4\})  \cdot \eps^{-2}) = O(\max\{\log^7 n /\eps^5,\log^5 n / \eps^6\})$.
As a result of this tweak, the approximation guarantee on the matching size grows by a negligible factor of $1+\eps$.
\ignore{
Specifically, we run two instances of the dynamic matching algorithm on two different dynamic graphs, $G_i^{old}$ and $G_i^{new}$,
and every $t_{max} = \Theta(n^2)$ update steps, we swap between these two graphs by setting $G_i^{old} = G_i^{new}$ and restarting $G_i^{new}$ as a new graph.
The matching that we use for the current update sequence is the one maintained for $G_i^{old}$, and the problem is that $G_i^{old}$ may change significantly
when setting $G_i^{old} = G_i^{new}$, and the underlying matching may change significantly with it.
To overcome this hurdle, we use a third graph $G_i^{temp}$, which is initialized as $G_i^{old}$, and at each update step we delete a constant number of edges from it.
Note that $G_i^{temp}$ will become empty before the $t_{max}$ update steps have finished.
We will run our dynamic matching algorithm on $G_i^{temp}$ too, i.e., now we will run three instances of the dynamic matching algorithm rather than two.
Moreover, the matching that we use for the current update sequence will not be the one for $G_i^{old}$ as before, but rather a
$(1+\eps)$-approximation to the maximum matching for the dynamic graph $H$ obtained as the union of the three matchings that we maintain.
Since the number of changes to $H$ per update step is constant, this tweak guarantees that the number of changes to the matching that we use for the current update sequence is dominated by the worst-case update time of our algorithm, namely, $\max\{O((\log n /\eps)^2),O((\log^5 n) / \eps)\}$.
}
}


\section{Discussion and Open Problems} \label{discuss}
This paper shows that an almost-maximal matching
can be maintained with a \emph{polylog} worst-case update time, namely  $O(\max\{\log^7 n /\eps,\log^5 n / \eps^4\})$.
Our approach has an inherent barrier of $\Omega(\log n \cdot \log |\cM|)$ on the update time, where $\cM$ is the maintained matching, which is $\Omega(\log^2 n)$ in general.
We managed to shave some logs from the update time and get   close to the $\log^2 n$ barrier.
This improvement, however, requires intricate and tedious adaptations, and lies outside the scope of the current paper; this paper's goal is to present the first algorithm
with a polylog update time for the problem. 

Improving the update time below the barrier $\log^2 n$, towards constant, is a challenging goal. Our work suggests a natural approach: achieving this in the offline model as a first step.
In general, we believe that studying additional classic dynamic graph problems in the offline model is a promising avenue of research:
Leveraging the ``future information'' guaranteed by the offline model may 
yield a deeper understanding of stubborn problems,
which may naturally lead to progress 
in the standard models.

\paragraph{Acknowledgements.~}
The second author is grateful to Danupon Nanongkai and Uri Zwick for their suggestion to study dynamic matchings in the offline model.
We also thank the anonymous reviewers of an earlier version of this paper for their insightful comments, one of which led us to write App.\ \ref{3+eps}.  


\clearpage
\pagenumbering{roman}
\appendix
\centerline{\LARGE\bf Appendix}




\section{A Comparison to Results Done Independently} \label{3+eps}
The SODA'17 paper of \cite{BHN17} maintains a $(2+\epsilon)$-approximate vertex cover and also a  $(2+\epsilon)$-approximate fractional matching
in $O(\log^3 n)$ worst-case update time (ignoring $\eps$-dependencies). The algorithm of \cite{BHN17} is deterministic, and both the algorithm and its analysis are intricate.
In a recent follow-up, Arar et al.\ built on the result of \cite{BHN17} to get a randomized algorithm for maintaining $(2+\eps)$-approximate (integral) matching,
within roughly the same update time as \cite{BHN17}, using an elegant randomized reduction from fully-dynamic integral matching algorithms to fully-dynamic “approximately-maximal” fractional matching algorithms.

Our result is independent of both \cite{BHN17} and \cite{ACCSW17}. Chronologically, our paper started to circulate in Nov.\ 2016,
i.e., after  \cite{BHN17} and before \cite{ACCSW17}.

Our approach has some advantages over the approach of \cite{BHN17,ACCSW17}.
First, if the entire update sequence is known in advance and is stored in some efficient data structure (i.e., the ``offline model'' discussed in Section 1.1), we obtain a \emph{deterministic} algorithm for maintaining an almost-maximal matching (hence a $(2+\eps)$-approximate matching)
with polylog worst-case time bounds, while the approach of \cite{BHN17,ACCSW17} cannot exploit knowledge of the future to achieve a deterministic algorithm. (For the relevance of the offline model, see p.\ 3 in the full version.)
Second, while the works \cite{BHN17,ACCSW17} are tailored to the problems of (approximate) matching and vertex cover
(relying on LP duality, kernels, etc.), our technique seems to be of broader applicability.
In particular, the only property of matchings that our technique really relies on is the vertex-disjointness of edges,
hence we believe it can be naturally used for obtaining polylogarithmic worst-case time bounds for additional problems, such as vertex and edge coloring.

\section{A Sufficient Condition for Winning the Balls and Bins Game} \label{ballsbins}
This section provides a simple proof for the assertion that Player I   wins the game if $b < \frac{k}{(\ln N+1)}$.

Suppose for contradiction otherwise, let $t_1$ be the time when some bin $B_1$ contains $k$ balls, and let $t_i$ be the time when the $i$th (from last) bin  got removed
by Player I, denoted by $B_i$.
(Note that bin $B_{1}$ is removed after $B_{2}$, which is removed after $B_3$, etc.,  in other words $t_1 > t_2 > t_3> \ldots$.)

\begin{claim} \label{basicgame}
For all $i \ge 1$, at time $t_i$, we have $|B_i| \ge k - \sum_{\ell = 1}^{i-1} \frac{b}{\ell}$
and $\sum_{\ell=1}^{i} |B_\ell| \ge i \cdot (k - \sum_{\ell = 1}^{i-1} \frac{b}{\ell})$.
\end{claim}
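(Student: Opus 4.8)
\textbf{Proof plan for Claim \ref{basicgame}.}
The plan is to prove the two inequalities simultaneously by induction on $i$, exactly as in the symmetric statement hinted at in the boxed variants of Section \ref{sec32} and Section \ref{sec33}. The base case $i=1$ is immediate: $|B_1| = k$ at time $t_1$ by the choice of $t_1$, and the empty sum $\sum_{\ell=1}^{0}\frac{b}{\ell}$ is $0$, so both claimed bounds read $|B_1|\ge k$, which holds with equality. So the real work is the inductive step.

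For the inductive step, assume the statement holds for $i-1$ and prove it for $i$, where $i\ge 2$. First I would establish the bound on $|B_i|$. The bin $B_i$ is the $i$-th bin from the last to be removed, so it is removed at time $t_i < t_{i-1} < \cdots < t_1$. At each of the rounds strictly between $t_i$ and $t_1$ (there are $i-1$ such removal-rounds, at times $t_{i-1},\ldots,t_1$), Player I removed the then-largest bin, so in particular at time $t_j$ (for $j<i$) the removed bin $B_j$ had size at least $|B_i|$ \emph{as it stood at time $t_j$}. Combining this with the fact that Player II adds at most $b$ balls per round, a telescoping/averaging argument (the standard one from App.\ \ref{ballsbins}, run in reverse) gives that the size of $B_i$ could have grown by at most $\frac{b}{i-1}+\frac{b}{i-2}+\cdots+\frac{b}{1}$ over the whole interval — intuitively, at time $t_j$ there were $j$ ``surviving'' bins each of size at least $|B_i|(t_j)$, so the $b$ new balls are spread over at least $j$ bins on average, limiting the per-round increase of $|B_i|$ to $b/j$. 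Since $B_i$ contained $0$ balls initially (all bins start empty), and it ends at time $t_i$, we run this forward: $|B_i|$ at time $t_i$ is at least $k$ minus the total possible increase, i.e.\ $|B_i|\ge k - \sum_{\ell=1}^{i-1}\frac{b}{\ell}$. The sum-of-sizes bound then follows from the $|B_\ell|$ bounds: $\sum_{\ell=1}^{i}|B_\ell|\ge i\cdot\bigl(k-\sum_{\ell=1}^{i-1}\frac{b}{\ell}\bigr)$, using that each of $B_1,\ldots,B_i$ has size at least $k-\sum_{\ell=1}^{i-1}\frac{b}{\ell}$ at time $t_i$ (their sizes at their own removal times were even larger, and one must be slightly careful that sizes do not decrease — they don't, since Player II only adds balls).

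The main obstacle is making the averaging argument rigorous: one must track the \emph{sum of sizes of the surviving bins} as a potential and argue that each round it increases by at most $b$ while one bin of above-average size is removed, so the average can climb only slowly. This is precisely the content of the induction hypothesis on $\sum_{\ell=1}^{i}|B_\ell|$, which is why the two inequalities must be carried together. Once Claim \ref{basicgame} is in hand, the contradiction is routine: taking $i=N$ forces $B_N$ (the first-ever-removed bin) to contain at least $k-\sum_{\ell=1}^{N-1}\frac{b}{\ell}\ge k - b(\ln N + 1)$ balls at the time of its removal, which by the choice of $b<\frac{k}{\ln N+1}$ is strictly positive, but since $B_N$ was removed first, at an arbitrarily early time, and all bins start empty, this is impossible unless the game already ended with Player I winning (no bin left) — either way Player I wins, contradicting the assumption that Player II won, i.e.\ that some bin reached $k$.
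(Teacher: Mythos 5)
Your overall strategy is the paper's: induct on $i$ carrying both inequalities, using the sum $\sum_{\ell=1}^{i}|B_\ell|$ as a potential that Player II can raise by at most $b$ per round while Player I always removes an above-average bin. The base case is fine. But the inductive step as you have written it contains two concrete errors. First, your derivation of the individual bound on $|B_i|$ is not a valid argument: you reason about ``$|B_i|$ as it stood at time $t_j$'' for $j<i$, but since $t_j>t_i$ and $B_i$ is removed at $t_i$, the bin $B_i$ no longer exists at those times; and the conclusion ``since $B_i$ contained $0$ balls initially, $|B_i|$ at time $t_i$ is at least $k$ minus the total possible increase'' is a non sequitur --- a bin that starts empty and can grow by at most $X$ satisfies $|B_i|\le X$, not $|B_i|\ge k-X$. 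The quantity that is controlled by the telescoping is not the growth of $B_i$ itself but the decrease (going backward in time from $t_1$) of the \emph{average} size of the still-surviving bins $B_1,\ldots,B_i$, and the lower bound on $|B_i|(t_i)$ comes from $B_i$ being the \emph{largest} bin at the moment of its removal, hence at least that average.

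Second, your claimed derivation of the sum bound ``from the $|B_\ell|$ bounds, using that each of $B_1,\ldots,B_i$ has size at least $k-\sum_{\ell=1}^{i-1}\frac{b}{\ell}$ at time $t_i$'' fails: the individual guarantee for $B_j$ with $j<i$ holds at its own removal time $t_j>t_i$, and since balls are only \emph{added} as time moves forward, $|B_j|$ at the earlier time $t_i$ can be strictly smaller than $|B_j|(t_j)$ (e.g.\ with $k=10$, $b=1$ one can have $|B_1|(t_3)=8<k-b-\tfrac{b}{2}$). The logical order must be reversed, as in the paper's proof: from the induction hypothesis $\sum_{\ell=1}^{i}|B_\ell|(t_i)\ge i\,(k-\sum_{\ell=1}^{i-1}\frac{b}{\ell})$, stepping back one round to $t_{i+1}$ costs at most $b$, giving $\sum_{\ell=1}^{i}|B_\ell|(t_{i+1})\ge L$; pigeonhole gives a bin of size $\ge L/i$, hence $|B_{i+1}|(t_{i+1})\ge L/i=k-\sum_{\ell=1}^{i}\frac{b}{\ell}$, and then $\sum_{\ell=1}^{i+1}|B_\ell|(t_{i+1})\ge L+L/i=(i+1)L/i$. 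So the sum bound produces the individual bound, not the other way around. (A minor further slip: in the concluding contradiction the first-removed bin is $B_r$ for some $r\le N$, not necessarily $B_N$; one applies the claim at $i=r$ and uses $r\le N$ to bound the harmonic sum.)
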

\begin{proof}
The proof is by induction on $i$, with a trivial basis $i=1$.
By the induction hypothesis, at time $t_i$,  we have $\sum_{\ell=1}^i |B_\ell| \ge i \cdot (k - \sum_{\ell = 1}^{i-1} \frac{b}{\ell})$.
Between times $t_{i+1}$ and $t_i$, at most $b$ balls got added to bins $B_1,\ldots,B_i$ by Player II.
Thus at time $t_{i+1}$,   $\sum_{\ell=1}^i |B_\ell| \ge L:= i \cdot (k - \sum_{\ell = 1}^{i-1} \frac{b}{\ell}) - b$.
This means that at least one bin among $B_1,\ldots,B_i$, denoted $B_{max}$, has size at least $L/i = k -  \sum_{\ell = 1}^i \frac{b}{\ell}$ at time $t_{i+1}$.
Since $B_{i+1}$ got removed at time $t_{i+1}$, at that time we have $|B_{i+1}| \ge |B_{max}| \ge L/i$.
It follows that at time $t_{i+1}$,
$$\sum_{\ell=1}^{i+1} |B_\ell| ~\ge~ L + (L/i) ~=~ (i+1) \cdot (L/i) ~=~ (i+1) \cdot \left(k - \sum_{\ell = 1}^i \frac{b}{\ell}\right),$$
which completes the induction step.
\QED
\end{proof}
Consider the first bin that got removed, denoted $B_r$, where $r \le N$.
This bin removal occurs at time $t_r =0$, i.e., when the entire game starts, and then we have $|B_r| = 0$.
On the other hand, applying the inductive claim for $i = r$, we obtain the following contradiction:
\begin{equation} \label{basicbound}
|B_r| ~\ge~ k - \sum_{\ell = 1}^{r-1} \frac{b}{\ell} ~\ge~ k - b(\ln (r-1) +1) ~>~ k - b(\ln N+1) ~>~ 0.\end{equation}

\end {document}